\newtheorem{proposition}{Proposition}
\theoremstyle{definition}
\begin{document}

\title{Scalability of quantum error mitigation techniques: from utility to advantage
}

\author{Sergey N. Filippov\,\orcidlink{0000-0001-6414-2137}}

\affiliation{Algorithmiq Ltd, Kanavakatu 3 C, FI-00160 Helsinki, Finland}

\author{Sabrina Maniscalco\,\orcidlink{0000-0001-8559-0828}}

\affiliation{Algorithmiq Ltd, Kanavakatu 3 C, FI-00160 Helsinki, Finland}

\affiliation{QTF Centre of Excellence, Department of Physics, Faculty of Science, University of Helsinki, FI-00014 Helsinki, Finland}

\affiliation{InstituteQ - the Finnish Quantum Institute, University of Helsinki, FI-00014 Helsinki, Finland}

\affiliation{QTF Centre of Excellence, Department of Applied Physics, Aalto University, FI-00076 Aalto, Finland}

\author{Guillermo Garc\'{\i}a-P\'{e}rez\,\orcidlink{0000-0002-9006-060X}}

\affiliation{Algorithmiq Ltd, Kanavakatu 3 C, FI-00160 Helsinki, Finland}

\begin{abstract}
Error mitigation has elevated quantum computing to the scale of hundreds of qubits and tens of layers; however, yet larger scales (deeper circuits) are needed to fully exploit the potential of quantum computing to solve practical problems otherwise intractable. Here we demonstrate three key results that pave the way for the leap from quantum utility to quantum advantage: (1) we present a thorough derivation of random and systematic errors associated to the most advanced error mitigation strategies, including probabilistic error cancellation (PEC), zero noise extrapolation (ZNE) with probabilistic error amplification, and tensor-network error mitigation (TEM); (2) we prove that TEM (i) has the lowest sampling overhead among all three techniques under realistic noise, (ii) is optimal, in the sense that it saturates the universal lower cost bound for error mitigation, and (iii) is therefore the most promising approach to quantum advantage; (3) we propose a concrete notion of practical quantum advantage in terms of the universality of algorithms, stemming from the commercial need for a problem-independent quantum simulation device. We also establish a connection between error mitigation, relying on additional measurements, and error correction, relying on additional qubits, by demonstrating that TEM with a sufficient bond dimension works similarly to an error correcting code of distance 3. We foresee that the interplay and trade-off between the two resources will be the key to a smooth transition between error mitigation and error correction, and hence between near-term and fault-tolerant quantum computers. Meanwhile, we argue that quantum computing with optimal error mitigation, relying on modest classical computer power for tensor network contraction, has the potential to reach larger scales in accurate simulation than classical methods alone.
\end{abstract}

\maketitle

\tableofcontents

\section{\label{section-introduction} Introduction}

Quantum computers nowadays operate with hundreds of qubits and have entered the era of utility~\cite{kim_evidence_2023}, where hardware-enabled estimation is beyond reach for exact brute force state-vector simulations and competes with the approximate classical simulations~\cite{anand2023classical}. The utility era became possible thanks to the hardware improvements and development of noise mitigation techniques that account for errors occurring during the computation. However, noise mitigation is attained at the price of a prolonged wall time for running experiment alterations many times. The figure of merit for the associated computational cost is the sampling overhead $\Gamma$, which is a multiplicative factor in the number of circuit executions needed to get the same estimation accuracy as in the noiseless scenario. The sampling overhead is known to scale exponentially in both the circuit volume and the error rate~\cite{quek-2022,takagi_fundamental_2022,takagi_universal_2023,tsubouchi_universal_2023,filippov_scalable_2023}. This is why some problem-specific state-of-the-art classical simulations still outperform quantum computations~\cite{tindall_efficient_2023,begusic_2023,begusic_fast_2023} and practical quantum advantage in digital quantum simulation (aiming at a superior calculation of some useful system quantities~\cite{hoefler_disentangling_2023,fauseweh_quantum_2024}) is yet to be demonstrated. 

Currently intractable problems of high interest in quantum materials, physics, and chemistry---quantum advantage candidates---are transpiled into circuits with about one hundred qubits and hundreds or even thousands of layers, far beyond the depths up to about 60 probed experimentally so far~\cite{kim_evidence_2023,van_den_berg_probabilistic_2023,obrien_purification-based_2023,bluvstein_logical_2024}. Despite the exponential scaling of the sampling overhead in the circuit volume, error mitigation in circuits of such an enormous scale is not impossible within an affordable wall time---provided the error rate is reasonable. Steady improvements in hardware technology indicate that, at some point, the error rate could be tamed to such an extent that the total number of errors in the circuit $\lesssim 16$ and the associated sampling overhead $\Gamma \sim 10^6$ would enable estimating an observable with $5\%$ error by running $M \sim 4 \times 10^8$ circuit executions, requiring $2$ to $3$ days of quantum computation (provided a single circuit execution with an allowance for the circuit delay takes $\sim 0.5$~ms). 

The key observation in recent studies~\cite{takagi_fundamental_2022,takagi_universal_2023,tsubouchi_universal_2023,filippov_scalable_2023} is that different error mitigation techniques have different factors in their sampling overhead exponents, dramatically impacting the overall computation cost. The goal of this paper is to analyze the sampling overhead, the overall computation time, and the final estimation error in noise-mitigated observables under \textit{realistic} noise for three techniques in use: probabilistic error cancellation (PEC)~\cite{temme-2017,van_den_berg_probabilistic_2023,piveteau-2022}, zero noise extrapolation (ZNE) with probabilistic error amplification~\cite{kim_evidence_2023}, and tensor-network error mitigation (TEM)~\cite{filippov_scalable_2023}. In contrast to noise-agnostic techniques~\cite{rubin-2018,mcardle-2019,koczor-2021,suchsland-2021,czarnik-2021,strikis-2021}, these ones rely on the noise characterization for every essential circuit layer, so that the error mitigation effectively boils down to the inversion of the known noise, suggesting a generally better accuracy~\cite{kim_evidence_2023}. The associated noise-leaning cost does not depend on the error mitigation technique and is shown to be near-constant in the number of qubits for the sparse Pauli-Lindblad (SPL) noise~\cite{van_den_berg_probabilistic_2023,berg_techniques_2023,flammia_2020}. Experiments~\cite{kim_evidence_2023,van_den_berg_probabilistic_2023} confirm that any local noise is converted into the SPL form via Pauli twirling, thus equipping us with a universal and compact noise description with a number of parameters scaling linearly in the number of qubits and bond dimension 4 in the tensor network representation~\cite{filippov_scalable_2023}.

Our study reveals previously unknown random and systematic errors in noise-mitigated observables including ZNE's random error and its associated sampling overhead, which have long remained hidden under the hood of the extrapolation method, as well as the effects of noise stability and noise model accuracy on the systematic bias in the mitigated observable estimation. We also evaluate systematic errors emerging from the limited applicability of exponential extrapolation to a non-exponentially decaying signal in ZNE and the bond dimension compression in TEM. We prove that TEM saturates the recently derived universal lower bound on the sampling overhead~\cite{tsubouchi_universal_2023} for stochastic noise, like SPL, implying the attainability of optimal quantum error mitigation through modest classical computation power, needed for the tensor network contraction. Moreover, we show that TEM with some threshold bond dimension effectively works as a quantum error correcting code of distance $3$ (cf. the recent quantum error correction experiment~\cite{bluvstein_logical_2024}). In this sense, TEM is exemplary in showcasing a smooth transition from near-term to fault-tolerant quantum computation. Since early error-corrected quantum computations cannot afford recursive encoding schemes, a cross-fertilized solution to larger scales could be to mitigate the residual errors of intermediate-scale error correction.

The extensive analysis of errors of different origin enables us to identify candidate circuit sizes for demonstrating quantum advantage with foreseeable quantum computational resources and their quality.  We show that the overall estimation error in noise-mitigated quantum computation can be reasonably small ($\delta \lesssim 10\%$) in dense circuits of size $\sim 100 \times 100$---a challenging scale for uncustomized classical simulations, e.g., with the help of standard tensor networks. Focusing on the \textit{universality} of algorithms (quantum, classical, or hybrid ones), we discuss in Sec.~\ref{section-prospects-advantage} what could be regarded as a useful quantum simulation framework for industrial applications. Computation universality means that different problems from a wide class can be solved in the same algorithmic way. The user just needs to specify parameters of the problem and, once executed, to read the output. Such algorithmic solutions would be of great use in physics, chemistry, and materials science. As an instance of such generic class of problems, we discuss the discrete-time Floquet dynamics that would be classically hard but attainable with near-term hardware. Universal solutions for the discrete-time Floquet dynamics include tensor network simulations as a purely classical method and error-mitigated quantum simulations as a hybrid method requiring quantum power and a modest classical support. A specific problem from the class, the kicked Heisenberg model with an inhomogeneous external field, is instrumental in proving that the generic simulation scheme based on quantum computation and error mitigation can outperform the generic classical methods.

\section{\label{section-results} Main results}

The practical implementation of any noise mitigation technique in large-scale quantum computations is extremely challenging due to the limited resources available. The allocated wall time limits the number $M$ of circuit executions, which in turn translates into the random error $\delta_{\rm random} \propto \frac{1}{\sqrt{M}}$. The relative random error is technique-specific and proportional to the square root of the sampling overhead, $\sqrt{\Gamma}$. In Secs.~\ref{section-pec-random-error}, \ref{section-zne-pea-random-error}, and \ref{section-tem-random-error}, we derive the optimal sampling overhead $\Gamma^{\ast}$ for PEC, ZNE, and TEM, respectively. The results are listed in Table~\ref{table-summary} for convenience (with the notation $N$ for the number of qubits, $L$ for the circuit depth, $\varepsilon$ for the average error rate per qubit per layer, see Sec.~\ref{section-noise-model} for its definition). The exponential dependence of the optimal sampling overhead $\Gamma^{\ast}$ on the total number of errors $\varepsilon N L$ is shown to be different for PEC as compared to ZNE and TEM, becoming a limiting factor in large-circuit quantum computation (cf. $e^{2\varepsilon N L} = 7.9 \times 10^{13}$ and $e^{\varepsilon N L} = 8.9 \times 10^{6}$ for the total number of errors $\varepsilon NL = 16$, see Fig.~\ref{figure-overhead}). We ascribe this to the fact that PEC is observable-agnostic and effectively mitigates errors in the Schr\"{o}dinger picture, whereas ZNE and TEM are tightly bound to the observable measured and can be interpreted as error mitigation in the Heisenberg picture. ZNE and TEM overheads have the same exponents but differ in the prefactor, which scales polynomially in $\varepsilon N L$ for ZNE and is constant for TEM. To attain minimal random errors in ZNE-mitigated observable estimations, the total budget of circuit executions $M$ is to be properly distributed among several experiments with properly chosen noise gain factors $\{G_i^{\ast}\}_i$, which we also list in Table~\ref{table-summary}. In Propositions~\ref{proposition-tem-overhead} and \ref{proposition-tem-overhead-general}, we prove that TEM's sampling overhead is optimal because it asymptotically saturates the universal lower cost bounds~\cite{tsubouchi_universal_2023} for an unbiased estimation of Pauli observables in large-scale quantum circuits both with weak SPL noise and with general weak Pauli noise. 

\begin{table}
    \centering
    \begin{NiceTabular}{|l|c|c|c|}
\midrule
         & PEC & ZNE & TEM \\
\midrule
        Optimal sampling overhead $\Gamma^{\ast}$ & $\exp(2\varepsilon N L)$ & $(1+1.795 \varepsilon N L)^2 \exp(\varepsilon N L)$ & $\exp(\varepsilon N L)$ \\
\midrule
       Optimal noise gain factors & & $G_1^{\ast} = 1$, $G_2^{\ast} = 1 + \dfrac{2.557}{\varepsilon N L}$ & \\
\midrule
       Random error with $M$ shots & $\exp(\varepsilon N L) / \sqrt{M}$ & $(1+1.795 \varepsilon N L) \exp(\varepsilon N L/2) / \sqrt{M}$ & $\exp(\varepsilon N L / 2) / \sqrt{M}$ \\
\midrule
        Systematic error 1: Noise instability & \multirow{2}{*}{$\dfrac{1}{2} \varepsilon N \sqrt{L} \Theta$} & \multirow{2}{*}{$\dfrac{1}{2} \varepsilon N \sqrt{L} \Theta$} & \multirow{2}{*}{$\dfrac{1}{2} \varepsilon N \sqrt{L} \Theta$} \\
         and imprecise noise learning & & & \\
\midrule
         Systematic error 2: Extrapolation & & $\left( 1 + \dfrac{2.557}{\varepsilon N L} \right) \dfrac{NL\varepsilon^2}{36}$ & \\
\midrule
         Systematic error 3: Tensor network & & & \multirow{2}{*}{$\dfrac{NL\varepsilon^2}{30}$} \\
         compression, bond dimension $\chi = \frac{L^2}{2}$ & & &  \\
\midrule
         Systematic error $3'$: Tensor network & & & \multirow{2}{*}{$\sqrt{\frac{ \varepsilon^2 L^2 }{ 72 \ln(4\sqrt{2N}) } \left[ N \big( \frac{1}{32N} \big)^{2\chi / L^2} - \frac{1}{32} \right]}$} \\
        compression, extra term if $\chi < \frac{L^2}{2}$ & & & \\
\midrule
    \end{NiceTabular}
    \caption{Quantitative performance indicators for noise-aware mitigation techniques in typical dense $N \times L$ circuits under realistic (sparse Pauli-Lindblad) noise with the density of errors $\varepsilon$ and standard deviation $\Theta$ in relative temporal fluctuations of the density of errors. Should the intersection ${\cal A}$ of causal cones for the observable and the nontrivial component of the initial state be smaller than the whole circuit area ($|{\cal A}| < NL$), $\varepsilon N L$ is to be replaced by the average number of errors in the causality-covered part of the circuit, $\varepsilon |{\cal A}|$.}
    \label{table-summary}

\end{table}

Other limitations on performance of noise-aware error mitigation techniques originate from a mismatch between the actual noise and its learned model as well as from the method-specific assumptions that can be fulfilled only approximately. All these limitations translate into a collection of the associated systematic errors $\{\delta_{\rm systematic}^{(i)}\}_i$ in estimating the noise-mitigated observable, see Table~\ref{table-summary}. PEC, ZNE, and TEM are all subject to the same systematic error if the actual noise deviates from the SPL form or if the learned SPL parameters differ from the actual ones due to unavoidable temporal fluctuations in the qubit environment, e.g., because of the two-level fluctuators in superconducting quantum computers. Imprecision in single qubit gates and non-Markovian effects~\cite{white_2023} also contribute to the overall noise instability. In Sec.~\ref{section-pec-systematic-error}, we quantify the noise instability in terms of the standard deviation $\Theta$ in relative temporal fluctuations of the density of errors, which can be inferred from the repeated noise characterization experiments. This quantity enables us to evaluate how far the noise can drift from the initially learned parameters. For example, by avoiding the most unstable qubits in a superconducting quantum processor, the noise fluctuations can be reduced down to $\Theta = 1.8 \%$~\cite[Sec. III B in Supplementary Information]{kim_evidence_2023}, which nevertheless leads to a systematic error $\delta_{\rm systematic}^{(1)} = 1.4\%$ ($4.5\%$) in a dense $100 \times 100$ circuit with the density of errors $\varepsilon = 0.16\%$ ($0.5\%$).

\begin{figure}
\includegraphics[width=0.6\textwidth]{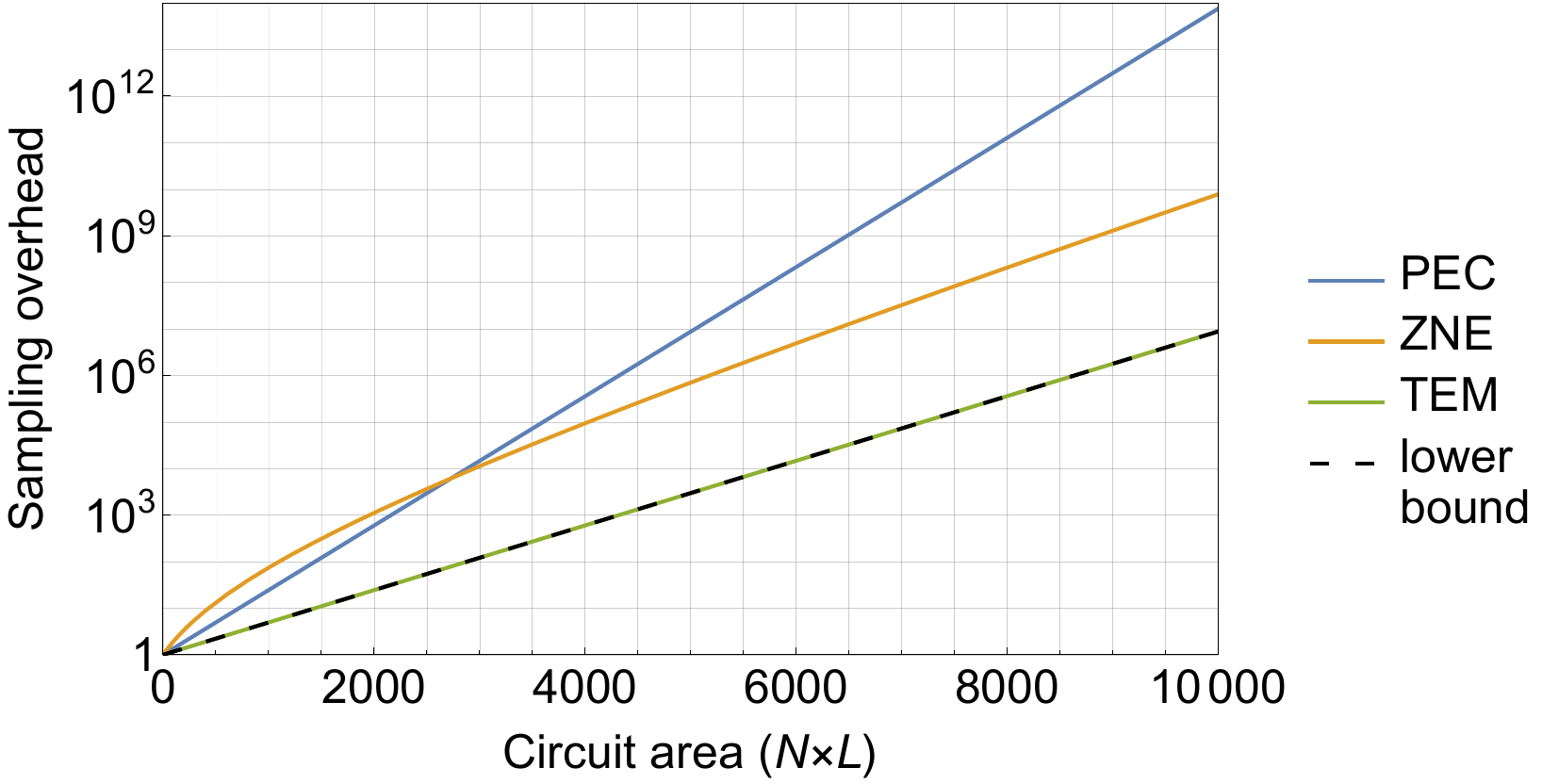}
\caption{\label{figure-overhead} Optimal sampling overhead in estimating high-weight observables in dense $N \times L$ quantum circuits with the foreseeable average error rate per qubit per layer $\varepsilon = 0.16\%$~\cite{mckay_benchmarking_2023}. The lower bound on the sampling overhead is established in Ref.~\cite{tsubouchi_universal_2023}.}
\end{figure}

In large-circuit scenarios, the noisy signal is rather weak, so ZNE resorts to exponential, rather than linear, extrapolation~\cite{kim_evidence_2023}. The exponential extrapolation \textit{per se} induces a systematic error if the signal does not decay exactly exponentially in the noise strength~\cite{cai_multi-exponential_2021}, which is typically the case for non-Clifford circuits, since non-Clifford gates multiply the number of Pauli strings in the Heisenberg-picture evolution of the observable and every Pauli string decays with its own rate. In Sec.~\ref{section-zne-systematic-error}, we evaluate the associated error $\delta_{\rm systematic}^{(2)}$ and specify it for the optimal ZNE performance. In large-scale circuits of high complexity, where the signal decays almost exponentially in the noise strength, $\delta_{\rm systematic}^{(2)}$ grows linearly in the circuit area and quadratically in the error rate. In a dense $100 \times 100$ circuit, $\delta_{\rm systematic}^{(2)} = 0.08\%$ ($0.73\%$) if $\varepsilon = 0.16\%$ ($0.5\%$).

\begin{figure}
\includegraphics[width=0.49\textwidth]{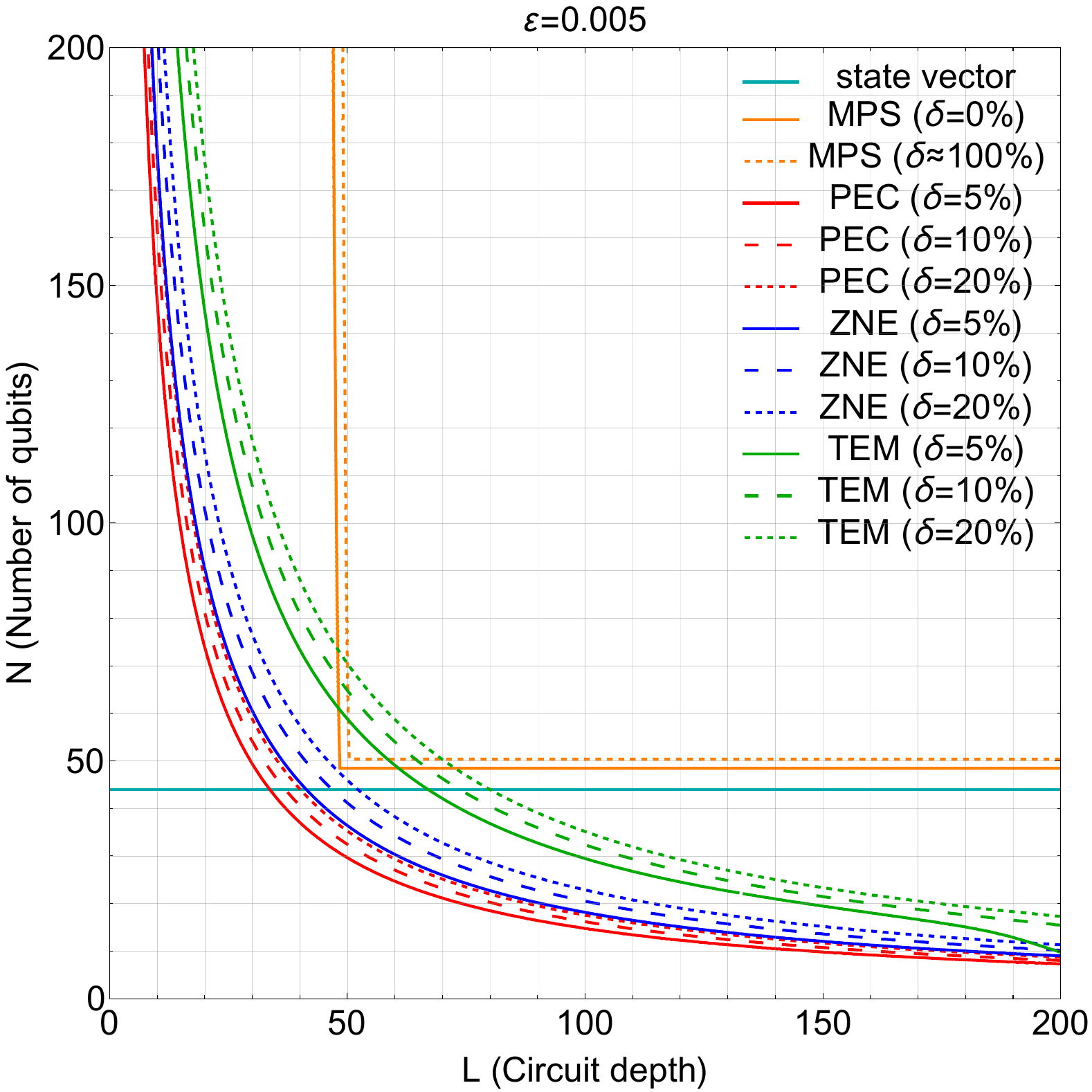} \includegraphics[width=0.49\textwidth]{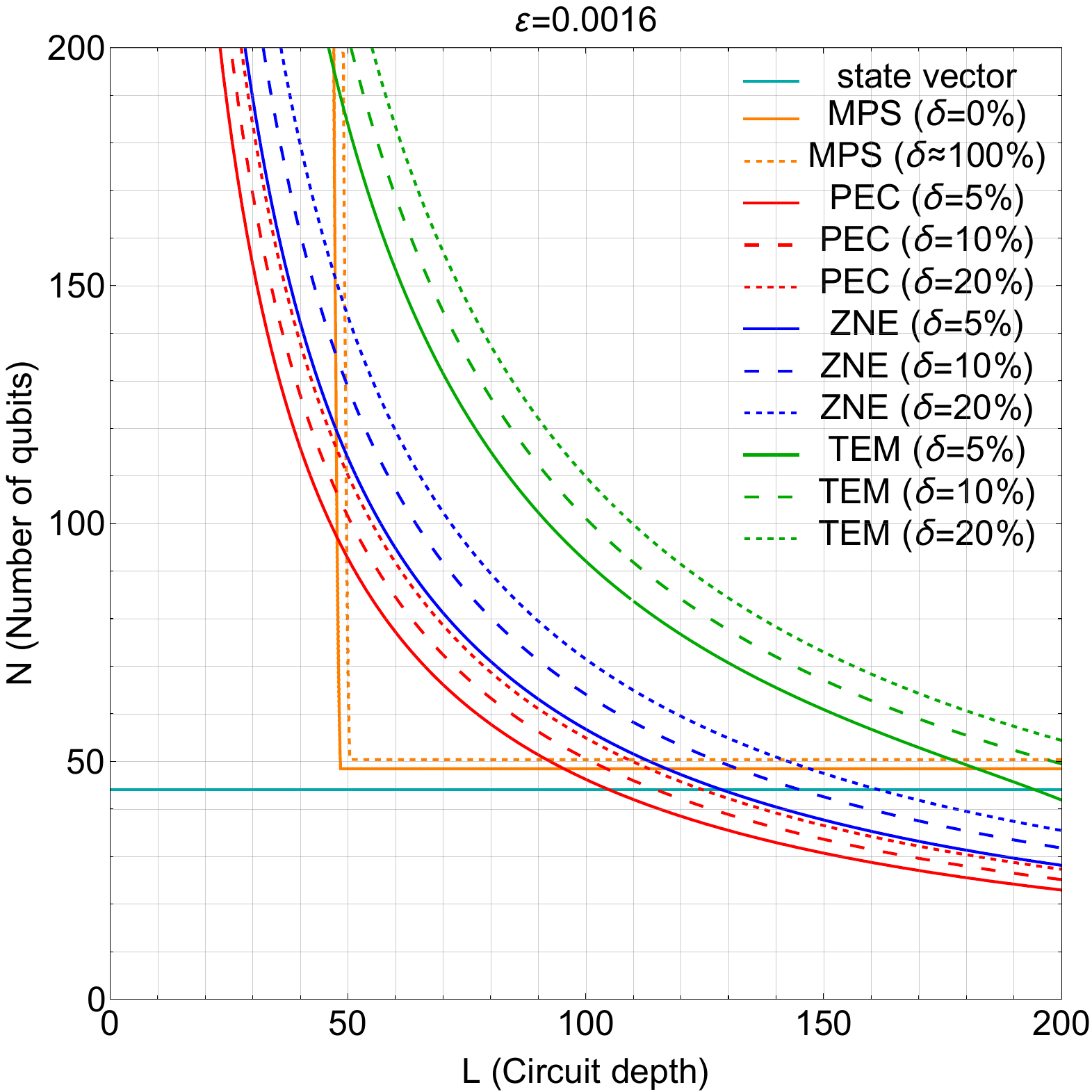}
\caption{\label{figure-summary} Contour diagrams for the typical total error $\delta$ in estimating high-weight observables in \textit{dense} $N \times L$ circuits with average error rate $\varepsilon = 0.5\%$ (left) and $\varepsilon = 0.16\%$ (right). Computation parameters: overall allocated wall time ${\rm T} = 24$~hours, standard deviation in the relative error rate fluctuations $\Theta = 1.8\%$, duration of gates in one layer $\tau_{\rm l} = 0.6$~$\mu$s, measurement duration $\tau_{\rm m} = 0.8$~$\mu$s, and circuit delay time $\tau_{\rm delay} = 0.5$~ms. TEM assumes a classical computational power ${\rm P} = 10^{15}$~FLOPS, the uncustomized matrix product state (MPS) simulation assumes using the most powerful available classical supercomputer (${\rm P} = 1.2 \times 10^{18}$~FLOPS), the state-vector simulation assumes $562 \, 950$~GiB  ($\sim 0.5$~PiB) of memory~\cite{aws_simulating_2022}.}
\end{figure}

TEM effectively modifies the observable by a tensor network so that measuring this modified observable at the output of a noisy quantum computation gives the mitigated estimation of the original observable of interest. Given finite classical computational resources, the tensor network contraction can be implemented within a reasonable wall time only if the bond dimension is kept compressed down to some value $\chi$ after each contraction iteration. As a result of the bond dimension compression, TEM's error mitigating map is found approximately and the mitigated observable estimation contains a systematic error evaluated in Sec.~\ref{section-tem-systematic-error}. As we show in Sec.~\ref{section-tem-systematic-error}, if the bond dimension $\chi$ exceeds some threshold value ($L^2/2$ for Clifford circuits), then the compression-induced systematic error $\delta_{\rm systematic}^{(3)}$ reduces down to the error that would be achieved with quantum error correction with the code distance $3$, i.e., the error rate $\varepsilon$ gets effectively replaced by $\varepsilon^2$. In a dense $100 \times 100$ circuit, $\delta_{\rm systematic}^{(3)} = 0.14\%$ ($1.4\%$) if $\varepsilon = 0.16\%$ ($0.5\%$). If the bond dimension is below the threshold value, then the systematic error is higher equaling $\sqrt{(\delta_{\rm systematic}^{(3)})^2+(\delta_{\rm systematic}^{(3')})^2}$, where the extra term $\delta_{\rm systematic}^{(3')}$ is listed in Table~\ref{table-summary}. 

The total estimation error $\delta$ for a given technique comprises both random and systematic errors and is evaluated as $\delta^2 = \delta_{\rm random}^2 + \sum_{i} (\delta_{\rm systematic}^{(i)})^2$. Equipped with the results of the error analysis above, it is therefore possible to predict what estimation error $\delta$ one should expect in running noisy quantum computations with a particular error mitigation technique under the circumstances of limited time and classical computer support (see Appendices~\ref{appendix-PEC-summary}, \ref{appendix-ZNE-summary}, and \ref{appendix-TEM-summary} for PEC, ZNE, and TEM, respectively). In Fig.~\ref{figure-summary}, we depict contour diagrams for different values of the estimation error $\delta$ to assess the circuit size for which the error-mitigated quantum computation remains reasonably accurate within a reasonable allocated time. As the noise stability and the error rate approach the levels expected from the improved control in current large-scale experiments ($\Theta = 1.8\%$~\cite{kim_evidence_2023} and $\varepsilon = 0.16\%$~\cite{mckay_benchmarking_2023}), a window of opportunity opens for quantum computations to outperform purely classical simulations of $\sim 100 \times 100$ quantum circuits.

As the quantum technology improves and the average error rate $\varepsilon$ decreases, the main limitation on the size $N \times L$ of hardware-simulable circuits comes from the total number of errors $\varepsilon N L$ which cannot exceed $\sim 20$ in order to get a realistic wall time with the sampling overhead $e^{\varepsilon N L} \sim 5 \times 10^8$. This is because the random error depends entirely on the total number of errors $\varepsilon N L$. The simulable circuit size is therefore  $N L \lesssim \frac{20}{\varepsilon}$, with deviations from one mitigation method to another (see Fig.~\ref{figure-summary}). However, the systematic errors $\delta_{\rm systematic}^{(2)}$ and $\delta_{\rm systematic}^{(3)}$ quadratically depend on the error rate and are proportional to $\varepsilon \times (\varepsilon N L) \lesssim 20 \varepsilon$ and thus decrease with the decrease of $\varepsilon$ in the region of simulable scales (mathematically, $\delta_{\rm systematic}^{(2)} \rightarrow 0$ and $\delta_{\rm systematic}^{(3)}  \rightarrow 0$ if $\varepsilon \rightarrow 0$ and $N L \rightarrow \infty$ so that $\varepsilon N L = {\rm conts}$).

As error mitigated quantum computation produces competitive expectation values and continues improving, it is becoming increasingly attractive for solving numerous large-scale simulation problems in an industry-friendly automated way. A digital quantum simulator provides algorithmic solutions to whatever problem from a given a wide class without any need for customization, thus making quantum simulation \textit{universally} applicable within the class. On the other hand, the established but uncustomized classical simulators (e.g., based on conventional tensor network methods like matrix product states) cannot generally provide accurate solutions for all problems within the class as we show in Sec.~\ref{section-advantage-verifiable-model}. Although a problem-specific customization of classical simulation algorithms can lead to better performance indicators in solving a particular problem, such customized solutions cannot be generally applied to other large-scale problems from the same class. Since the recustomization of a classical simulator for each and every problem from the class is costly, quantum simulators become advantageous from a commercial viewpoint, thus giving a new meaning to quantum advantage in industrial applications (see Sec.~\ref{section-prospects-advantage-universality}).

\section{\label{section-analysis} Analysis of error mitigation performance at scale}

\subsection{\label{section-circuits} Overview of techniques and noise model} \label{section-noise-model}

\begin{figure}
\begin{center}

\begin{tabular}{|l|c|}
\hline
~
~PEC~~ & \begin{quantikz}[row sep=0.1cm, column sep=0.1cm]
\gategroup[4,steps=14,style={dashed,rounded
corners,fill=white!20, inner
xsep=15pt,xshift=-14pt,yshift=-5pt},background,label style={label
position=below,anchor=north,yshift=-0.2cm}]{samples from the ideal circuit with the rescaled measurement outcomes} \lstick{$\ket{0}$} & \gate[3,disable auto height,style={fill=orange!20}]{{\cal V}_k^{(1)}} & \gate[3,style={fill=magenta!20}]{{\cal N}_1} & \gate[3,disable auto height,style={fill=green!20}]{{\cal U}_1} & [0.2cm] & \ \ldots \ & [0.2cm] & \gate[3,disable auto height,style={fill=orange!20}]{{\cal V}_k^{(L)}} & \gate[3,style={fill=magenta!20}]{{\cal N}_L} & \gate[3,disable auto height,style={fill=green!20}]{{\cal U}_L} & & \meter{} & \setwiretype{c} & \measure[style={fill=yellow!20}]{\times} & & & \gate[3,disable auto height,style={fill=blue!20}]{O} \\
\lstick[label style={yshift = 3pt, xshift = -5pt}]{$\vdots$} & & & & & \ \ldots \ & & & & & & \meter{} & \setwiretype{c} & \measure[style={fill=yellow!20}]{\times} & & & & \setwiretype{n} \rstick{ $\longrightarrow \bar{O} $} \\
\lstick{$\ket{0}$} & & & & & \ \ldots \ & & & & & & \meter{} & \setwiretype{c} & \measure[style={fill=yellow!20}]{\times} & & & \\
\lstick{$\displaystyle{ \bigcup_{l,k} q_k^{(l)} }$} \setwiretype{c} & \phase[label style = {xshift = -3pt, yshift = -18pt}]{q_k^{(1)}} \wire[u][1]{c} & & & & \ \ldots \ &  & \phase[label style = {xshift = -3pt, yshift = -18pt}]{q_k^{(L)}} \wire[u][1]{c} & & & & \measure[style={fill=yellow!20}]{\gamma \sigma_{{\cal V}}} & & \wire[u][3]{c}
\end{quantikz} \\
\hline 
$\begin{array}{c}
    \text{ZNE} \\
    \text{(PEA)}
\end{array}$ & \begin{quantikz}[row sep=0.1cm, column sep=0.1cm]
\lstick{$\ket{0}$} &  & \gate[3,disable auto height,style={fill=orange!20}]{{\cal V}_k^{(1)}} & \gate[3,style={fill=magenta!20}]{{\cal N}_1} & \gate[3,disable auto height,style={fill=green!20}]{{\cal U}_1} & [0.2cm] & \ \ldots \ & [0.2cm] & \gate[3,disable auto height,style={fill=orange!20}]{{\cal V}_k^{(L)}} & \gate[3,style={fill=magenta!20}]{{\cal N}_L} & \gate[3,disable auto height,style={fill=green!20}]{{\cal U}_L} & & \meter{} & \setwiretype{c} & \gate[3,disable auto height,style={fill=blue!20}]{O} \\
\lstick[label style={yshift = 3pt, xshift = -5pt}]{$\vdots$} & & & & & & \ \ldots \ & & & & & & \meter{} & \setwiretype{c} & & \setwiretype{n} \push{\longrightarrow \bar{O}(G_i)} & \rstick[4]{~~\rotatebox[origin=c]{90}{extrapolation} $\longrightarrow \bar{O}$}\\
\lstick{$\ket{0}$} & & & & & & \ \ldots \ & & & & & & \meter{} & \setwiretype{c} & \\
& \setwiretype{n}  & \phase[label style = {xshift = -4pt, yshift = -18pt}]{p_k^{(1,G_i-1)}\hspace{-0.5cm}} \wire[u][1]{c} \setwiretype{c} & & & &  \ \ldots \ &  & \phase[label style = {xshift = -4pt, yshift = -18pt}]{p_k^{(L,G_i-1)} \hspace{-0.5cm}} \wire[u][1]{c} & \setwiretype{n}  \\
\lstick{$\displaystyle{ \bigcup_{i} G_i }$} \setwiretype{c} & \phase[label style = {xshift = -5pt, yshift = -18pt}]{\cup_{l,k} \ p_k^{(l,G_i-1)} \hspace{-1.5cm}} \wire[u][1]{c} & & & & & &  & & & & & & & & \push{~~~G_i} & \setwiretype{n}
\end{quantikz} \\[-0.4cm]

& \\[0.2cm]

\hline
~~TEM~~ & \begin{quantikz}[row sep=0.1cm, column sep=0.1cm]
\lstick{$\ket{0}$} & \gate[3,style={fill=magenta!20}]{{\cal N}_1}  & \gate[3,disable auto height,style={fill=green!20}]{{\cal U}_1} & [0.2cm] & \ \ldots \ & [0.2cm] & \gate[3,style={fill=magenta!20}]{{\cal N}_L} & \gate[3,disable auto height,style={fill=green!20}]{{\cal U}_L} & & \meter{} & \setwiretype{c} & \gategroup[3,steps=13,style={dashed,rounded
corners,fill=white!20, inner
xsep=-3pt, inner
ysep=-3pt,xshift=0pt,yshift=0pt},background,label style={label
position=below,anchor=north,yshift=-0.2cm}]{tensor network for $\overleftarrow{\bigcirc}_{l=1}^L (\widetilde{\cal N}_l^{-1})^{\dag}$} & \gate[style={fill=cyan!20}]{} \wire[d][2]{c} & & \gate[style={fill=gray!20}]{} \wire[d][2]{c} &  \ \ldots \ & \gate[style={fill=cyan!20}]{} \wire[d][2]{c} & & \gate[style={fill=gray!20}]{} \wire[d][2]{c} & & \gate[style={fill=green!20}]{} \wire[d][2]{c} &  \ \ldots \ & \gate[style={fill=green!20}]{} \wire[d][2]{c} & & \gate[3,disable auto height,style={fill=blue!20}]{O} \\
\lstick[label style={yshift = 3pt, xshift = -5pt}]{$\vdots$} & & & & \ \ldots \ & & & & & \meter{} & \setwiretype{c} & & \gate[style={fill=cyan!20}]{} & & \gate[style={fill=gray!20}]{} &  \ \ldots \ & \gate[style={fill=cyan!20}]{} & & \gate[style={fill=gray!20}]{} & & \gate[style={fill=green!20}]{} &  \ \ldots \ & \gate[style={fill=green!20}]{} & & & \setwiretype{n} \push{\longrightarrow \bar{O}} \\
\lstick{$\ket{0}$} & & & & \ \ldots \ & & & & & \meter{} & \setwiretype{c} & & \gate[style={fill=cyan!20}]{} & & \gate[style={fill=gray!20}]{} &  \ \ldots \ & \gate[style={fill=cyan!20}]{} & & \gate[style={fill=gray!20}]{} & & \gate[style={fill=green!20}]{} &  \ \ldots \ & \gate[style={fill=green!20}]{} & &
\end{quantikz} \\
\hline
\end{tabular}

\end{center}
\caption{\label{figure-techniques} Noise-aware noise mitigation techniques: PEC, ZNE with probabilistic error amplification (PEA), and TEM. PEC operates without regard to the observable (error mitigation in the Schr\"{o}dinger picture). ZNE operates with a specific observable and TEM effectively modifies any observable by the adjacent tensor-network map (mitigation of errors propagated in the Heisenberg picture).}
\end{figure}
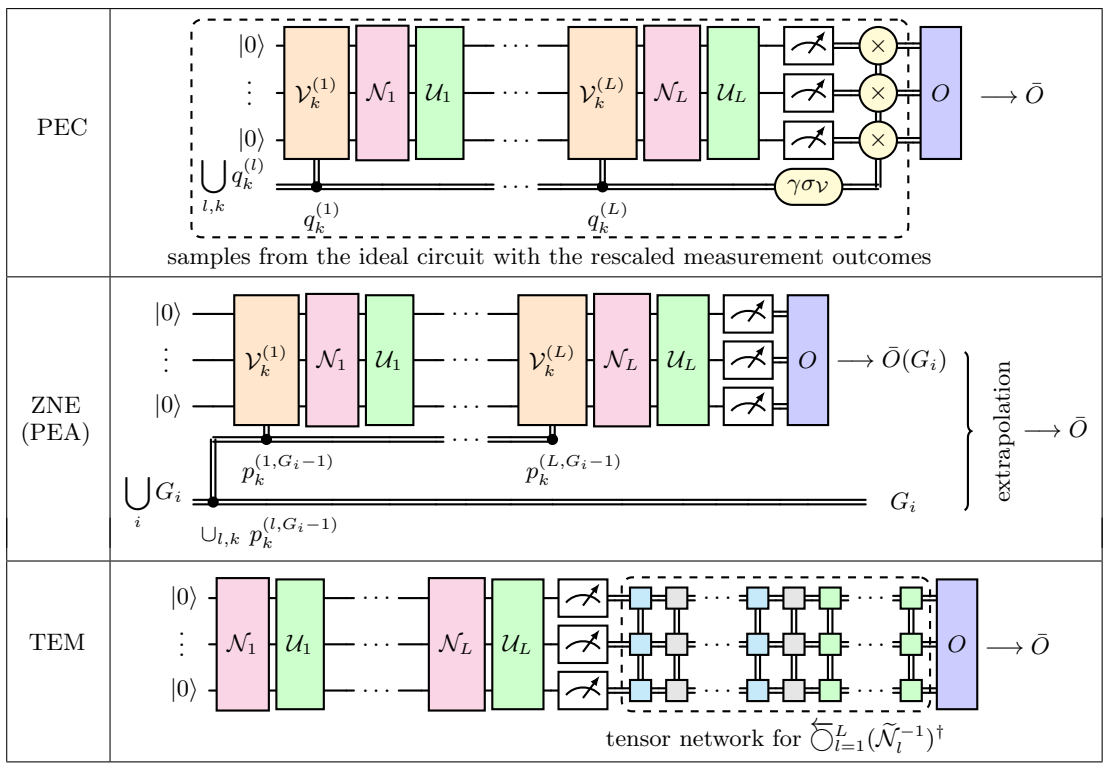

The practical implementation of a digital quantum computation can be described as a sequence of ideal unitary layers ${\cal U}_l$, each preceded by a layer ${\cal N}_l$ of associated unavoidable noise. Noise primarily accompanies entangling gates and typically comprises cross-talk terms involving neighboring qubits, whereas single-qubit unitary operations are almost noiseless (error rate $\lesssim 0.01$\% \cite{kim_evidence_2023}). Assuming a Markovian nature of the noise and its invariance in time, it suffices to characterize and formally invert each noisy layer ${\cal N}_l$ to mitigate the errors. Since the inverse map ${\cal N}_l^{-1}$ is not completely positive, it cannot be directly applied to qubits via control pulses. Different techniques pursue different ways of inverting noise, see Fig.~\ref{figure-techniques}. In PEC, a map ${\cal N}_l^{-1}$ is implemented on average via the application of (presumably noiseless) unitary maps ${\cal V}_{k}^{(l)}$ sampled from the quasiprobability distribution $\{q_{k}^{(l)}\}_k$ in the expansion ${\cal N}_l^{-1} = \sum_k q_{k}^{(l)} {\cal V}_{k}^{(l)}$ (see Sec.~\ref{section-pec}). In ZNE with probabilistic error amplification, the extrapolation to zero-noise value ($G=0$) is made through several noisy estimations ($G=G_1,G_2, \ldots$; $G_i \geq 1$) each obtained on average by sampling (presumably noiseless) unitary gates from the true probability representation for powers ${\cal N}_l^{G_i-1}$ (see Sec.~\ref{section-zne-pea}). In TEM, the observable of interest is concatenated with the tensor-network representation of the noise inversion map in the Heisenberg picture, $\overleftarrow{\bigcirc}_l (\widetilde{\cal N}_l^{-1})^{\dag}$ with $\widetilde{\cal N}_l^{-1} = {\cal U}_{\geq l} \circ {\cal N}_l^{-1} \circ {\cal U}_{\geq l}^{\dag}$ and ${\cal U}_{\geq l} = \overleftarrow{\bigcirc}_{m\geq l} {\cal U}_m$~\footnote{Hereafter, we use symbols $\overleftarrow{\bigcirc}_m$ ($\overrightarrow{\bigcirc}_m$) and $\overleftarrow{\prod}_m$ ($\overrightarrow{\prod}_m$) to unambiguously indicate the increasing order for indices in map concatenations and matrix products:  $\overleftarrow{\bigcirc}_m {\cal A}_m = \cdots \circ {\cal A}_2 \circ {\cal A}_1$ and $\overleftarrow{\prod}_m {A}_m = \cdots {A}_2 \cdot {A}_1$ ($\overrightarrow{\bigcirc}_m {\cal A}_m = {\cal A}_1 \circ {\cal A}_2 \circ \cdots$ and $\overrightarrow{\prod}_m {A}_m = {A}_1 \cdot {A}_2 \cdots $).} (see Sec.~\ref{section-tem}). Given finite quantum and classical computational resources, every technique results in a noise-mitigated observable subject to random errors (due to a sampling overhead) and systematic errors (due to an imprecisely learned noise, fluctuations in noise parameters, errors in single-qubit unitary gates, extrapolation inaccuracy, bond dimension truncation). We evaluate both random and systematic errors for each technique in Secs.~\ref{section-pec}, \ref{section-zne-pea}, \ref{section-tem}.

In experiments with more than 100 qubits~\cite{kim_evidence_2023}, the noise is brought into the sparse Pauli-Lindblad (SPL) form via Pauli twirling and then learned to infer the noise model parameters: $3$ relaxation rates $\lambda_X^{[q]}, \lambda_Y^{[q]}, \lambda_Z^{[q]}$ for each qubit $q$ and $9$ relaxation rates $\lambda_{XX}^{\langle q_1,q_2\rangle}, \lambda_{XY}^{\langle q_1,q_2\rangle}, \lambda_{XZ}^{\langle q_1,q_2\rangle}, \lambda_{YX}^{\langle q_1,q_2\rangle}, \lambda_{YY}^{\langle q_1,q_2\rangle}, \lambda_{YZ}^{\langle q_1,q_2\rangle}, \lambda_{ZX}^{\langle q_1,q_2\rangle}, \lambda_{ZY}^{\langle q_1,q_2\rangle}, \lambda_{ZZ}^{\langle q_1,q_2\rangle}$ for each pair of nearest-neighbor qubits $\langle q_1,q_2\rangle$~\cite{van_den_berg_probabilistic_2023}. The unit of time in defining the rates is the duration of the unitary layer operation. In the presence of cross-talk between nearest qubits only, the number of parameters in the SPL noise model grows linearly in the number of qubits and all these parameters can be estimated via measurements in $9$ different bases only~\cite{van_den_berg_probabilistic_2023}. We therefore stick to the SPL noise model, where ${\cal N}_l = \bigcirc_{q} {\cal N}^{[q]}_{l} \circ \bigcirc_{\langle q_1,q_2\rangle} {\cal N}^{\langle q_1,q_2\rangle}_{l}$ is a concatenation of commuting one- and two-qubit Pauli channels acting nontrivially either at a qubit $q$ or a pair of neighboring qubits $q=\langle q_1,q_2\rangle$. In the Pauli-transfer-matrix representation, ${\cal N}_l$ is given by a diagonal matrix product operator (MPO) of bond dimension $4$ and physical dimension $4$ associated with Pauli matrices $\{I,X,Y,Z\} \equiv \{\sigma_{\alpha}\}_{\alpha = 0}^{3}$~\cite{filippov_scalable_2023}. By ${\rm SPL}({\cal N}_l)$ we denote the set $\{\lambda_{\boldsymbol{\alpha}}^{(l)}\}_{\boldsymbol{\alpha}}$ of relaxation rates for all (one-qubit and two-qubit) jump operators $\sigma_{\boldsymbol{\alpha}} = \bigotimes_{q=0}^{N-1} \sigma_{\alpha_q}$, $\boldsymbol{\alpha} = (\alpha_0,\ldots,\alpha_{N-1})$, so that ${\cal N}_l[\bullet] = \exp[ \sum_{\boldsymbol{\alpha}} \lambda_{\boldsymbol{\alpha}}^{(l)} (\sigma_{\boldsymbol{\alpha}} \bullet \sigma_{\boldsymbol{\alpha}} - \bullet)  ]$. This noise model readily extends to more general scenarios with the jump operators of higher Pauli weight (beyond two-local crosstalk)~\cite{berg_techniques_2023,rouze2023efficient,flammia_2020}, where our results still hold true. For the sake of clarity, though, we consider the two-local SPL noise model as the default one.   

Importantly, any Pauli observable $\sigma_{\boldsymbol{\beta}}$ is an eigenoperator of the SPL noise, ${\cal N}_l[\sigma_{\boldsymbol{\beta}}] = f_{l{\boldsymbol{\beta}}} \sigma_{\boldsymbol{\beta}}$, and the corresponding eigenvalue (fidelity) is sensitive to those of the noise jump operators that anticommute with the observable, 
\begin{eqnarray} 
f_{l{\boldsymbol{\beta}}} &=& \begin{array}[t]{@{}l@{}}
     \displaystyle \prod \\[-0.5ex] \scriptstyle \lambda_{\boldsymbol{\alpha}} \in {\rm SPL}({\cal N}_l): \ \{ \sigma_{\boldsymbol{\alpha}}, \sigma_{\boldsymbol{\beta}} \} = 0
    \end{array} \exp(-2\lambda_{\boldsymbol{\alpha}}) \nonumber\\
&=& \prod_{q} \begin{array}[t]{@{}l@{}}
     \displaystyle \prod \\[-0.5ex] \scriptstyle \alpha_q: \ \{\sigma_{\alpha_q},\sigma_{\beta_q}\} = 0
    \end{array} \exp(-2 \lambda_{\alpha_q}^{[q](l)}) \prod_{\langle q_1,q_2\rangle} \begin{array}[t]{@{}l@{}}
     \displaystyle \prod \\[-0.5ex] \scriptstyle \alpha_{q_1},\alpha_{q_2}: \ \{\sigma_{\alpha_{q_1}} \otimes \sigma_{\alpha_{q_2}}, \sigma_{\beta_{q_1}} \otimes \sigma_{\beta_{q_2}} \} = 0
    \end{array} \exp(-2 \lambda_{\alpha_{q_1} \alpha_{q_2}}^{\langle q_1,q_2\rangle (l)}). \label{Pauli-fidelity}
\end{eqnarray}

\noindent Different Pauli strings gather different $\lambda$-components from the noise model, leading to an exponential range of fidelities: low-weight Pauli strings are affected the least whereas high-weight Pauli strings are affected the most. In a general $N$-qubit quantum circuit, the dynamical observable (in the Heisenberg picture) represents a sum of a large number of typical Pauli strings, in which $4$ elementary Pauli matrices $I,X,Y,Z$ appear with approximately the same frequencies, so that their Pauli weight $\sim 3N/4$. Any jump operator anticommutes with exactly half of all Pauli strings $\{\sigma_{\boldsymbol{\beta}}\}_{\boldsymbol{\beta}}$ (see Appendix~\ref{appendix-anticommutativity}), so the geometric mean of fidelities reads
\begin{equation} \label{geometric-mean-of-fidelities}
\overline{f_{l\boldsymbol{\beta}}}^{\rm geom} = \prod_{q} \prod_{\alpha_q} \exp(- \lambda_{\alpha_q}^{[q](l)}) \prod_{\langle q_1,q_2\rangle} \prod_{\alpha_{q_1},\alpha_{q_2}} \exp(- \lambda_{\alpha_{q_1} \alpha_{q_2}}^{\langle q_1,q_2\rangle (l)}) \equiv \gamma_l^{-1/2},
\end{equation}

\noindent where $\gamma_l \equiv \exp(2\sum_{\lambda_{\boldsymbol{\alpha}} \in {\rm SPL}({\cal N}_l)} \lambda_{\boldsymbol{\alpha}})$ is a conventional quantifier for total noise accumulated in the whole SPL layer ${\cal N}_l$ widely used in PEC theory (PEC sampling overhead for the $l$th layer $\Gamma_l \geq \gamma_l^2$, see Sec.~\ref{section-pec}). For a randomly chosen Pauli string $\sigma_{\boldsymbol{\beta}} \neq I^{\otimes N}$ ($\boldsymbol{\beta} \neq {\bf 0}$), an individual fidelity $f_{l\boldsymbol{\beta}} = \exp(-2\sum_{\lambda_{\boldsymbol{\alpha}} \in {\rm SPL}({\cal N}_l)} \lambda_{\boldsymbol{\alpha}} \xi_{\boldsymbol{\alpha} \boldsymbol{\beta}})$ can be considered as a function of the uniformly distributed random variables $\xi_{\boldsymbol{\alpha} \boldsymbol{\beta}} \in \{0,1\}$ that either activate or deactivate damping~\footnote{Since a given non-trivial Pauli string anticommutes with exactly half of all Pauli strings (see Appendix~\ref{appendix-anticommutativity}), there is an additional requirement on random variables $\xi_{\boldsymbol{\alpha} \boldsymbol{\beta}} \in \{0,1\}$ that $\sum_{\lambda_{\boldsymbol{\alpha}} \in {\rm SPL}({\cal N}_l)} \xi_{\boldsymbol{\alpha} \boldsymbol{\beta}} = 4^N / 2$. The variables $\xi_{\boldsymbol{\alpha} \boldsymbol{\beta}} \in \{0,1\}$ can approximately be treated as independent and identically distributed (uniformly) if $N \gg 1$.}. The fidelities $\{f_{l\boldsymbol{\beta}}\}_{\boldsymbol{\beta}}$ are scattered around $\gamma_l^{-1/2}$, so 
$\ln f_{l\boldsymbol{\beta}} \approx -\frac{1}{2} \ln \gamma_l \pm \frac{1}{2} \sqrt{ \sum_{\lambda_{\boldsymbol{\alpha}} \in {\rm SPL}({\cal N}_l)} \lambda_{\boldsymbol{\alpha}}^2 }$. Upon propagating through all the circuit layers, different individual fidelities get multiplied and yield approximately the geometric mean $\prod_l \gamma_l^{-1/2} \equiv \gamma^{-1/2}$. The overall noise damps a typical high-weight observable as $\braket{O}_{\rm noisy} \approx \gamma^{-1/2} \braket{O}_{\rm ideal}$. Noise mitigation aims at restoring the noiseless estimation, meaning $\braket{O}_{\rm mitigated} \approx \sqrt{\gamma} \braket{O}_{\rm noisy}$.

The overall noise accumulation factor $\gamma \equiv \prod_l \gamma_l \equiv \exp (2 \sum_{\lambda_{\boldsymbol{\alpha}} \in \cup_l {\rm SPL}({\cal N}_l)} \lambda_{\boldsymbol{\alpha}})$ grows exponentially in the circuit volume. For instance, in a dense $N \times L$ circuit with $\frac{1}{2} NL$ {\sc cnot} gates each accompanied by a two-qubit depolarizing noise $\Lambda[\varrho] = (1-p)\varrho + p \, {\rm tr}[\varrho] \frac{1}{4} I^{\otimes 2}$ with $p \ll 1$, the overall noise accumulation factor $\gamma \approx (1 + \frac{15}{8} p)^{NL/2}$~\cite{temme-2017}, i.e., on average $\sqrt[NL]{\gamma} \approx 1 + \frac{15}{16} p$ per qubit per layer. Therefore, a convenient indicator of the error rate $\varepsilon$ per qubit per layer in a general circuit with the SPL noise is $\varepsilon \equiv \sqrt[NL]{\gamma} - 1$, which we also refer to as the density of errors. In dense circuits, $\varepsilon$ essentially reduces to the error per layered gate~\cite{mckay_benchmarking_2023} and $\varepsilon \approx (6NL)^{-1} \sum_{\lambda_{\boldsymbol{\alpha}} \in \cup_l {\rm SPL}({\cal N}_l)} \lambda_{\boldsymbol{\alpha}}$ for a linear chain topology of qubits. This definition enables us to express the overall noise accumulation factor $\gamma \equiv (1 + \varepsilon)^{NL} \approx \exp(\varepsilon NL)$ as the exponent of the average number of errors happening in the circuit, $\varepsilon NL$. If $\varepsilon NL \ll 1$, then the signal is only moderately attenuated and any noise mitigation technique is essentially effortless. However, large scale experiments necessarily involve many errors $\varepsilon NL \gg 1$ (for instance, the average number of errors $\varepsilon NL = 16$ for $N=L=100$ and the lowest gate error $\varepsilon = 0.16\%$ observed in superconducting quantum computers with more than $100$ qubits~\cite{mckay_benchmarking_2023}). This means the original signal of the order of $1$ is typically attenuated down to the values $e^{-\varepsilon NL/2}$ hardly distinguishable from the shot noise in practical scenarios ($e^{-\varepsilon NL/2} \approx 3.4 \times 10^{-4}$ for $\varepsilon NL = 16$). We focus on the challenging yet the most pressing scenario for noise mitigation with $\varepsilon NL \gg 1$, where the sampling overhead $\Gamma$ is extremely sensitive to any factor in the exponential scaling. 

In some experiments, noise is accumulated not from the whole circuit ($N \times L$) but only from some relevant causal area ${\cal A}$ at the intersection of causal cones for the observable and the nontrival component of the initial state~\cite{tran-2023}. In this case, the average number of relevant errors is $\varepsilon |{\cal A}|$ instead of $\varepsilon N L$ and the error mitigation task becomes significantly simpler if $|{\cal A}| \ll NL$ since the signal is attenuated much less, $\braket{O}_{\rm noisy} \approx \exp(-\varepsilon |{\cal A}|/2) \braket{O}_{\rm ideal}$. In the following analysis, we focus on and present results for the most pressing scenario where the causal area is essentially the whole circuit; however, in the derived formulae, one should replace $\varepsilon N L \rightarrow \varepsilon |{\cal A}|$ if applicable because the sampling overhead and the random error in observable estimation are exponentially sensitive to the average number of errors.

\subsection{\label{section-pec} Probabilistic error cancellation (PEC)}

\subsubsection{Method}

In this section, we present the derivation of all errors associated to PEC and summarized in Table~\ref{table-summary}. We begin recalling the method. Suppose one is interested in estimating the average value $\braket{O}$ for a pure state produced by an ideal layered circuit $\overleftarrow{\bigcirc}_l \, {\cal U}_l$. We abstract the measurement implementation from the error mitigation description and merely assume a measurement outcome $m$ is assigned the value $O_m$ in this noiseless scenario. For instance, if $O$ is an $N$-qubit Pauli string $P_0 \otimes \cdots \otimes P_{N-1}$, $P_k \in \{I,X,Y,Z\}$, then the $k$th qubit can be measured in the standard eigenbasis of the local Pauli operator $P_k$ and, depending on the measurement outcome, assigned the value $m_k = 0$ or $m_k = 1$ so that $m = (m_1,\ldots,m_{N-1})$ and $O_m = \prod_{k: \, P_k \neq I} (-1)^{m_k}$. In this noiseless scenario, $\braket{O}$ is estimated as the mean $\bar{O} = \frac{1}{|{\sf S}|} \sum_{m \in {\sf S}} O_m$ over a collection ${\sf S} \equiv {\sf S}_{\rm ideal}$ of $|{\sf S}|$ measurement shots. In the limit of infinitely many shots, $\lim_{|{\sf S}|\rightarrow \infty} \bar{O} = \braket{O}$. For a finite number of shots $|{\sf S}| \gg 1$, the deviation of $\bar{O}$ from the true value $\braket{O}$ is a random error quantified via the standard deviation estimation $\Delta \bar{O} = \frac{1}{|{\sf S}|} \sqrt{\sum_{m \in {\sf S}} (O_m - \bar{O})^2 }$.

In noisy quantum computations, every unitary layer ${\cal U}_l$ is preceded by a noisy layer ${\cal N}_l$ so that $\frac{1}{|{\sf S}|} \sum_{m \in {\sf S}} O_m$ is no longer a valid estimate for $\braket{O}$ because the distribution of outcomes after the noisy evolution (in the Schr\"{o}dinger picture) is completely different from that for the noiseless evolution (${\sf S} \equiv {\sf S}_{\rm noisy}$). In PEC~\cite{temme-2017,van_den_berg_probabilistic_2023,piveteau-2022}, the noise-inversion map for layer $l$ is represented as ${\cal N}_l^{-1} = \sum_k q_{k}^{(l)} {\cal V}_{k}^{(l)}$, where $\{{\cal V}_{k}^{(l)}\}_k$ are presumably noiseless unitary maps and $\{q_{k}^{(l)}\}_k$ is a quasiprobability. Since some $q_{k}^{(l)}$ are negative and $\sum_k q_{k}^{(l)} = 1$, the associated noise accumulation factor per layer $\gamma_{l} = \sum_k |q_{k}^{(l)}| > 1$. The factor for the whole circuit is $\gamma = \prod_l \gamma_l$. Every sample in PEC represents a new quantum circuit $\overleftarrow{\bigcirc}_l \, {\cal U}_l \circ {\cal V}_{k_l}^{(l)}$, where a unitary operation ${\cal V}_{k_l}^{(l)}$ for layer $l$ is drawn randomly from the true probability distribution $\{ \frac{1}{\gamma_l} |q_{k}^{(l)}| \}_k$. Every sample PEC circuit ${\cal V} \in {\sf Q}$ is run on the noisy hardware and measured. A measurement outcome $m$ is now assigned the value $\gamma \sigma_{\cal V} O_m$, where $\sigma_{\cal V} \in \{\pm 1\}$ is the sample-dependent sign of $\prod_l q_{k_l}^{(l)}$. The sought value $\braket{O}$ is estimated as the mean $\bar{\bar{O}}$ over circuit samples ${\sf Q}$ and measurement outcomes ${\sf S}_{\cal V}$ per each sampled circuit ${\cal V} \in {\sf Q}$, i.e., $\bar{\bar{O}} = \frac{1}{|{\sf Q}|} \sum_{{\cal V} \in {\sf Q}} \frac{1}{|{\sf S}_{\cal V}|} \sum_{m \in {\sf S}_{\cal V}} \gamma \sigma_{\cal V} O_m$. In the limit of infinitely many PEC circuits, $\lim_{|{\sf Q}| \rightarrow \infty} \bar{\bar{O}} = \braket{O}$ provided the noise is characterized perfectly and the unitary operations ${\cal V}_{k}^{(l)}$ are implemented perfectly too. For a finite number of PEC circuits $|{\sf Q}| \gg 1$ and the same finite number of shots per every circuit $|{\sf S}_{\cal V}| = |{\sf S}| \gg 1$, the deviation of $\bar{\bar{O}}$ from the true value $\braket{O}$ is a random error quantified via the standard deviation estimation $\Delta \bar{\bar{O}} = \frac{1}{|{\sf Q}|} \sqrt{ \sum_{{\cal V} \in {\sf Q}} \big[ ( \Delta \bar{O}_{\cal V})^2 + (\bar{O}_{\cal V} - \bar{\bar{O}})^2 \big]}$, where $\bar{O}_{\cal V} = \frac{1}{|{\sf S}|} \sum_{m \in {\sf S}_{\cal V}} \gamma \sigma_{\cal V} O_m$ and $\Delta \bar{O}_{\cal V} = \frac{1}{|{\sf S}|} \sqrt{\sum_{m \in {\sf S}_{\cal V}} (\gamma \sigma_{\cal V} O_m - \bar{O}_{\cal V})^2 }$.

From the implementation viewpoint, the repeatedly executed circuits are recompiled every time to include proper gates, whereas the measurement procedure remains exactly the same as in the ideal noiseless scenario. Since interventions in the circuit composition are made prior to the measurement, this can be interpreted as an amendment of the system dynamics in the Schr\"{o}dinger picture: $\frac{1}{|{\sf Q}|} \sum_{{\cal V} \in {\sf Q}} \gamma \sigma_{\cal V} \overleftarrow{\bigcirc}_l {\cal U}_l \circ {\cal N}_l \circ {\cal V}_{k_l}^{(l)} \rightarrow \overleftarrow{\bigcirc}_l \, {\cal U}_l$ if $|{\sf Q}| \rightarrow \infty$. The scaling coefficient $\gamma \sigma_{\cal V}$ is the same for all observables and measurement outcomes. In the absence of systematic errors, PEC results in an unbiased estimator of the pure state at the end of the ideal circuit, i.e., $\mathbb{E}_{{\cal V} \in {\sf Q}} \left(  \gamma \sigma_{\cal V} \overleftarrow{\bigcirc}_l \, {\cal U}_l \circ {\cal N}_l \circ {\cal V}^{(l)} [(\ket{0}\bra{0})^{\otimes N}] \right) = \overleftarrow{\bigcirc}_l \, {\cal U}_l [(\ket{0}\bra{0})^{\otimes N}]$.

Should one be interested in estimating a collection of observables $\{O_i\}_i$, where each observable requires a different measurement setting, then the same error mitigation protocol should be applied repeatedly to every observable on an individual basis. The associated cost could be overcome with the use of classical shadows~\cite{jnane_2024} or generalized multi-outcome measurements (e.g., informationally complete ones~\cite{garcia-perez-2021}) that enable one to estimate a number of different observables from a single collection of measurement outcomes. Generalized measurements allow for an additional optimization to minimize the random error~\cite{garcia-perez-2021,glos2022adaptive,malmi2024enhanced,fischer2024dual,caprotti2024optimising}. 

\subsubsection{Random error} \label{section-pec-random-error}

Let $M$ be the total number of circuit executions (measurement shots) available. If this budget of $M$ shots were used in the ideal quantum computation, then a Pauli observable $O=P_0 \otimes \cdots \otimes P_{N-1}$ would be estimated via measurement outcomes $O_m = \pm 1$ with the standard error $\Delta \bar{O} = \sqrt{\frac{1-\bar{O}^2}{M}}$. Instead, if this total budget of shots is used in a PEC-enhanced noisy quantum computation with one shot per circuit ($|{\sf S}|=1$, $|{\sf Q}| = M$), then $\Delta \bar{O}_{\cal V} = 0$ for any sampled circuit ${\cal V} \in {\sf Q}$ and $\Delta \bar{\bar{O}} = \sqrt{\frac{\gamma^2 - \bar{\bar{O}}^2}{M}}$. Since $\bar{O} = \bar{\bar{O}}$ in the limit $M \rightarrow \infty$, the sampling overhead in circuit runs $\Gamma = \frac{\gamma^2 - \bar{\bar{O}}^2}{1-\bar{O}^2} \geq \gamma^2$. Redistributing the total budget among $|{\sf Q}|$ circuit executions and $|{\sf S}|$ measurement shots per circuit ($M=|{\sf Q}| |{\sf S}|$) can be beneficial from a practical viewpoint to accelerate the experiment; however, this generally increases the estimation error. We therefore consider the best case scenario with the sampling overhead $\Gamma = \gamma^2$ and the random error $\delta_{\rm random} = \frac{\gamma}{\sqrt{M}}$.  

The locality of the sparse Pauli-Lindblad noise enables one to efficiently sample single-qubit gates composing layers ${\cal V}_k^{(l)}$ in PEC. The implementation of the map $e^{- \lambda (P \bullet P - \bullet)}$ with a single Pauli-Lindblad generator $P$ requires sampling from 2 unitary operations $\{I,P\}$ and induces the overhead $e^{2\lambda}$ \cite{van_den_berg_probabilistic_2023}. If all constituent maps for the sparse Pauli-Lindblad noise model are sampled in this way, the noise accumulation factor for a layer is $\gamma_l =  \exp (2 \sum_{\lambda_{\boldsymbol{\alpha}}^{(l)} \in {\rm SPL}({\cal N}_l)} \lambda_{\boldsymbol{\alpha}}^{(l)})$. A more efficient sampling with a slightly smaller overhead is possible if single-qubit maps (defined via 3 relaxation rates $\lambda_X$, $\lambda_Y$, $\lambda_Z$) are implemented via sampling from $4$ unitary operations $\{I,X,Y,Z\}$, and two-qubit maps (defined via $9$ relaxation rates $\lambda_{XX}, \lambda_{XY}, \lambda_{XZ}, \lambda_{YX}, \lambda_{YY}, \lambda_{YZ}, \lambda_{ZX}, \lambda_{ZY}, \lambda_{ZZ}$) are implemented via sampling from $16$ unitary operations $\{I,X,Y,Z\}^{\otimes 2}$ \cite{filippov_scalable_2023}. Nonetheless, this still leads to the same first order expansion with respect to the noise parameters in the exponent for $\gamma_l$. The overall sampling overhead $\gamma^2$ grows exponentially in the circuit volume $NL$ and the error rate $\varepsilon$ because $\gamma = \prod_l \gamma_l = \exp (2 \sum_{\lambda_{\boldsymbol{\alpha}} \in \cup_l {\rm SPL}({\cal N}_l) } \lambda_{\boldsymbol{\alpha}}) \equiv (1 + \varepsilon)^{NL} \approx e^{\varepsilon NL}$.

The standard PEC implementation can be seen as the noise mitigation in the Sch\"{o}dinger picture because the observable is measured the same way in both noisy and noiseless scenarios. To illustrate this, consider the trivial observable $I^{\otimes N}$ the  average value of which $\braket{I^{\otimes N}} = 1$ for any state (mixed or pure). The noisy values $O_m = +1$ for all measurement outcomes $m$, so the PEC sampling gives the values $\gamma \sigma_{\cal V} O_m = \pm \gamma$ that average to $\bar{\bar{O}} \approx 1$ but the error $\Delta \bar{\bar{O}} = \sqrt{\frac{\gamma^2 - 1}{M}}$ is exponential in the circuit volume. On the other hand, the propagation of the trivial observable $I^{\otimes N}$ through a noise-inversion map preserves it, $({\cal N}_{l}^{-1})^{\dag}[I^{\otimes N}] = I^{\otimes N}$, and this fact actually allows one to reliably estimate the observable without resorting to the error mitigation. An adaptation of PEC to the mitigation of only relevant noisy contributions (within the observable causal cone) is made in Ref.~\cite{piveteau-2022}. However, if the observable $O$ acts nontrivially on most of the qubits, the technique essentially involves the whole circuit and the overhead is exponential in the circuit volume (with no regard to how close $O$ could potentially be to $I^{\otimes N}$ or how much simpler the observable evolution in the Heisenberg picture could be as compared to the state evolution in the Schr\"{o}dinger picture \cite{hartmann_density_2009}).

\subsubsection{Systematic error} \label{section-pec-systematic-error}

The accuracy of the learned noise model plays a crucial role in the accuracy of the mitigated values. In practice, the noise is always characterized imprecisely because of the instabilities. For instance, the fluctuating microscopic defects (two-level systems) strongly affect the coherence of superconducting qubits~\cite{carroll_dynamics_2022} and may occasionally result in the increase of the single qubit overhead $\exp[2(\lambda_X+\lambda_Y+\lambda_Z)]$ from $\approx 1.0$ up to $\approx 1.3$~\cite[Sec. III B in Supplementary Information]{kim_evidence_2023}. In a quantum circuit with 127 qubits and 47 to 48 {\sc cnot} gates per layer, the observed temporal variations of $\gamma_l$ within $29$ separate learning attempts are of the order of $\gamma_l \sim 12_{-2}^{+4}$~\cite[Sec. IV B in Supplementary Information]{kim_evidence_2023}. High noise instability makes it necessary to relearn and update the noise model while running the experiment. 

To assess the foreseeable levels of noise stability, we refer to the $65$-qubit experiment with a series of $20$ sequential learning attempts within approximately $17$ hours~\cite[Sec. III B in Supplementary Information]{kim_evidence_2023}. In this experiment, the exclusion of the most unstable $5$ qubits from the $65$-qubit register results in a stable noise model for a single layer with $\gamma \approx 2.54$ and an average change in $\gamma$ from one attempt to another $\Delta\gamma \approx 0.042$~\cite[Sec. III B in Supplementary Information]{kim_evidence_2023}. This translates into a density of errors $\varepsilon = 0.01566 \pm 0.00028$. We use this experiment as a benchmark for the relative accuracy $\Delta \varepsilon / \varepsilon \approx 0.018$ in noise stability within $17$ hours. We anticipate the same relative accuracy to be achievable with lower noise intensities too, i.e., $\gamma_l \pm \Delta\gamma_l = [1+\varepsilon(1 \pm 0.018)]^N \approx \gamma_l (1 \pm 0.018\varepsilon)^N$. Therefore, our noise stability model assumes that, for a given average error rate $\varepsilon$ and an experiment wall time ${\rm T} \lesssim 1$ day, the noise amplification factor $\gamma_l$ for the $l$th layer fluctuates around the learned value as $\gamma_l (1 + \theta_l \varepsilon)^N$, where $\theta_l$ is a normally distributed random variable with mean $0$ and standard deviation $\big(\overline{\theta_l^2}\big)^{1/2} \equiv \Theta = 0.018$. In general, we expect a diffusion-like growth of the mismatch between the learned model and the actual noise in time, $\Theta \propto \sqrt{{\rm T}}$; however, based on the observations available, we stick to a simpler time-independent evaluation for achievable noise stability $\Theta = 1.8 \%$ within ${\rm T} \lesssim 1$ day.

Since $\braket{O}_{\rm mitigated} \approx \prod_l \gamma_l^{1/2} \braket{O}_{\rm noisy}$, the imprecision in the learned noise model and the noise scaling factor $\gamma_l$ translates into an imprecision in the mitigated observable. A layer $l$ results in a factor $\gamma_l^{1/2} (1 + \frac{1}{2} \theta_l \varepsilon)^N$. The errors from different layers partially compensate each other because the observable estimation is sensitive to different noise parameters $\{\lambda_{\boldsymbol{\alpha}}\}_{\boldsymbol{\alpha}}$ in ${\rm SPL}({\cal N}_l)$ (fidelities $f_{l\boldsymbol{\beta}}$) even if the noise layer composition remains the same. Since $\mathbb{E}(\sum_l \theta_l) = 0$ and $\mathbb{E}(\sum_l \theta_l)^2 = L \Theta^2$, the resulting observable estimation reads $\braket{O}_{\rm mitigated} \left(1 + \frac{1}{2} \varepsilon \sqrt{L} \Theta \right)^N \approx \braket{O}_{\rm mitigated} \left(1 + \frac{1}{2} \varepsilon N \sqrt{L} \Theta \right)$ if $\varepsilon N \sqrt{L} \Theta \ll 1$. For a square circuit with $N=L=100$, average error rate $\varepsilon = 0.16\%$ foreseeable in superconducting quantum computers with more than $100$ qubits~\cite{mckay_benchmarking_2023}, and standard deviation $\Theta = 1.8 \%$ in the relative density of errors, we get the systematic error $\delta_{\rm systematic}^{(1)} = 1.44 \%$.

\subsection{\label{section-zne-pea} Zero noise extrapolation (ZNE) with probabilistic error amplification}

\subsubsection{Method}

In this section, we present the derivation of all errors associated to ZNE and summarized in Table~\ref{table-summary}. We begin by recalling the method. The learned noise model ${\cal N}_l$ for a layer $l$ is amplified to the power ${\cal N}_l^G$, $G > 1$ by artificially implementing on hardware the extra noise ${\cal N}_l^{G-1}$ via sampling (presumably noiseless) unitary operations ${\cal V}_k^{(l)}$ according to the true probability distribution $\{p_k^{(l,G-1)}\}_k$ in the expansion ${\cal N}_l^{G-1} = \sum_k p_k^{(l,G-1)} {\cal V}_k^{(l)}$. As in PEC with SPL noise, a unitary operation ${\cal V}_k^{(l,G-1)}$ corresponds to a layer of single-qubit gates. Suppose $O$ is an observable of interest and the qubits are measured to provide the estimation $\bar{O}(G)$ of the average value ${\rm tr}\big\{ O \overleftarrow{\bigcirc}_l {\cal U}_l \circ {\cal N}_l^G [(\ket{0}\bra{0})^{\otimes N}] \big\}$. (If $O$ is a single Pauli string $P_0 \otimes \cdots \otimes P_{N-1}$, then qubit $k$ could be projectively measured in the eigenbasis of $P_k$; otherwise, generalized measurements could be exploited \cite{garcia-perez-2021}.) The procedure is repeated $R$ times for a number of noise amplification factors $G=G_1,G_2, \ldots$ ($1 \leq G_1 < G_2 < \ldots$) to get a collection of various noisy observable estimations $\{\bar{O}(G_i)\}_{i=1}^{R}$. These data are used to infer the extrapolation function $F(G)$ for $\bar{O}(G)$ and obtain the noise-mitigated value $\braket{O}_{\rm mitigated} = F(0)$~\cite{temme-2017,li-2017,kandala_error_2019,kim_scalable_2023}. 

Although ZNE with probabilistic error amplification implies the same active interventions in the circuit composition for any choice of the observable $O$, the final steps in the method still remain observable-specific. This is in drastic contrast to PEC, where sampling and assigning vales $\pm \gamma$ to individual circuits results in a noiseless circuit regardless of the observable of interest ($\frac{1}{|{\sf Q}|} \sum_{{\cal V} \in {\sf Q}} \gamma \sigma_{\cal V} \overleftarrow{\bigcirc} {\cal U}_l \circ {\cal N}_l \circ {\cal V}_{k_l}^{(l)} \rightarrow \overleftarrow{\bigcirc}_l \, {\cal U}_l$ if $|{\sf Q}| \rightarrow \infty$). We therefore argue that ZNE effectively operates with the dynamics of the specific observable $O$ in the noise-propagation Heisenberg picture. In fact, $\bar{O}(G)$ is an estimation for the average value 

\begin{eqnarray}
{\rm tr}\big\{ O \cdot \underbrace{\overleftarrow{\bigcirc}_l {\cal U}_l \circ {\cal N}_l^G [(\ket{0}\bra{0})^{\otimes N}]}_\text{noisy~computation} \big\} &=& {\rm tr}\big\{ \overrightarrow{\bigcirc}_l ({\cal N}_l^G)^{\dag} \circ {\cal U}_l^{\dag} [O] \cdot \underbrace{(\ket{0}\bra{0})^{\otimes N}}_\text{initial~state} \big\} \nonumber\\
&=& {\rm tr}\big\{ \overrightarrow{\bigcirc}_l ({\cal N}_l^{G})^{\dag} \circ {\cal U}_l^{\dag} [O] \cdot \underbrace{\overrightarrow{\bigcirc}_l {\cal U}_l^{\dag} \circ \overleftarrow{\bigcirc}_l {\cal U}_l}_{\rm Id} [(\ket{0}\bra{0})^{\otimes N}] \big\} \nonumber\\
&=& {\rm tr}\big\{ \overleftarrow{\bigcirc}_l {\cal U}_l \circ \overrightarrow{\bigcirc}_l ({\cal N}_l^{G})^{\dag} \circ {\cal U}_l^{\dag} [O] \cdot \underbrace{ \overleftarrow{\bigcirc}_l {\cal U}_l [(\ket{0}\bra{0})^{\otimes N}] }_\text{noiseless~computation} \big\} \nonumber\\
&=& {\rm tr}\big\{ \underbrace{\overrightarrow{\bigcirc}_l {\cal U}_{\geq l} \circ ({\cal N}_l^{G})^{\dag} \circ {\cal U}_{\geq l}^{\dag} }_\text{noise~propagation} [O] \cdot \underbrace{ \overleftarrow{\bigcirc}_l {\cal U}_l [(\ket{0}\bra{0})^{\otimes N}] }_\text{noiseless~computation} \big\} \nonumber\\
&=& \braket{ \overrightarrow{\bigcirc}_l (\widetilde{\cal N}_l^{G})^{\dag} [O] }_{\rm ideal}, \label{Schrodinger-to-Heisenberg-ZNE}
\end{eqnarray}
where ${\cal U}_{\geq l} = \overleftarrow{\bigcirc}_{m \geq l} {\cal U}_m$ represents layers $l,\ldots,L$ of the noiseless circuit  and $(\widetilde{\cal N}_l^{G})^{\dag} \equiv {\cal U}_{\geq l} \circ ({\cal N}_l^{G})^{\dag} \circ {\cal U}_{\geq l}^{\dag}$ is the Heisenberg-picture propagation of the noisy layer $({\cal N}_l^{G})^{\dag}$. The SPL noise is self-dual, ${\cal N}_l^{G\dag} = {\cal N}_l^{G}$, and the corresponding $\lambda$-parameters are the those for ${\cal N}_l$ multiplied by the noise rescaling factor $G$. The propagated noisy map $\widetilde{\cal N}_l^G \equiv (\widetilde{\cal N}_l)^G$ is self-dual too and reduces to the identity map in the limit $G \rightarrow 0$. 

Equation \eqref{Schrodinger-to-Heisenberg-ZNE} clarifies that $\{\bar{O}(G_i)\}_{i=1}^{R}$ are the noiseless estimations for the set of modified observables $\{\overrightarrow{\bigcirc}_l (\widetilde{\cal N}_l^{G_i})^{\dag} [O]\}_{i=1}^{R}$. In a general quantum circuit, ${\cal U}_{\geq l}^{\dag} [\sigma_{\boldsymbol{\beta}}] = \sum_{\boldsymbol{\alpha}} c_{\boldsymbol{\alpha\beta}}^{(l)} \sigma_{\boldsymbol{\alpha}}$ is a sum of Pauli strings $\{\sigma_{\boldsymbol{\alpha}}\}_{\boldsymbol{\alpha}}$ and each of those Pauli strings $\sigma_{\boldsymbol{\alpha}}$ gets scaled by the corresponding fidelity $f_{l\boldsymbol{\alpha}}^G$ defined through Eq.~\eqref{Pauli-fidelity} when affected by $({\cal N}_l^G)^{\dag}$. Therefore, $(\widetilde{\cal N}_l^G)^{\dag}[\sigma_{\boldsymbol{\beta}}] \equiv {\cal U}_{\geq l} ({\cal N}_l^G)^{\dag} {\cal U}_{\geq l}^{\dag} [\sigma_{\boldsymbol{\beta}}]$ represents a complex sum of rescaled Pauli stings. In operator $\overrightarrow{\bigcirc}_l (\widetilde{\cal N}_l^G)^{\dag} [O]$, the summands are rescaled by products of various fidelities, so the average value $\braket{ \overrightarrow{\bigcirc}_l \widetilde{\cal N}_l^{G} [O] }_{\rm ideal}$ is a sum of exponential functions of $G$. If we deal with a Clifford circuit and a Pauli observable $O = \sigma_{\boldsymbol{\beta}}$, then we have only one summand because $(\widetilde{\cal N}_l)^{\dag}[\sigma_{\boldsymbol{\beta}}] = f_{l\boldsymbol{\beta}(l)}^G \sigma_{\boldsymbol{\beta}}$, where $\boldsymbol{\beta}(l)$ is the index of the Pauli string ${\cal U}_{\geq l}^{\dag} [\sigma_{\boldsymbol{\beta}}] \equiv \sigma_{\boldsymbol{\beta}(l)}$. We get the exactly exponential observable rescaling $\overrightarrow{\bigcirc}_l (\widetilde{\cal N}_l^{G})^{\dag} [\sigma_{\boldsymbol{\beta}}] = \prod_l f_{l\boldsymbol{\beta}(l)}^G \sigma_{\boldsymbol{\beta}}$, justifying the use of exponential extrapolation $F(G)$ for the estimations $\{\bar{O}(G_i)\}_{i=1}^{R}$ in the case of Clifford circuits and Pauli observables. Note that the product $\prod_l f_{l\boldsymbol{\beta}(l)}^G \approx \prod_l \gamma_l^{-G/2} \equiv \gamma^{-G/2}$ for a random Clifford circuit in view of the geometric mean for fidelities \eqref{geometric-mean-of-fidelities}. For the same reason, in a general scenario of non-Clifford circuits, each exponential summand in $\overrightarrow{\bigcirc}_l (\widetilde{\cal N}_l^G)^{\dag}[O]$ scales roughly as $\prod_l \gamma_l^{-G/2} \exp\left( \pm \frac{G}{2} \sqrt{\sum_{\lambda_{\boldsymbol{\alpha}} \in {\rm SPL}({\cal N}_l)} \lambda_{\boldsymbol{\alpha}}^2} \right) \approx \gamma^{-G/2} \exp\left( \pm \frac{G}{2} \sqrt{\sum_{\lambda_{\boldsymbol{\alpha}} \in \cup_l {\rm SPL}({\cal N}_l)} \lambda_{\boldsymbol{\alpha}}^2} \right)$. 

The exponential extrapolation $F(G)$ leads to two types of errors in the mitigated value $\braket{O}_{\rm mitigated}$: a random error originating from imprecise estimations $\{\bar{O}(G_i) \pm \Delta\bar{O}(G_i) \}_{i=1}^{R}$ due to a finite budget of measurement shots, and a systematic error originating from the fact that the exponential extrapolation itself cannot precisely reproduce the sum of various exponents in the expansion for $\overrightarrow{\bigcirc}_l (\widetilde{\cal N}_l^{G})^{\dag} [O]$. To distinguish the random error, we consider the case Clifford circuits and Pauli observables, in which the exponential extrapolation is consistent, i.e., provides exactly the sought value $F(0) = \braket{O}_{\rm ideal}$ in the limit of infinitely many measurement shots available. The systematic error is evaluated via a typical spread of exponents in the expansion for $\overrightarrow{\bigcirc}_l (\widetilde{\cal N}_l^{G})^{\dag} [O]$.

\subsubsection{Random error} \label{section-zne-pea-random-error}

Consider a Clifford circuit $\overleftarrow{\bigcirc}_l {\cal U}_l$ and a Pauli observable $O$ stabilizing the circuit output, i.e., $O \ket{\psi} = \ket{\psi}$ for $\ket{\psi} = \overleftarrow{\bigcirc}_l U_l \ket{0}^{\otimes N}$ and $\braket{O}_{\rm ideal} = 1$. The noisy average value $\braket{ \overrightarrow{\bigcirc}_l \widetilde{\cal N}_l^{G_i} [O] }_{\rm ideal} = \prod_l f_{l\boldsymbol{\beta}(l)}^{G_i} \equiv K^{G_i}$, where $K = \prod_l f_{l\boldsymbol{\beta}(l)}$. The estimation $\bar{O}(G_i)$ obtained with $|{\sf S}_i|$ shots fluctuates around $K^{G_i}$ with $\Delta\bar{O}(G_i) = |{\sf S}_i|^{-1/2}$. Our goal is to deduce the minimum random error attainable with the use of \textit{exponential} extrapolation. In practice, this means linear extrapolation for the logarithm of experimental values. Provided $\Delta\bar{O}(G_i) \ll \bar{O}(G_i)$, the natural logarithm $\ln [\bar{O}(G_i) + \xi_i \Delta\bar{O}(G_i)] \approx \ln \bar{O}(G_i) + \xi_i \Delta\bar{O}(G_i) / \bar{O}(G_i)$ fluctuates around $\ln \bar{O}(G_i) \approx G_i \ln K$ with error $\Delta\bar{O}(G_i) / \bar{O}(G_i)$ (here, $\xi_i$ is a normally distributed random variable with mean $0$ and standard deviation $1$). 

The linear extrapolation $y = a x + b$ through $R$ points $\{(x_i, y_i + \xi_i \Delta y_i)\}_{i=1}^R$ via the least squares fitting yields $y(0) = b = \dfrac{\sum_i x_i^2 \sum_j (y_j + \xi_j \Delta y_j) - \sum_i x_i \sum_j x_j (y_j + \xi_j \Delta y_j)}{R\sum_i x_i^2 - (\sum_i x_i)^2}$~\cite{york_least-squares_1966}. For i.i.d. $\xi_i$ from the standard normal distribution we get the standard error $\Delta b = \dfrac{ \{ \sum_j [\sum_i x_i (x_i - x_j)]^2 (\Delta y_j)^2 \}^{1/2}}{R\sum_i x_i^2 - (\sum_i x_i)^2}$. Applying this result to the points $x_i = G_i$, $y_i = G_i \ln K$, $\Delta y_i = \Delta\bar{O}(G_i) / \bar{O}(G_i) = K^{-G_i} |{\sf S}_i|^{-1/2}$, we get $\mathbb{E}b = 0$ and $F(0) = e^{\xi \Delta b} \approx 1 + \xi \Delta b$ fluctuating around the ideal value $1$ with the random extrapolation error 
\begin{eqnarray}
\Delta F(0) = \Delta b &=& \frac{ \bigg\{ \sum\limits_j \big[\sum_i G_i (G_i - G_j) \big]^2 \dfrac{1}{K^{2G_j} |{\sf S}_j|} \bigg\}^{1/2}}{R\sum_i G_i^2 - (\sum_i G_i)^2} \label{ZNE-random-error} \\
& \approx & \frac{ \bigg\{ \sum_j \big[\sum_i G_i (G_i - G_j)\big]^2 \dfrac{\exp(\varepsilon G_j N L)}{|{\sf S}_j|} \bigg\}^{1/2}}{R\sum_i G_i^2 - (\sum_i G_i)^2}, \nonumber
\end{eqnarray}

\noindent where in the second line we have taken into account the typical noisy estimation $K \approx \gamma^{-1/2} = e^{-\varepsilon N L / 2}$. 

Given a finite number of measurement shots $M$, one can find the optimal ZNE setting by solving the constrained optimization problem: minimize $\Delta F(0)$ over sets $\{G_i\}_{i=1}^R$ of $R$ noise scaling factors $G_i \geq 1$ and measurement shots $\{|{\sf S}_i|\}_{i=1}^R$ under the constraint $M = \sum_{i=1}^R |{\sf S}_i|$. For fixed noise scaling factors $\{G_i\}_{i=1}^R$, the optimal arrangement of shots $\{|{\sf S}_i^{\ast}|\}_{i=1}^R$ is straightforward to find: $|{\sf S}_j^{\ast}| \propto \left\vert \sum_i G_i (G_i - G_j) \right\vert K^{-G_j}$ (see details in Appendix~\ref{appendix-ZNE-random}). The minimal relative error in the noise mitigation value reads
\begin{eqnarray}
\min \Delta F(0) & = & \min\limits_{\{G_i\}_{i=1}^R} \frac{ \sum_{j} \left\vert \sum_i G_i (G_i - G_j) \right\vert K^{-G_j} }{R\sum_i G_i^2 - (\sum_i G_i)^2} \cdot \frac{1}{\sqrt{M}} \label{ZNE-random-error-min} \\
& \approx & \min\limits_{\{G_i\}_{i=1}^R} \frac{ \sum_{j} \left\vert \sum_i G_i (G_i - G_j) \right\vert e^{\varepsilon G_j N L / 2} }{R\sum_i G_i^2 - (\sum_i G_i)^2} \cdot \frac{1}{\sqrt{M}}. \nonumber
\end{eqnarray}

\noindent On the one hand, the noise scaling factors cannot be all concentrated near the value $1$ because of the vanishing denominator in $\Delta F(0)$. On the other, the farther the noise scaling factor $G_j$ deviates from $1$, the more demanding the need to increase the allotted number of shots $|{\sf S}_j|$ to compensate for the exponentially growing term $\exp(\varepsilon G_j N L)$ becomes. Numerics suggest that the optimization with several extrapolation points ($R=3,4$) gives the same minimum in~\eqref{ZNE-random-error-min} as that for $R=2$. The case $R=2$ is exactly solvable and the straightforward analysis yields the optimal scaling factors $G_1^{\ast} = 1$ and $G_2^{\ast} = 1 +[1+W(1/e)] / {\ln (1/K)} \approx 1 + 1.278 / {\ln (1/K)} \approx 1 + \frac{2.557}{\varepsilon N L}$, where $W$ is the principal branch of the Lambert W function (see details in Appendix~\ref{appendix-ZNE-random}). The resulting random error reads
\begin{equation} \label{ZNE-minimal-random-error}
    \delta_{\rm random} = \min \Delta F(0) = \frac{1+[W(1/e)]^{-1} \ln(1/K)}{K\sqrt{M}} \approx \frac{(1+1.795 \varepsilon N L)\exp(\varepsilon N L / 2)}{\sqrt{M}}.
\end{equation}

The typical random error in evaluating an observable $O$, $\|O\|\leq 1$, in a noiseless quantum computation is $\frac{1}{\sqrt{M}}$. Comparing this value against the random error \eqref{ZNE-minimal-random-error} in the noise-mitigated estimation, we infer the minimum ZNE sampling overhead $\Gamma^{\ast} = [1 + 3.591 \ln(1/K)]^2 K^{-2} = (1+1.795 \varepsilon N L)^2 \exp(\varepsilon N L)$.

\subsubsection{Systematic error} \label{section-zne-systematic-error}

\textbf{Effect of imprecisely learned noise model and noise instability}. Similarly to PEC, ZNE with probabilistic error amplification is sensitive to the fluctuations in noise parameters and the imprecision in noise learning. Suppose the noise during the data acquisition differs from that in the learned model: $K'$ is the actual average value of a stablizer observable in a noisy Clifford circuit without noise amplification, and $K^{G_i - 1}$ is a multiplier in the observable average value for a selected noise factor $G_i$ due to amplification of the learned noise ($K' \neq K$). Since $\ln ( K'K^{G_i-1} ) = \ln(K'/K) + G_i \ln K$, the exponential extrapolation in ZNE results in the systematic error $\exp(\ln(K'/K)) - 1 = \frac{K'}{K} - 1$. Since $K \approx \gamma^{-1/2}$, this systematic error reduces to the one for PEC, namely, $\frac{1}{2} 
\varepsilon N \sqrt{L} \Theta$ with the standard deviation in the relative error per layer per qubit $\Theta$ (provided $\varepsilon N \sqrt{L} \Theta \ll 1$, see Sec.~\ref{section-pec-systematic-error}).

\textbf{Effect of extrapolation}. In ZNE, there is an additional systematic error arising from the fact that the exponential extrapolation leads to a bias for a general signal that decays non-exponentially in noise strength~\cite{cai_multi-exponential_2021}. To analytically study this source of error, we consider the following ``mirrored'' non-Clifford circuit. The circuit starts with a noiseless layer of Hadamard gates $H^{\otimes N}$ followed by a noiseless layer of $N_{T}$ non-Clifford $T$-gates scattered among the qubits (identity transformation for the remaining $N-N_T$ qubits). Then we have a depth-$\frac{L}{2}$ random Clifford circuit with SPL noise, $\overleftarrow{\bigcirc}_{l=1}^{L/2} {\cal U}_l \circ {\cal N}_l$. This concludes the half of the circuit, and the second half just inverts the unitary part of the first half, i.e., is given by the map $\overleftarrow{\bigcirc}_{l=L/2 + 1}^{L} {\cal U}_{L+1-l}^{\dag} \circ {\cal N}_l$ followed by $N_{T}$ non-Clifford $T^{\dag}$ gates and a layer of Hadamard gates $H^{\otimes N}$. We consider the observable $O = Z^{\otimes N}$ so that in the noiseless case, $\braket{O}_{\rm ideal} = 1$.

Since $THZHT^{\dag} = \frac{1}{\sqrt{2}}(X + Y)$, we have $2^{N_T}$ different Pauli strings propagating through the Clifford part of the circuit in the Heisenberg picture. At the beginning and at the end of the Clifford evolution, these Pauli strings contain $X$ or $Y$ operators at qubits subjected to $T$ gates and $X$ operators elsewhere. Each of the Pauli strings gathers a specific fidelity from the SPL noise. Since $\bra{0}HT^{\dag}XTH\ket{0} = \bra{0}HT^{\dag}YTH\ket{0} = \frac{1}{\sqrt{2}}$, the noise-$G$ average value 
\begin{equation} \label{O-sum-of-exponents}
\braket{O}(G) = \frac{1}{2^{N_T}} \sum_{n=1}^{2^{N_T}} \prod_{l=1}^L f_{l\boldsymbol{\beta}_n(l)}^G,
\end{equation}

\noindent which is a sum of $2^{N_T}$ exponential functions in $G$. One can readily verify by simple examples of a sum of exponents [such as $f(G) = \frac{1}{2}(e^{-G}+ e^{-2G})$] that the exponential extrapolation through points $G_i \geq 0$ cannot exactly reproduce the value $f(0)= 1$ and actually leads to its underestimation. In what follows, we evaluate the extrapolation error for Eq.~\eqref{O-sum-of-exponents}.

As explained in Sec.~\ref{section-noise-model}, each fidelity $f_{l\boldsymbol{\beta}_n(l)}$ in Eq.~\eqref{Pauli-fidelity} can be thought of as a random variable $\exp(-2 \sum_{\boldsymbol{\alpha}} \lambda_{\boldsymbol{\alpha}}^{(l)} \xi_{\boldsymbol{\alpha}}^{(l)})$, where $\xi_{\boldsymbol{\alpha}}^{(l)} \in \{0,1\}$ is the uniform random variable [$p(\xi_{\boldsymbol{\alpha}}^{(l)} = 0) = p(\xi_{\boldsymbol{\alpha}}^{(l)} = 1) = \frac{1}{2}$] reflecting whether the Pauli string in the dynamical observable commutes or anticommutes with the jump operator $\sigma_{\boldsymbol{\alpha}}$ associated with the relaxation rate $\lambda_{\boldsymbol{\alpha}}^{(l)}$. Since the mean $\mathbb{E}\xi_{\boldsymbol{\alpha}}^{(l)} = \frac{1}{2}$ and the variance $\mathbb{D}\xi_{\boldsymbol{\alpha}}^{(l)} = \frac{1}{4}$, the random variable $\Xi \equiv 2 \sum_l \sum_{\boldsymbol{\alpha}} \lambda_{\boldsymbol{\alpha}}^{(l)} \xi_{\boldsymbol{\alpha}}^{(l)}$ has the mean $\mathbb{E}\Xi = \sum_l \sum_{\boldsymbol{\alpha}} \lambda_{\boldsymbol{\alpha}}^{(l)}$ and the variance $\mathbb{D} \Xi = \sum_l \sum_{\boldsymbol{\alpha}} ( \lambda_{\boldsymbol{\alpha}}^{(l)} )^2$. The random variable $\prod_{l=1}^L f_{l\boldsymbol{\beta}_n(l)}^G = \exp[-G\Xi]$, so $\braket{O}(G)$ is a sum of $2^{N_T}$ random variables, namely, $\braket{O}(G) = \frac{1}{2^{N_T}} \sum_{n=1}^{2^{N_T}} \exp(- G \Xi_n)$ with i.i.d. random variables $\Xi_n$, whose distribution is well approximated by a normal one with mean $\mathbb{E}\Xi = \sum_l \sum_{\boldsymbol{\alpha}} \lambda_{\boldsymbol{\alpha}}^{(l)} \equiv \sum_{\lambda_{\boldsymbol{\alpha}} \in \cup_l {\rm SPL}({\cal N}_l)} \lambda_{\boldsymbol{\alpha}} \approx \frac{1}{2} \varepsilon N L $ and variance $\mathbb{D} \Xi = \sum_l \sum_{\boldsymbol{\alpha}} ( \lambda_{\boldsymbol{\alpha}}^{(l)} )^2 \equiv \sum_{\lambda_{\boldsymbol{\alpha}} \in \cup_l {\rm SPL}({\cal N}_l)} \lambda_{\boldsymbol{\alpha}}^2 \approx \frac{1}{18} \varepsilon^2 N L$ if the one-qubit (two-qubit) relaxation rates $\lambda_{\boldsymbol{\alpha}}^{(l)}$ in the SPL model are normally distributed with both mean and standard deviation equal to $\varepsilon/12$ ($\varepsilon/36$) similarly to the relaxation rates in the experiment~\cite{van_den_berg_probabilistic_2023}. Since $\sqrt{\mathbb{D} \Xi} \approx 1.4 \times 10^{-3} \ll 1$ if $N=L=100$ and $\varepsilon = 0.16\%$, we can resort to the series expansion. The logarithm of the noisy value is
\begin{eqnarray}
y_i &=& \ln \Big( \frac{1}{2^{N_T}} \sum_{n=1}^{2^{N_T}} \exp(-G_i \Xi_n) \Big) \nonumber\\
&=& \ln \Big( \exp(-G_i \mathbb{E}\Xi) \frac{1}{2^{N_T}} \sum_{n=1}^{2^{N_T}} \exp[-G_i (\Xi_n - \mathbb{E}\Xi)] \Big) \nonumber\\
&\approx& -G_i \mathbb{E}\Xi + \ln \Big( \frac{1}{2^{N_T}} \sum_{n=1}^{2^{N_T}} \left[ 1 - G_i (\Xi_n - \mathbb{E}\Xi) + \tfrac{1}{2} G_i^2 (\Xi_n - \mathbb{E}\Xi)^2 \right] \Big) \nonumber\\
&\approx& -G_i \mathbb{E}\Xi + \ln \bigg( 1 - \frac{G_i \sqrt{\mathbb{D}\Xi} }{2^{N_T/2}} + \tfrac{1}{2} G_i^2 \mathbb{D}\Xi \bigg) \nonumber\\
&\approx& -G_i \mathbb{E}\Xi - \frac{G_i \sqrt{\mathbb{D}\Xi} }{2^{N_T/2}} + \tfrac{1}{2} G_i^2 \mathbb{D}\Xi - \frac{G_i^2 \mathbb{D}\Xi }{2^{N_T + 1}}. \nonumber
\end{eqnarray}

\noindent The linear terms $\propto G_i$ in $y_i$ do not contribute to the estimation bias $\Delta F(0) \equiv F(0)-1$ because they describe exactly exponential decay in $\braket{O}(G)$. Nonlinear terms $\propto G_i^2$ in $y_i$ lead to the extrapolation error 
\begin{eqnarray}
|\Delta F(0)| = |\Delta b| &=& \frac{\mathbb{D}\Xi}{2} \Big( 1 - \frac{1}{2^{N_T}} \Big) \left\vert \frac{ (\sum_i G_i^2)^2 - (\sum_i G_i) (\sum_i G_i^3) }{R\sum_i G_i^2 - (\sum_i G_i)^2} \right\vert \nonumber \\
& \geq & \frac{\mathbb{D}\Xi}{2} \Big( 1 - \frac{1}{2^{N_T}} \Big) G_1 G_2 \nonumber \\
& = & \frac{G_1 G_2}{2} \Big( 1 - \frac{1}{2^{N_T}} \Big) \sum_{\lambda_{\boldsymbol{\alpha}} \in \cup_l \text{SPL}({\cal N}_l)} \lambda_{\boldsymbol{\alpha}}^2 \label{ZNE-systematic-error},
\end{eqnarray}

\noindent where the inequality in the second line is proven in Appendix~\ref{appendix-ZNE-systematic}. It is also shown in Appendix~\ref{appendix-ZNE-systematic} that $\Delta F(0)$ is negative meaning the ZNE tends to underestimate the observable in this example (since all the Pauli strings have positive weights in the observable). If $N_T \gg 1$, the noise amplification factors $G_1$ and $G_2$ are chosen as to minimize the sampling overhead ($G_1 = G_1^{\ast}$ and $G_2 = G_2^{\ast}$), and the one-qubit (two-qubit) relaxation rates $\lambda_{\boldsymbol{\alpha}}^{(l)}$ in the SPL model are normally distributed with both mean and standard deviation equal to $\varepsilon/12$ ($\varepsilon/36$), similarly to the experiment~\cite{van_den_berg_probabilistic_2023}, then
\begin{equation}
\delta_{\rm systematic}^{(2)} = |\Delta F(0)| \approx \left( 1 + \frac{2.557}{\varepsilon N L} \right) \frac{\varepsilon^2 NL}{36}.
\end{equation}

\subsection{\label{section-tem} Tensor-network error mitigation (TEM)}

\subsubsection{Method}

In this section, we present the derivation of all errors associated to TEM and summarized in Table~\ref{table-summary}. We begin by recalling the method. To get a noiseless estimation $\bar{O}_{\rm mitigated}$ of an observable $O$ without any interventions in the dynamics of a noisy quantum computation, one can actually measure another observable at the output of the noisy circuit:
\begin{eqnarray}
{\rm tr}\big\{ O \cdot \underbrace{\overleftarrow{\bigcirc}_l {\cal U}_l [(\ket{0}\bra{0})^{\otimes N}]}_\text{noiseless~computation} \big\} &=& {\rm tr}\big\{ \overrightarrow{\bigcirc}_l {\cal U}_l^{\dag} [O] \cdot \underbrace{(\ket{0}\bra{0})^{\otimes N}}_\text{initial~state} \big\} \nonumber\\
&=& {\rm tr}\big\{ \overrightarrow{\bigcirc}_l {\cal U}_l^{\dag} [O] \cdot \underbrace{\overrightarrow{\bigcirc}_l {\cal N}_l^{-1} \circ {\cal U}_l^{\dag} \circ \overleftarrow{\bigcirc}_l {\cal U}_l \circ {\cal N}_l}_{\rm Id} [(\ket{0}\bra{0})^{\otimes N}] \big\} \nonumber\\
&=& {\rm tr}\big\{ \overleftarrow{\bigcirc}_l {\cal U}_l \circ ({\cal N}_l^{-1})^{\dag} \circ \overrightarrow{\bigcirc}_l {\cal U}_l^{\dag} [O] \cdot \underbrace{ \overleftarrow{\bigcirc}_l {\cal U}_l \circ {\cal N}_l [(\ket{0}\bra{0})^{\otimes N}] }_\text{noisy~computation} \big\} \nonumber\\
&=& {\rm tr}\big\{ \underbrace{\overleftarrow{\bigcirc}_l {\cal U}_{\geq l} \circ ({\cal N}_l^{-1})^{\dag} \circ {\cal U}_{\geq l}^{\dag} }_\text{inverse-noise~propagation} \!\!\![O] \cdot \underbrace{ \overleftarrow{\bigcirc}_l {\cal U}_l \circ {\cal N}_l [(\ket{0}\bra{0})^{\otimes N}] }_\text{noisy~computation} \big\} \nonumber\\
&=& \braket{ \overleftarrow{\bigcirc}_l (\widetilde{\cal N}_l^{-1})^{\dag} [O] }_{\rm noisy}, \label{Schrodinger-to-Heisenberg-TME}
\end{eqnarray}
where ${\cal U}_{\geq l} = \overleftarrow{\bigcirc}_{m\geq l} {\cal U}_m$ represents layers $l,\ldots,L$ of the noiseless circuit and $\widetilde{\cal N}_l \equiv {\cal U}_{\geq l} \circ {\cal N}_l \circ {\cal U}_{\geq l}^{\dag}$ is the Heisenberg-picture propagation of the noisy layer ${\cal N}_l$. Therefore, the modified observable is $\overleftarrow{\bigcirc}_l (\widetilde{\cal N}_l^{-1})^{\dag} [O]$. 
The point of TEM is to make the identification of the modified observable efficient via tensor-network machinery. 

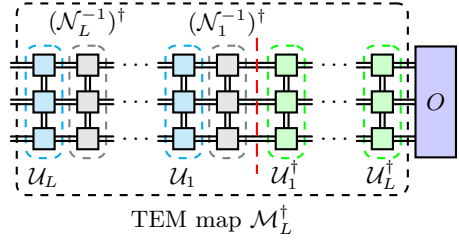
\begin{figure}
\centering
\begin{quantikz}[row sep=0.1cm, column sep=0.1cm]
\setwiretype{c} \gategroup[3,steps=24,style={dashed,rounded
corners,fill=white!20, inner
xsep=-3pt, inner
ysep=12pt,xshift=2.5pt,yshift=1pt},background,label style={label
position=below,anchor=north,yshift=-0.2cm}]{TEM map ${\cal M}_L^{\dag}$} & & & \gate[style={fill=cyan!20}]{} \wire[d][2]{c} \gategroup[3,steps=1,style={cyan,dashed,rounded
corners,fill=white!20, inner
xsep=-3pt, inner
ysep=-3pt,xshift=0pt,yshift=0pt},background,label style={label
position=below,anchor=north,yshift=-0.2cm}]{${\cal U}_L$} & & & \gate[style={fill=gray!20}]{} \wire[d][2]{c} \gategroup[3,steps=1,style={gray,dashed,rounded
corners,fill=white!20, inner
xsep=-3pt, inner
ysep=-3pt,xshift=0pt,yshift=0pt},background,label style={label
position=above,anchor=south,yshift=-0.2cm}]{$({\cal N}_L^{-1})^{\dag}$} & &  \ \ldots \ & & \gate[style={fill=cyan!20}]{} \wire[d][2]{c} \gategroup[3,steps=1,style={cyan,dashed,rounded
corners,fill=white!20, inner
xsep=-3pt, inner
ysep=-3pt,xshift=0pt,yshift=0pt},background,label style={label
position=below,anchor=north,yshift=-0.2cm}]{${\cal U}_1$} & & & \gate[style={fill=gray!20}]{} \wire[d][2]{c} \gategroup[3,steps=1,style={gray,dashed,rounded
corners,fill=white!20, inner
xsep=-3pt, inner
ysep=-3pt,xshift=0pt,yshift=0pt},background,label style={label
position=above,anchor=south,yshift=-0.2cm}]{$({\cal N}_1^{-1})^{\dag}$} & & \slice{} & & & \gate[style={fill=green!20}]{} \wire[d][2]{c} \gategroup[3,steps=1,style={green,dashed,rounded
corners,fill=white!20, inner
xsep=-3pt, inner
ysep=-3pt,xshift=0pt,yshift=0pt},background,label style={label
position=below,anchor=north,yshift=-0.1cm}]{${\cal U}_1^{\dag}$} & & \ \ldots \ & & \gate[style={fill=green!20}]{} \wire[d][2]{c} \gategroup[3,steps=1,style={green,dashed,rounded
corners,fill=white!20, inner
xsep=-3pt, inner
ysep=-3pt,xshift=0pt,yshift=0pt},background,label style={label
position=below,anchor=north,yshift=-0.1cm}]{${\cal U}_L^{\dag}$} & & & \gate[3,disable auto height,style={fill=blue!20}]{O} \\
\setwiretype{c} & & & \gate[style={fill=cyan!20}]{} & & & \gate[style={fill=gray!20}]{} & & \ \ldots \ & & \gate[style={fill=cyan!20}]{} & & & \gate[style={fill=gray!20}]{} & & & & & \gate[style={fill=green!20}]{} & & \ \ldots \ & & \gate[style={fill=green!20}]{} & & & \\
\setwiretype{c} & & & \gate[style={fill=cyan!20}]{} & & & \gate[style={fill=gray!20}]{} & & \ \ldots \ & & \gate[style={fill=cyan!20}]{} & & & \gate[style={fill=gray!20}]{} & & & & & \gate[style={fill=green!20}]{} & & \ \ldots \ & & \gate[style={fill=green!20}]{} & & &
\end{quantikz}
\caption{Modification of observable $O$ by the TEM map ${\cal M}_L^{\dag}$ in the Heisenberg picture (maps act from right to left). Vertical line (red) depicts the starting map ${\cal M}_0^{\dag} = {\rm Id}$ in the iterative construction of TEM map, ${\cal M}_l^{\dag} = {\cal U}_l ({\cal N}_l^{-1})^{\dag} {\cal M}_{l-1}^{\dag} {\cal U}_l^{\dag}$.}
\label{figure-tem-map-heisenberg}
\end{figure}

Reference \cite{filippov_scalable_2023} suggests to represent the map ${\cal M}_L^{\dag} \equiv \overleftarrow{\bigcirc}_l (\widetilde{\cal N}_l^{-1})^{\dag}$ via the recurrent relation ${\cal M}_l^{\dag} = {\cal U}_l \circ ({\cal N}_l^{-1})^{\dag} \circ {\cal M}_{l-1}^{\dag} \circ {\cal U}_l^{\dag}$ with ${\cal M}_0^{\dag} = {\rm Id}$, see Fig.~\ref{figure-tem-map-heisenberg}. Each of the constituent maps ${\cal U}_l$, $({\cal N}_l^{-1})^{\dag}$, and ${\cal U}_l^{\dag}$ adopts a matrix-product-operator representation (MPO) with bond dimension $4$ if ${\cal U}_l$ is composed of two-qubit entangling gates acting on neighbouring qubits and ${\cal N}_l$ is a SPL noise layer~\cite{filippov_scalable_2023}. By sequentially applying the iterative formula, we get a bond dimension for ${\cal M}_l$ that grows exponentially in $l$ if the multiplication of MPOs is treated exactly. To avoid the exponential complexity, TEM provides a collection of noise mitigation maps $\{{\cal M}_l^{\dag (\chi)}\}_{l=1}^L$ for different depths via the iterative procedure ${\cal M}_l^{\dag (\chi)} = {\cal C}_{\chi}\left( {\cal U}_l \circ ({\cal N}_l^{-1})^{\dag} \circ {\cal M}_{l-1}^{\dag (\chi)} \circ {\cal U}_l^{\dag} \right)$ involving the bond dimension compression ${\cal C}_{\chi}$ if the bond dimension of ${\cal U}_l \circ ({\cal N}_l^{-1})^{\dag} \circ {\cal M}_{l-1}^{\dag (\chi)} \circ {\cal U}_l^{\dag}$ exceeds some predefined value $\chi$. [For the sake of a faster classical calculation, the compression can be divided into two steps: ${\cal M}_{l-1}^{\prime \dag (\chi)} = {\cal C}_{\chi} \left( ({\cal N}_l^{-1})^{\dag} \circ {\cal M}_{l-1}^{\dag (\chi)} \right)$ and ${\cal M}_{l}^{\dag (\chi)} = {\cal C}_{\chi} \left( {\cal U}_l \circ {\cal M}_{l-1}^{\prime \dag (\chi)} \circ {\cal U}_l^{\dag} \right)$.] The most common compression techniques in linear tensor networks are based on the singular value decomposition (SVD) and on variational search. The bond dimension compression neglects the smallest contributions to the noise mitigation map thus ensuring its computational efficiency while introducing a bias in the observable estimation $\bar{O}_{\rm mitigated} = \braket{ {\cal M}_L^{\dag (\chi)} [O] }_{\rm noisy}$.

Measurements in TEM are optimized to make the estimation $\braket{ {\cal M}_L^{\dag(\chi)} [O] }_{\rm noisy}$ most efficient. The complexity of the modified observable ${\cal M}_L^{\dag(\chi)} [O]$ can in general be smaller, equal, or greater than that of the original observable $O$. The last situation takes place, for instance, if $O$ is a single Pauli observable and the circuit is non-Clifford. In this case, projective measurements in the eigenbasis of the original observable may still suffice for an accurate estimation of the mitigated observable despite the fact that ${\cal M}_L^{\dag(\chi)} [O]$ is a sum of many (generally non-commuting) Pauli strings (see Secs.~\ref{section-tem-random-error} and \ref{section-tem-systematic-error}). This is because not only the modified observable but also the structure of the noisy density operator at the output of the noisy quantum computation affects the average value $\braket{ {\cal M}_L^{\dag(\chi)} [O] }_{\rm noisy}$. This opens an opportunity for a simplified measurement strategy, in which the individual Pauli components $\sigma_{\boldsymbol{\beta}}$ of the original observable $O=\sum_{\boldsymbol{\beta}} w_{\boldsymbol{\beta}} \sigma_{\boldsymbol{\beta}}$ are rescaled by the diagonal components $\big( {\cal M}_L^{\dag(\chi)} \big)_{\boldsymbol{\beta\beta}}$ of the TEM map. As we show in Sec.~\ref{section-tem-systematic-error}, this simplified measurement strategy induces a minor systematic error $\varepsilon^2 N L / 72$. Otherwise, ${\cal M}_L^{\dag(\chi)} [O]$ can be estimated with the use of generalized measurements (e.g., informationally complete ones that can be adjusted as to minimize the estimation error~\cite{garcia-perez-2021,glos2022adaptive,malmi2024enhanced,fischer2024dual,caprotti2024optimising}).

\subsubsection{Random error} \label{section-tem-random-error}

As in the analysis of ZNE's random error, TEM's random error can readily be evaluated in Clifford circuits. In the case of a Clifford circuit $\overleftarrow{\bigcirc}_l {\cal U}_l$ and a Pauli observable $O = \sigma_{\boldsymbol{\beta}}$, the modified observable adopts an exact representation, namely, ${\cal M}_L^{\dag} [O] = \prod_l f_{l\boldsymbol{\beta}(l)}^{-1} O \equiv K^{-1} O$, where $\boldsymbol{\beta}(l)$ is the index of the Pauli string evolved in the Heisenberg picture, ${\cal U}_{\geq l}^{\dag} [\sigma_{\boldsymbol{\beta}}] \equiv \sigma_{\boldsymbol{\beta}(l)}$. The estimate $\bar{O}_{\rm mitigated}$ is obtained by performing conventional local projective measurements in the eigenbasis of the original observable and multiplying all $M$ conventional outcomes $\pm 1$ by the same number $K^{-1} \equiv \prod_l f_{l\boldsymbol{\beta}(l)}^{-1}$. This incurs the estimation error $\Delta\bar{O}_{\rm mitigated} = \frac{1}{K\sqrt{M}}$ and the sampling overhead $\Gamma = K^{-2}$. In typical large-scale circuits, $K \approx \gamma^{-1/2} = \exp(-\varepsilon N L / 2)$, therefore $\delta_{\rm random} = \Delta\bar{O}_{\rm mitigated} = \exp(\varepsilon N L / 2) / \sqrt{M}$ and $\Gamma^{\ast} = \gamma \approx \exp(\varepsilon N L)$. 

Let us extend the analysis of TEM random error beyond Clifford circuits and Pauli observables. Consider the propagation of an observable $O = \sum_{\boldsymbol{\beta}} w_{\boldsymbol{\beta}} \sigma_{\boldsymbol{\beta}}$ through the TEM map. Every non-Clifford unitary layer ${\cal U}_l$ results in the branching of Pauli strings: ${\cal U}_l^{\dag} [\sigma_{\boldsymbol{\beta}}] = \sum_{\boldsymbol{\alpha}} c_{\boldsymbol{\alpha\beta}}^{(l)} \sigma_{\boldsymbol{\alpha}}$, where $\left( c_{\boldsymbol{\alpha\beta}}^{(l)} \right)_{\boldsymbol{\alpha,\beta}} \equiv C_l$ is a real orthogonal matrix. By sequentially applying the rightmost $L$ layers of the TEM map to $O$, we get the generalized Bloch vector $(\overrightarrow{\prod}_l C_l) {\bf w}$ for the operator ${\cal U}_1^{\dag} \circ \cdots \circ {\cal U}_L^{\dag} [O]$. In the absence of noise, the next layers of the TEM map just undo this transformation; however, in the presence of noise, the Pauli map ${\cal N}_l^{-1}$ rescales every Pauli string $\sigma_{\boldsymbol{\beta}}$ by its inverse fidelity, i.e., it acts with the matrix ${\rm diag}(\{f_{l\boldsymbol{\beta}}^{-1}\}_{\boldsymbol{\beta}}) \equiv D_l$ in the Bloch vector picture. The resulting transformation of the Bloch vector reads ${\bf w} \rightarrow (\overleftarrow{\prod}_l C_l^{-1} D_l) (\overrightarrow{\prod}_l C_l) {\bf w}$. Recall that $\ln f_{l\boldsymbol{\beta}} \approx -\frac{1}{2} \ln \gamma_l \pm \frac{1}{2} \sqrt{ \sum_{\lambda_{\boldsymbol{\alpha}} \in {\rm SPL}({\cal N}_l)} \lambda_{\boldsymbol{\alpha}}^2 }$ (see Sec.~\ref{section-noise-model}) and $\sum_{\lambda_{\boldsymbol{\alpha}} \in {\rm SPL}({\cal N}_l)} \lambda_{\boldsymbol{\alpha}}^2 = \frac{1}{18} \varepsilon^2 N$ if the one-qubit (two-qubit) relaxation rates $\lambda_{\boldsymbol{\alpha}}^{(l)}$ in the SPL model are normally distributed with both mean and standard deviation equal to $\varepsilon/12$ ($\varepsilon/36$) similarly to the experiment~\cite{van_den_berg_probabilistic_2023}. This means $\ln f_{l\boldsymbol{\beta}} \approx - \frac{1}{2} \varepsilon N \pm \frac{\varepsilon}{6} \sqrt{ \frac{N}{2} }$ and $\varepsilon \sqrt{N} \ll 1$ in the relevant scenarios ($\varepsilon \sqrt{N} = 0.016$ if $\varepsilon = 0.16\%$, $N=100$, despite the fact that the average number of errors in a depth-$100$ circuit is $\varepsilon N L = 16 \gg 1$). Since $\varepsilon \sqrt{N} \ll 1$, $D_l \approx \exp(\varepsilon N/2) \left( I + \frac{\varepsilon}{6} \sqrt{ \frac{N}{2} } \Xi_l \right)$, where $\Xi_l$ is a diagonal matrix whose diagonal elements are random variables with mean $0$ and standard deviation $\sim 1$. Note that the exponential factor in $D_l$ is layer-independent, whereas the perturbation is layer dependent. In the first-order expansion with respect to perturbations $\sim\varepsilon \sqrt{N}$, we get $(\overleftarrow{\prod}_l C_l^{-1} D_l) (\overrightarrow{\prod}_l C_l) = \exp(\varepsilon NL/2) \left[ I + \frac{\varepsilon}{6} \sqrt{ \frac{N}{2} } \sum_l C_{\geq l}^{-1} \Xi_l C_{\geq l} \right]$ with $C_{\geq l} = \overrightarrow{\prod}_{m\geq l} C_m$. The total Bloch vector transformation in TEM is therefore ${\bf w} \rightarrow \exp(\varepsilon NL/2) \left[ {\bf w} + \frac{\varepsilon}{6} \sqrt{ \frac{N}{2} } \sum_l {\bf w}_l \right]$, where ${\bf w}_l = C_{\geq l}^{-1} \Xi_l C_{\geq l} {\bf w}$ is a random vector with an average square norm $\mathbb{E}(\|{\bf w}_l\|^2) = \|{\bf w}\|^2$ that is independent of $l$. Assuming high complexity of the quantum circuit, the orthogonal matrices $\{C_{\geq l}\}_{l=1}^L$ form an approximate 2-design~\cite{harrow_random_2009}, so the average scalar product $\mathbb{E}({\bf w}_l \cdot {\bf w}_{l^{\prime}}) = \delta_{ll^{\prime}} \|{\bf w}\|^2$. Hence, $\mathbb{E} ( \| \sum_l {\bf w}_l \|^2 ) = L \|{\bf w}\|^2$ and typically $\| \sum_l {\bf w}_l \| \sim \sqrt{L} \|{\bf w}\|$. Finally, the TEM transformation for Bloch vectors is ${\bf w} \rightarrow \exp(\varepsilon NL/2) \left[ {\bf w} + \frac{\varepsilon}{6} \sqrt{ \frac{NL}{2} } {\bf w}^{\prime} \right]$ with $\|{\bf w}^{\prime}\| \approx \|{\bf w}\|$. In a typical large-scale scenario with $\varepsilon = 0.16\%$ and $N=L=100$, the weight $\frac{\varepsilon}{6} \sqrt{ \frac{NL}{2} } = 0.019 \ll 1$ justifying the use of the perturbation expansion. The TEM-modified observable ${\cal M}_L^{\dag}[O]$ is a sum of a rescaled original observable $\exp(\varepsilon NL/2) O$ and $\frac{\varepsilon}{6} \sqrt{ \frac{NL}{2} } \exp(\varepsilon NL/2) O'$ with $O' = \sum_{\boldsymbol{\beta}} w_{\boldsymbol{\beta}}^{\prime} \sigma_{\boldsymbol{\beta}}$. 

The estimation-theory considerations of noise mitigation cost~\cite{tsubouchi_universal_2023} allow general multi-qubit measurements. In this scenario, the observables $O$ and $O'$ can be measured separately at the output of the noisy quantum processor. By optimally distributing the total budget of $M$ shots between the estimations of $O$ and $O'$ (proportionally to their weights in the modified observable ${\cal M}_L^{\dag}[O]$), we get the minimum random error $\|{\bf w}\| \sqrt{ \Gamma / M}$, with sampling overhead $\Gamma = \left( 1 + \frac{\varepsilon}{6} \sqrt{ \frac{NL}{2} } \right)^2 \exp(\varepsilon NL) \approx \left( 1 + \frac{\varepsilon}{3} \sqrt{ \frac{NL}{2} } \right) \exp(\varepsilon NL)$, which means a deviation of $3.8\%$ from $\exp(\varepsilon NL) = 8.9 \times 10^6$ for $\varepsilon = 0.16\%$ and $N=L=100$. 

However, since general multi-qubit measurements are not available in quantum processors, a more cost-efficient strategy is to measure only the largest contribution to ${\cal M}_L^{\dag}[O]$, i.e., only those Pauli strings $\sigma_{\boldsymbol{\beta}}$ that are present in the original observable $O = \sum_{\boldsymbol{\beta}} w_{\boldsymbol{\beta}} \sigma_{\boldsymbol{\beta}}$. TEM provides a rescaling factor $\big( {\cal M}_L^{\dag} \big)_{\boldsymbol{\beta\beta}} \approx \exp(\varepsilon NL/2)$ for each $\sigma_{\boldsymbol{\beta}}$ by merely fixing the input and output indices of TEM map ${\cal M}_L^{\dag}$ to $\boldsymbol{\beta}$. The factors are scattered around the exponent $\exp(\varepsilon NL/2)$ with the standard deviation $\sim \frac{\varepsilon}{6} \sqrt{ \frac{NL}{2} } \exp(\varepsilon NL/2)$, which also grows exponentially in the circuit size. This is why the tensor-network contraction for ${\cal M}_L^{\dag}$ is needed to find these factors with high accuracy. The resulting sampling overhead in this experimentally friendly scenario is $\Gamma = \exp(\varepsilon NL)$ on average. The replacement of the true modified observable ${\cal M}_L^{\dag}[O] = {\cal M}_L^{\dag}[\sum_{\boldsymbol{\beta}} w_{\boldsymbol{\beta}} \sigma_{\boldsymbol{\beta}}] = \sum_{\boldsymbol{\beta}} w_{\boldsymbol{\beta}} {\cal M}_L^{\dag}[\sigma_{\boldsymbol{\beta}}]$ with its approximation $\sum_{\boldsymbol{\beta}} w_{\boldsymbol{\beta}} \big( {\cal M}_L^{\dag} \big)_{\boldsymbol{\beta\beta}} \sigma_{\boldsymbol{\beta}}$ is well justified since the measurement is performed on the highly noisy state $\varrho_{\rm noisy} \equiv \overleftarrow{\bigcirc}_l {\cal U}_l \circ {\cal N}_l [(\ket{0}\bra{0})^{\otimes N}]$ and the incurred systematic error is $\epsilon^2 N L / 72 = 0.036\%$ for $\varepsilon = 0.16\%$ and $N=L=100$ (see Sec.~\ref{section-tem-systematic-error}). 

\subsubsection{Optimality with respect to the sampling overhead} \label{section-tem-optimality}

It is instructive to compare TEM's sampling overhead with the fundamental cost bound of quantum error mitigation based on quantum estimation theory~\cite{tsubouchi_universal_2023}. The result of Ref.~\cite{tsubouchi_universal_2023} is built around the quantum Cram\'{e}r-Rao inequality for an unbiased estimator of the mitigated observable. Since the result concerns the number of measurement shots only and does not restrict classical postprocessing power in any way, it is applicable to the compression-free TEM that provides an unbiased estimator for the mitigated observable if the noise model is perfectly known. Theorem~1 in Ref.~\cite{tsubouchi_universal_2023} establishes the fact that the minimum number of shots $|{\sf S}|_{\min}$ required for an unbiased estimation of the mitigated observable with standard deviation $\delta$ grows exponentially in the circuit depth. For instance, if all unitary layers in a depth-$L$ quantum circuit are accompanied by global depolarizing noise $\varrho \rightarrow (1-p)\varrho + p \, {\rm tr}[\varrho] \frac{1}{2^N} I^{\otimes N}$, then for any nontrivial Pauli observable $|{\sf S}|_{\min} \geq \frac{1}{\delta^2} \left( 1 - (1-p)^L \right) (1-p)^{-2L} \approx \frac{1}{\delta^2} (1-p)^{-2L}$ if $pL \gg 1$ and $p \ll 1$~\cite{tsubouchi_universal_2023}, which implies the lower bound on the sampling overhead $\Gamma_{\min} \approx (1-p)^{-2L}$. Since the observable damping $K = (1-p)^L$, the modified observable ${\cal M}_L^{\dag} [O] = K^{-1} O = (1-p)^{-L} O$, and TEM's sampling overhead $\Gamma = (1-p)^{-2L}$ saturates the universal lower bound $\Gamma_{\min}$. As a side remark, we note that TEM with bond dimension $\chi=2$ is compression-free for global depolarizing noise~\cite{filippov_scalable_2023}.

In the case of local noise ${\cal N} = \Lambda^{\otimes N}$, Ref.~\cite{tsubouchi_universal_2023} reveals the exponential growth of the minimum sampling overhead in the total circuit area ($NL$) and specifies this finding for some standard noisy maps $\Lambda$ in the asymptotic regime $N,L \gg 1$. Namely, $\Gamma_{\min} \approx (1 + \frac{3}{2}p)^{NL}$ for local depolarizing noise with intensity $p$ ($\Lambda^{\dag}[X] = (1-p) X$, $\Lambda^{\dag}[Y] = (1-p) Y$, $\Lambda^{\dag}[Z] = (1-p) Z$) and $\Gamma_{\min} \approx (1 + p)^{NL}$ for local amplitude damping noise with intensity $p$ ($\Lambda^{\dag}[X] = \sqrt{1-p} X$, $\Lambda^{\dag}[Y] = \sqrt{1-p} Y$, $\Lambda^{\dag}[Z] = (1-p) Z + p I$). The local depolarizing noise is a particular case of SPL noise with $\gamma = (1 - p)^{-3NL/2} \approx (1 + \frac{3}{2}p)^{NL}$~\cite{temme-2017}, so the sampling overhead in TEM $\Gamma = \gamma$ again saturates the lower bound $\Gamma_{\min}$. The amplitude damping map is converted, via Pauli twirling, to the generalized amplitude damping map with the infinite-temperature bath (the unital map $\widetilde{\Lambda}^{\dag}[X] = \sqrt{1-p} X$, $\widetilde{\Lambda}^{\dag}[Y] = \sqrt{1-p} Y$, $\widetilde{\Lambda}^{\dag}[Z] = (1-p) Z$~\cite[section 8.3.5]{nielsen-2010}), which is SPL noise with $\gamma = (1-p)^{-NL} \approx (1+p)^{NL}$. So TEM's sampling overhead $\Gamma = \gamma$ saturates the lower bound $\Gamma_{\min}$ in this case too. These examples just illustrate the following general result.

\begin{proposition} \label{proposition-tem-overhead}
The sampling overhead $\Gamma$ in TEM asymptotically saturates the lower cost bound $\Gamma_{\min} = \mathbb{E}_{{\cal U}_1 \cdots {\cal U}_L} ( \delta^2 |{\sf S}|_{\min})$ for the unbiased estimation of Pauli observables with the standard deviation $\delta$ in large-scale quantum circuits with weak sparse Pauli-Lindblad noise and unitary layers ${\cal U}_1, \ldots, {\cal U}_L$ drawn from a set forming a unitary 2-design.
\end{proposition}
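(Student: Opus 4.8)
The plan is to show that, in the regime $\varepsilon\sqrt{N}\ll1\ll\varepsilon NL$ relevant to large circuits, both $\Gamma_{\rm TEM}$ and $\Gamma_{\min}$ collapse to the single noise-accumulation factor $\gamma=e^{\varepsilon NL}$ at leading exponential order, so that their ratio tends to $1$. On the TEM side this follows almost immediately from the earlier analysis: for a Clifford circuit and a Pauli observable $\sigma_{\boldsymbol\beta}$ the modified observable is exactly $K^{-1}\sigma_{\boldsymbol\beta}$ with $K=\prod_l f_{l\boldsymbol\beta(l)}$, giving $\Gamma_{\rm TEM}=K^{-2}$. The unitary 2-design hypothesis enters through the fidelities: a Heisenberg-propagated string $\sigma_{\boldsymbol\beta(l)}$ seen through a 2-design is a typical high-weight Pauli string, so by \eqref{geometric-mean-of-fidelities} each factor $f_{l\boldsymbol\beta(l)}$ concentrates on $\gamma_l^{-1/2}$ and $\ln K=\sum_l\ln f_{l\boldsymbol\beta(l)}$ concentrates on $-\tfrac12\ln\gamma$, with per-layer fluctuations of order $\sqrt{\sum_{\boldsymbol\alpha}\lambda_{\boldsymbol\alpha}^2}$. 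Hence $\mathbb{E}_{\cal U}(K^{-2})=\gamma\,(1+o(1))$ and $\Gamma_{\rm TEM}\to\gamma$.

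For the lower bound I would recast Tsubouchi's result as a quantum-Fisher-information computation. One embeds the estimation of $\mu=\braket{\sigma_{\boldsymbol\beta}}_{\rm ideal}$ into a one-parameter family $\rho_{\rm ideal}(\theta)$ normalized so that $\partial_\theta\mu=1$, i.e.\ $\partial_\theta\rho_{\rm ideal}=2^{-N}\sigma_{\boldsymbol\beta}$; propagating through the full noisy circuit ${\cal E}$ yields the family $\rho_{\rm noisy}(\theta)$, and the quantum Cram\'er--Rao bound gives $\delta^2|{\sf S}|_{\min}=1/F_Q[\rho_{\rm noisy}]$, so that $\Gamma_{\min}=\mathbb{E}_{\cal U}\big(1/F_Q\big)$. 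The structural input is that the signal enters the noisy state through $\partial_\theta\rho_{\rm noisy}=2^{-N}{\cal E}[\sigma_{\boldsymbol\beta}]$, and ${\cal E}[\sigma_{\boldsymbol\beta}]$ is damped by the very same fidelities $f_{l\boldsymbol\beta(l)}$ that appear in TEM, because the SPL layers are diagonal, self-dual Pauli maps.

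The heart of the argument is the evaluation of $F_Q$ in the strongly damped regime. Because $\varepsilon NL\gg1$, the noisy state lies exponentially close to $I/2^N$ along the signal direction, so I can use the maximally-mixed symmetric logarithmic derivative $L=2^N\partial_\theta\rho_{\rm noisy}$, which reduces the Fisher information to the clean form $F_Q=2^N\,{\rm tr}\big[(\partial_\theta\rho_{\rm noisy})^2\big]=2^{-N}{\rm tr}\big({\cal E}[\sigma_{\boldsymbol\beta}]^2\big)=\sum_{\boldsymbol\gamma}({\cal E})_{\boldsymbol\gamma\boldsymbol\beta}^2$. For a Clifford circuit this sum has a single term and equals $K^2$ exactly; for a general circuit the forward Bloch transformation $\overleftarrow{\prod}_l D_lC_l$ spreads $\sigma_{\boldsymbol\beta}$ over typical high-weight strings whose squared fidelities again average to $\gamma_l^{-1}$ per layer under the 2-design, so $F_Q=\gamma^{-1}(1+o(1))=K^2(1+o(1))$. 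The subleading prefactors $(1-K)$, $(1+\tfrac32 p)^{NL}/\gamma$, and the like that distinguish the three worked examples are precisely the next-order corrections to this maximally-mixed value and are negligible against the exponent. Consequently $\Gamma_{\min}=\mathbb{E}_{\cal U}(1/F_Q)=\gamma\,(1+o(1))$, and comparison with the TEM side yields $\Gamma_{\rm TEM}/\Gamma_{\min}\to1$, the asserted asymptotic saturation.

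The main obstacle I anticipate is controlling $F_Q$ for a \emph{general} SPL channel rather than the three special cases: the mixed-state Fisher information requires the symmetric logarithmic derivative, and I must verify that the corrections to the maximally-mixed value $F_Q=K^2$ are genuinely suppressed (by $K$ or by $\varepsilon\sqrt{N}$) and do not conspire with the 2-design average to shift the leading exponent. A related point is to make the concentration statements uniform enough that $\mathbb{E}_{\cal U}(K^{-2})$ and $\mathbb{E}_{\cal U}(1/F_Q)$ agree to leading exponential order in spite of the Jensen-type gaps between $\mathbb{E}(K^{-2})$, $1/\mathbb{E}(F_Q)$, and $\exp(-2\,\mathbb{E}\ln K)$; this is exactly where the unitary 2-design does the real work, taming the per-layer fidelity fluctuations of order $\sqrt{\sum_{\boldsymbol\alpha}\lambda_{\boldsymbol\alpha}^2}$ so that all three coincide with $\gamma$.
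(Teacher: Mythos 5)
Your TEM-side computation ($\Gamma_{\rm TEM}=K^{-2}$ with $K$ concentrating on $\gamma^{-1/2}$ under the 2-design) agrees with the paper. The gap is on the lower-bound side. You are in effect re-deriving Theorem~S1 of Ref.~\cite{tsubouchi_universal_2023} from scratch, and the crucial step --- evaluating the quantum Fisher information of the noisy output state --- rests on the ``maximally-mixed SLD'' approximation $F_Q\approx 2^N\,{\rm tr}[(\partial_\theta\rho_{\rm noisy})^2]$. That approximation requires $\rho_{\rm noisy}$ to be close to $I/2^N$ on the support of $\partial_\theta\rho_{\rm noisy}$, which is not automatic: $\rho_{\rm noisy}$ is the noisy image of a pure state and, even after 2-design scrambling, retains $2^N$ Pauli components damped only by factors $\sim\gamma^{-1/2}$, so its spectrum is far from flat in the relevant regime $\varepsilon NL\sim 10$--$20$. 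You correctly identify this, together with the Jensen gaps between $\mathbb{E}(1/F_Q)$, $1/\mathbb{E}(F_Q)$ and $e^{-2\mathbb{E}\ln K}$, as ``the main obstacle,'' but you do not close it; as written, the lower-bound half of the argument is incomplete.

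The paper sidesteps all of this by invoking Theorem~S1 of Ref.~\cite{tsubouchi_universal_2023} as a black box: for layered circuits with unital, invertible local noise and 2-design unitary layers it already yields $\Gamma_{\min}\approx\prod_l\prod_q\nu({\cal N}_l^{[q]\,-1})\prod_{\langle q_1,q_2\rangle}\nu({\cal N}_l^{\langle q_1,q_2\rangle\,-1})$, where $\nu(\Lambda)={\rm tr}[\Omega_\Lambda^2]$ is the Choi-state purity, equal for a Pauli map to the regularized sum of squared Pauli fidelities. The only computation left is then elementary: for an inverse one-qubit SPL factor, $\nu({\cal N}^{[q]-1})=\tfrac14\{1+e^{4(\lambda_Y+\lambda_Z)}+e^{4(\lambda_X+\lambda_Z)}+e^{4(\lambda_X+\lambda_Y)}\}\approx 1+2(\lambda_X+\lambda_Y+\lambda_Z)\approx\gamma({\cal N}^{[q]})$ to first order in the weak rates, and likewise for the two-qubit factors, so $\Gamma_{\min}\approx\prod\gamma_l=\gamma=\Gamma_{\rm TEM}$. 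If you want to keep your route, you must either justify the SLD approximation (e.g.\ by bounding the correction to $F_Q$ coming from the non-flat spectrum of $\rho_{\rm noisy}$ and showing it affects only the prefactor, not the exponent) or fall back on the purity-parameter formula that the cited theorem supplies.
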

\begin{proof}
Elementary one-qubit maps ${\cal N}^{[q]}_{l}$ and two-qubit maps ${\cal N}^{\langle q_1,q_2\rangle}_{l}$ constituting the SPL noise layer ${\cal N}_l = \bigcirc_{q} {\cal N}^{[q]}_{l} \circ \bigcirc_{\langle q_1,q_2\rangle} {\cal N}^{\langle q_1,q_2\rangle}_{l}$ are unital and invertible. Therefore, the noisy quantum circuit satisfies conditions of Theorem~S1 in Ref.~\cite[Supplemental material]{tsubouchi_universal_2023} yielding $\Gamma_{\min} \approx \prod_l \prod_{q} \nu( {\cal N}^{[q] - 1}_{l} ) \prod_{\langle q_1,q_2\rangle} \nu ( {\cal N}^{\langle q_1,q_2\rangle -1}_{l} )$ in the asymptotic limit $N \gg 1$. Here, $\nu(\Lambda) \equiv {\rm tr}[\Omega_{\Lambda}^2]$ is the purity parameter~\footnote{If $\Lambda$ is completely positive and trace preserving, then $\nu(\Lambda) \leq 1$; however, if $\Lambda$ is the inverse of a completely positive and trace preserving map, then $\nu(\Lambda) \geq 1$.} of the normalized Choi operator $\Omega_{\Lambda} \equiv \Lambda \otimes {\rm Id} [\ket{\psi_+}\bra{\psi_+}]$, $\ket{\psi_+}$ is the maximally entangled state in the Hilbert space ${\cal H}^{\otimes 2}$, ${\rm dim}{\cal H} = 2^m$ for an $m$-qubit map $\Lambda$. If $\Lambda$ is an $m$-qubit Pauli map, then $\nu(\Lambda) = \frac{1}{4^m} \sum_{\boldsymbol{\alpha}=(\alpha_0,\ldots,\alpha_{m-1})} f_{\boldsymbol{\alpha}}^2$ is the regularized sum of squares of all Pauli fidelities. In the case of SPL noise, $\nu ( {\cal N}^{[q] - 1}_{l} ) = \frac{1}{4} \{ 1 + \exp[4(\lambda_Y^{[q]} + \lambda_Z^{[q]})] + \exp[4(\lambda_X^{[q]} + \lambda_Z^{[q]})] + \exp[4(\lambda_X^{[q]} + \lambda_Y^{[q]})]\} \approx 1 + 2(\lambda_X^{[q]} + \lambda_Y^{[q]} + \lambda_Z^{[q]}) \approx \exp[2(\lambda_X^{[q]} + \lambda_Y^{[q]} + \lambda_Z^{[q]})] \equiv \gamma({\cal N}^{[q]}_{l})$, where the approximate equalities hold true in the first order of the decoherence rates (weak noise). Using the explicit expressions for 2-qubit fidelities~\cite[Appendix F 4]{filippov_scalable_2023}, we similarly get $\nu ( {\cal N}^{\langle q_1,q_2\rangle -1}_{l} ) \approx \exp[2 \sum_{\lambda \in {\rm SPL}({\cal N}^{\langle q_1,q_2\rangle}_l)} \lambda] \equiv \gamma({\cal N}^{\langle q_1,q_2\rangle -1}_{l})$. Finally, $\Gamma_{\min} \approx \prod_l \prod_{q} \gamma( {\cal N}^{[q]}_{l} ) \prod_{\langle q_1,q_2\rangle} \gamma ( {\cal N}^{\langle q_1,q_2\rangle}_{l} ) \equiv \gamma$, which is the sampling overhead in TEM.
\end{proof}

In fact, the analysis of the sampling overhead optimality can be extended beyond 2-local SPL noise models, e.g., to a more complex yet learnable SPL model, in which the jump operators have higher-weight terms~\cite{berg_techniques_2023,rouze2023efficient,flammia_2020}. In what follows, we analyze a general Pauli noise that describes the effects beyond nearest-neighbor interactions. 

\begin{proposition} \label{proposition-tem-overhead-general}
The sampling overhead $\Gamma$ in TEM asymptotically saturates the lower cost bound $\Gamma_{\min} = \mathbb{E}_{{\cal U}_1 \cdots {\cal U}_L} ( \delta^2 |{\sf S}|_{\min})$ for an unbiased estimation of Pauli observables with the standard deviation $\delta$ in large-scale quantum circuits with a general weak Pauli noise and unitary layers ${\cal U}_1, \ldots, {\cal U}_L$ drawn from a set forming a unitary 2-design.
\end{proposition}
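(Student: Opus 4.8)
The plan is to mirror the proof of Proposition~\ref{proposition-tem-overhead}, replacing the explicit enumeration of one- and two-qubit factors with a single layer-level computation of the purity parameter that is insensitive to the Pauli weight of the jump operators. The first step is to note that any Pauli noise layer ${\cal N}_l[\bullet] = \exp[\sum_{\boldsymbol{\alpha}}\lambda_{\boldsymbol{\alpha}}^{(l)}(\sigma_{\boldsymbol{\alpha}}\bullet\sigma_{\boldsymbol{\alpha}}-\bullet)]$, now with $\boldsymbol{\alpha}$ ranging over jump operators of arbitrary weight, is unital and---for weak noise---invertible, so that the hypotheses of Theorem~S1 in Ref.~\cite{tsubouchi_universal_2023} are met. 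Combined with the unitary-$2$-design assumption on $\{{\cal U}_l\}$, which removes the dependence of the averaged Cram\'{e}r--Rao bound on the particular Heisenberg-evolved observable, this yields $\Gamma_{\min} \approx \prod_l \nu({\cal N}_l^{-1})$ in the asymptotic limit $N\gg 1$, exactly as in the two-local case.

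The core step is then to evaluate $\nu({\cal N}_l^{-1}) = {\rm tr}[\Omega_{{\cal N}_l^{-1}}^2] = 4^{-N}\sum_{\boldsymbol{\beta}} f_{l\boldsymbol{\beta}}^{-2}$ for a general Pauli channel, using $f_{l\boldsymbol{\beta}}^{-2} = \exp\big(4\sum_{\boldsymbol{\alpha}:\{\sigma_{\boldsymbol{\alpha}},\sigma_{\boldsymbol{\beta}}\}=0}\lambda_{\boldsymbol{\alpha}}^{(l)}\big)$ from Eq.~\eqref{Pauli-fidelity}. Since the per-layer noise is weak, $\sum_{\boldsymbol{\alpha}}\lambda_{\boldsymbol{\alpha}}^{(l)}\ll 1$, I would expand $f_{l\boldsymbol{\beta}}^{-2}\approx 1 + 4\sum_{\boldsymbol{\alpha}:\{\sigma_{\boldsymbol{\alpha}},\sigma_{\boldsymbol{\beta}}\}=0}\lambda_{\boldsymbol{\alpha}}^{(l)}$ and interchange the sums over $\boldsymbol{\beta}$ and $\boldsymbol{\alpha}$. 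The key combinatorial input is the anticommutativity count established in Appendix~\ref{appendix-anticommutativity}: every nontrivial $\sigma_{\boldsymbol{\alpha}}$ anticommutes with exactly $4^N/2$ of the $4^N$ Pauli strings $\sigma_{\boldsymbol{\beta}}$, \emph{irrespective of its weight}. Hence $4^{-N}\sum_{\boldsymbol{\beta}}\mathbf{1}[\{\sigma_{\boldsymbol{\alpha}},\sigma_{\boldsymbol{\beta}}\}=0] = \tfrac12$ and
\begin{equation}
\nu({\cal N}_l^{-1}) \approx 1 + 2\sum_{\boldsymbol{\alpha}}\lambda_{\boldsymbol{\alpha}}^{(l)} \approx \exp\Big(2\sum_{\boldsymbol{\alpha}}\lambda_{\boldsymbol{\alpha}}^{(l)}\Big) = \gamma_l ,
\end{equation}
recovering precisely the noise-accumulation factor of the layer. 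Taking the product over layers gives $\Gamma_{\min}\approx\prod_l\gamma_l=\gamma$, which coincides with TEM's sampling overhead derived in Sec.~\ref{section-tem-random-error}.

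I expect the main obstacle to lie in justifying the first paragraph rather than in the explicit purity calculation of the second. Concretely, one must verify that Theorem~S1 of Ref.~\cite{tsubouchi_universal_2023} continues to hold when its elementary noise channels are replaced by a single high-weight Pauli layer---that is, that its derivation of the product-of-purities bound invokes only unitality and invertibility and does not secretly rely on the noise being geometrically local---and that the unitary $2$-design average is what collapses the observable-dependent lower bound onto the layer purities $\nu({\cal N}_l^{-1})$. Once this reduction is granted, the weak-noise expansion together with the weight-independent anticommutativity count makes the passage from two-local to arbitrary Pauli noise essentially automatic, since neither ingredient ever references the support of the jump operators.
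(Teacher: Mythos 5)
Your proposal is correct and follows essentially the same route as the paper's proof: reduce to a single layer by multiplicativity, identify $\Gamma_{l\,\min}$ with the purity $\nu({\cal N}_l^{-1}) = 4^{-N}\sum_{\boldsymbol{\beta}} f_{l\boldsymbol{\beta}}^{-2}$ via Theorem~S1 of Ref.~\cite{tsubouchi_universal_2023}, expand to first order in the weak relaxation rates, and invoke the weight-independent fact that every nontrivial Pauli string anticommutes with exactly half of all Pauli strings (Appendix~\ref{appendix-anticommutativity}) to obtain $\nu({\cal N}_l^{-1}) \approx 1 + 2\sum_{\boldsymbol{\alpha}}\lambda_{\boldsymbol{\alpha}}^{(l)} \approx \gamma_l$. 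Your closing caveat about verifying that the cited theorem relies only on unitality and invertibility of the noise layer (not on geometric locality) is exactly the hypothesis the paper checks as well.
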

\begin{proof}
Since TEM's sampling overhead $\gamma=\prod_l \gamma_l$ and the lower bound $\Gamma_{\min}$ from Ref.~\cite{tsubouchi_universal_2023} are both multiplicative with respect to circuit layers, it suffices to establish the equivalence between them for a single noisy layer. A Pauli noise layer ${\cal N}_l$ is fully described by the set of fidelities $\{f_{l\boldsymbol{\beta}}\}_{\boldsymbol{\beta}}$, where $f_{l\boldsymbol{\beta}} = \exp(-2\sum_{\boldsymbol{\alpha}: \{\sigma_{\boldsymbol{\alpha}},\sigma_{\boldsymbol{\beta}}\}=0 } \lambda_{\boldsymbol{\alpha}}) \approx 1 - 2 \sum_{\boldsymbol{\alpha}: \{\sigma_{\boldsymbol{\alpha}},\sigma_{\boldsymbol{\beta}}\}=0 } \lambda_{\boldsymbol{\alpha}}$ if the noise is weak. If $N \gg 1$, then by the same arguments as in the proof of Proposition~\ref{proposition-tem-overhead}, $\Gamma_{l \, \min} \approx \nu({\cal N}_l^{-1}) = 4^{-N} \sum_{\boldsymbol{\beta}} f_{l\boldsymbol{\beta}}^{-2} \approx 1 + 4^{1-N} \sum_{\boldsymbol{\beta}} \sum_{\boldsymbol{\alpha}: \{\sigma_{\boldsymbol{\alpha}},\sigma_{\boldsymbol{\beta}}\}=0 } \lambda_{l\boldsymbol{\alpha}}$. Since any nontrivial Pauli string anticommutes with exactly half of Pauli strings (see Appendix~\ref{appendix-anticommutativity}), $\Gamma_{l \, \min} \approx 1 + 2 \sum_{\boldsymbol{\beta}} \lambda_{l\boldsymbol{\beta}}$. On the other hand, $\gamma_l = \exp(2\sum_{\boldsymbol{\beta}} \lambda_{\boldsymbol{\beta}}) \approx 1 + 2 \sum_{\boldsymbol{\beta}} \lambda_{\boldsymbol{\beta}}$ if $\sum_{\boldsymbol{\beta}} \lambda_{\boldsymbol{\beta}} \ll 1$ (noise weakness). Therefore, $\Gamma_{l \, \min} \approx \gamma_l$, the typical sampling overhead in TEM for the $l$th layer.
\end{proof}

We stress that our results on the optimality of the sampling overhead in TEM are applicable to the realistic SPL noise model justified experimentally~\cite{kim_evidence_2023,van_den_berg_probabilistic_2023}, where $X$-, $Y$-, and $Z$-errors are all present and are roughly of the same order $\sim \varepsilon/12$ on average (similarly, the correlated $XX$-, $XY$-, \ldots, and $ZZ$-errors are all present and are roughly of the same order $\sim \varepsilon/36$ on average). Should the noise be of a different form, for example, a non-realistic qubit-local pure dephasing noise $\Lambda[\bullet] = (1-p)\bullet + p Z \bullet Z$ of intensity $p \ll 1$, then the sampling overheads for PEC and TEM would coincide and be equal to $\exp(2pNL)$. (PEC's sampling overhead for this case is calculated, e.g., in Ref.~\cite{takagi_fundamental_2022}. TEM's sampling overhead results from a typical damping $(1-p)^{NL}$ of the dynamical observable in the Heisenberg picture. Only half of the Pauli string components in the dynamical observable is affected by the qubit-local pure dephasing noise: $\Lambda^{\dag}[I] = I$, $\Lambda^{\dag}[X] = (1-2p) X$, $\Lambda^{\dag}[Y] = (1-2p) Y$, $\Lambda^{\dag}[Z] = Z$.) Therefore, the optimality of PEC's sampling overhead in the case of local dephasing noise~\cite{takagi_fundamental_2022} is actually accompanied by the optimality of TEM's sampling overhead in this case too. However, if we return to the realm of realistic SPL noise~\cite{kim_evidence_2023,van_den_berg_probabilistic_2023}, then TEM provides a quadratically smaller sampling overhead with respect to PEC (see Table~\ref{table-summary}), excluding the optimality of PEC sampling overhead. The global depolarizing noise provides another analytical model showcasing the optimality of the sampling overhead in TEM~\cite{filippov_scalable_2023} and the non-optimality of PEC's~\cite{takagi_fundamental_2022}.

\subsubsection{Systematic error} \label{section-tem-systematic-error}

\textbf{Effect of imprecisely learned noise model and noise instability}. The first source of systematic error in TEM is the same as in PEC and ZNE, namely, noise instability during the experiment and inaccuracies in the learned noise model. The associated relative error is derived in exactly the same way as in PEC (see Sec.~\ref{section-pec-systematic-error}) and reads $\Delta \bar{O} / \bar{O} = \frac{1}{2} \varepsilon N \sqrt{L} \Theta$ provided $\Delta \bar{O} / \bar{O} \ll 1$, where $\Theta$ is the standard deviation in the relative error per layer per qubit.

\textbf{Effect of simplified measurement strategy (if used)}. In non-Clifford circuits, TEM modifies an observable $O = \sum_{\boldsymbol{\beta}} w_{\boldsymbol{\beta}} \sigma_{\boldsymbol{\beta}}$ into $\sum_{\boldsymbol{\beta}} w_{\boldsymbol{\beta}} {\cal M}_L^{\dag}[\sigma_{\boldsymbol{\beta}}]$, which generally represents a sum of exponentially many Pauli strings. If generalized measurements are not available, then one can instead measure a simpler observable $\sum_{\boldsymbol{\beta}} w_{\boldsymbol{\beta}} \big( {\cal M}_L^{\dag} \big)_{\boldsymbol{\beta\beta}} \sigma_{\boldsymbol{\beta}}$ at the price of some systematic error. 

To evaluate the systematic error of this kind, consider the same ``mirrored'' non-Clifford mirror circuit as in the ZNE analysis (see Sec.~\ref{section-zne-systematic-error}), where $N_T$ qubits are subjected to $T$-gates. In the Heisenberg picture along the TEM map ${\cal M}_L^{\dag} = \overleftarrow{\bigcirc}_l {\cal U}_l \circ ({\cal N}_l^{-1})^{\dag} \circ \overrightarrow{\bigcirc}_l {\cal U}_l^{\dag}$, the original observable $O = Z^{\otimes N}$ is first left unperturbed by a noiseless circuit, $\overrightarrow{\bigcirc}_l {\cal U}_l^{\dag} [Z^{\otimes N}] = Z^{\otimes N}$. Qubit-local operators $Z$ in the Pauli string $Z^{\otimes N}$ are then transformed into either $THZHT^{\dag} = \frac{1}{\sqrt{2}}(X + Y)$ (for $N_T$ qubits) or $HZH = X$ (for $N-N_T$ qubits) by the first noiseless layer ${\cal U}_1$. Then, these $2^{N_T}$ different Pauli strings (composed of $X$ and $Y$ operators) propagate through the inverted noisy Clifford part of the circuit in the Heisenberg picture, $\overrightarrow{\bigcirc}_{l=2}^{L-1} {\cal U}_l \circ ({\cal N}_l^{-1})^{\dag}$. The $n$th Pauli string out of $2^{N_T}$ gathers a specific rescaling factor $c_n \equiv \prod_{l} f_{l\boldsymbol{\beta}_n(l)}^{-1}$ from the SPL noise, with the fidelities being given by Eq.~\eqref{Pauli-fidelity}. The final unitary layer of TEM map in the Heisenberg picture performs the transformation $HT^{\dag} \bullet TH$ for $N_T$ qubits and $H \bullet H$ for the rest $N-N_T$ qubits. Since $HT^{\dag}\frac{1}{\sqrt{2}}(c_X X + c_Y Y)TH = \frac{1}{2}(c_X + c_Y)Z + \frac{1}{2}(c_X - c_Y)Y$, the compression-free TEM outputs the operator 
\begin{eqnarray}
{\cal M}_L^{\dag} [O] &=& \overleftarrow{\bigcirc}_l {\cal U}_l \circ ({\cal N}_l^{-1})^{\dag} [Z^{\otimes N}] \nonumber\\
&=& \underbrace{ \frac{1}{2^{N_T}} \left( \sum_{n \in \{X,Y\}^{N_T}} c_n \right) }_{\sim \exp(\varepsilon N L /2)} Z^{\otimes N} + \sum_{k \in \{Z,Y\}^{N_T}} \underbrace{ \frac{1}{2^{N_T}} \left( \sum_{n \in \{X,Y\}^{N_T}}  (-1)^{\langle k, n \rangle_Y} c_n \right) }_{\sim \pm \frac{1}{2^{N_T/2}} \varepsilon \sqrt{NL} \exp(\varepsilon N L /2)} \sigma_k \otimes Z^{\otimes (N-N_T)}, \label{non-Clifford-modified-observable-last-line}
\end{eqnarray}

\noindent where $\langle k, n \rangle_Y$ is the number of positions $i\in\{1,\ldots,N_T\}$ in the strings $k$ and $n$, where $k_i = n_i = Y$ (e.g., if $N_T = 2$, $\langle ZY, XY \rangle_Y = 1$ and $\langle ZY, YX \rangle_Y = 0$). Recalling the averaging effect in the product of fidelities (see Sec.~\ref{section-noise-model}), the first underbrace term $\frac{1}{2^{N_T}} \sum_{n \in \{X,Y\}^{N_T}} c_n \approx \left( 1 \pm \frac{\varepsilon}{6} \sqrt{\frac{NL}{2^{N_T+1}}} + \frac{\varepsilon^2 NL}{72} \right) \exp(\varepsilon N L / 2)$ and the second underbrace term $\frac{1}{2^{N_T}} \sum_{n \in \{X,Y\}^{N_T}} (-1)^{\langle k, n \rangle_Y} c_n \approx \pm \frac{\varepsilon}{6} \sqrt{\frac{NL}{2^{N_T+1}}} \exp(\varepsilon N L / 2)$. A simpler observable to measure is the TEM-rescaled version of the original one, $({\cal M}_L^{\dag})_{Z^{\otimes N},Z^{\otimes N}} Z^{\otimes N}$, with
\begin{equation} \label{tem-rescaled-original-O}
    ({\cal M}_L^{\dag})_{Z^{\otimes N},Z^{\otimes N}} = \frac{1}{2^{N_T}} \left( \sum_{n \in \{X,Y\}^{N_T}} c_n \right) \approx \left( 1 \pm \frac{\varepsilon}{6} \sqrt{\frac{NL}{2^{N_T+1}}} + \frac{\varepsilon^2 NL}{72} \right) \exp(\varepsilon N L / 2).
\end{equation}
As we will see in the following analysis, measuring $({\cal M}_L^{\dag})_{Z^{\otimes N},Z^{\otimes N}} Z^{\otimes N}$ instead of ${\cal M}_L^{\dag} [O]$ at the output of the noisy quantum computation leads to a small systematic error $\propto \varepsilon^2 N L$.

The initial state $(\ket{0}\bra{0})^{\otimes N} = \frac{1}{2^N} \sum_{k \in \{I,Z\}^N} \sigma_k$ is a sum of $2^N$ Pauli strings. The noisy quantum computation $\varrho_{\rm noisy} \equiv \overleftarrow{\bigcirc}_l {\cal U}_l \circ {\cal N}_l [(\ket{0}\bra{0})^{\otimes N}]$ preserves qubit-local identity operators $I$ in $2^N-1$ Pauli strings $\{\sigma_k\}_{k \in \{I,Z\}^N,  \, k \neq Z \ldots Z}$ that originally contain at least one identity operator, and these $I$-containing Pauli strings are orthogonal to ${\cal M}_L^{\dag} [O]$ in Eq.~\eqref{non-Clifford-modified-observable-last-line} in the sense of the Hilbert-Schmidt inner product. Therefore, the only component $[\varrho_{\rm noisy}]_{{\cal M}_L^{\dag} [O]}$ of $\varrho_{\rm noisy}$ that overlaps with ${\cal M}_L^{\dag} [O]$ is 
\begin{eqnarray}
    && [\varrho_{\rm noisy}]_{{\cal M}_L^{\dag} [O]} = \frac{1}{2^{N}} \overleftarrow{\bigcirc}_l {\cal U}_l {\cal N}_l [Z^{\otimes N}] \nonumber\\
    && = \underbrace{ \frac{1}{2^{N_T}} \left( \sum_{n \in \{X,Y\}^{N_T}} \prod_{l} f_{l\boldsymbol{\beta}_n(l)} \right) }_{\sim \exp(- \varepsilon N L /2)}  \frac{Z^{\otimes N}}{2^{N}} + \sum_{k \in \{Z,Y\}^{N_T}} \underbrace{ \frac{1}{2^{N_T}} \left( \sum_{n \in \{X,Y\}^{N_T}} (-1)^{\langle k, n \rangle_Y} \prod_{l} f_{l\boldsymbol{\beta}_n(l)} \right) }_{\sim \mp \frac{1}{2^{N_T/2}} \varepsilon \sqrt{NL} \exp(- \varepsilon N L /2)} \frac{\sigma_k \otimes Z^{\otimes (N-N_T)}}{2^{N}}, \qquad \label{non-Clifford-modified-state-last-line}
\end{eqnarray}

\noindent with the first underbrace term being $\frac{1}{2^{N_T}} \sum_{n \in \{X,Y\}^{N_T}} \prod_{l} f_{l\boldsymbol{\beta}_n(l)} \approx \left( 1 \mp \frac{\varepsilon}{6} \sqrt{\frac{NL}{2^{N_T+1}}} \right) \exp(-\varepsilon N L / 2)$, the second underbrace term being $\frac{1}{2^{N_T}} \sum_{n \in \{X,Y\}^{N_T}} (-1)^{\langle k, n \rangle_Y} \prod_{l} f_{l\boldsymbol{\beta}_n(l)} \approx \mp \frac{\varepsilon}{6} \sqrt{\frac{NL}{2^{N_T+1}}} \exp(-\varepsilon N L / 2)$, and the signs in Eqs.~\eqref{non-Clifford-modified-observable-last-line} and \eqref{non-Clifford-modified-state-last-line} being correlated.

In the compression-free TEM, measurement of the modified observable ${\cal M}_L^{\dag} [O]$ in the noisy state $\varrho_{\rm noisy}$ gives the ideal noiseless expectation value, i.e., $\braket{{\cal M}_L^{\dag}[O]}_{\rm noisy} = {\rm tr}\big[ \varrho_{\rm noisy} {\cal M}_L^{\dag} [O] \big] = 1$ in our example. Instead, measurement of the TEM-rescaled observable~\eqref{tem-rescaled-original-O} at the output of the of noisy quantum processor gives
\begin{eqnarray}
    \braket{({\cal M}_L^{\dag})_{Z^{\otimes N},Z^{\otimes N}} Z^{\otimes N}}_{\rm noisy} = {\rm tr}[\varrho_{\rm noisy} Z^{\otimes N}] ({\cal M}_L^{\dag})_{Z^{\otimes N,Z^{\otimes N}}} &=& \frac{1}{2^{2N_T}} \left( \sum_{n \in \{X,Y\}^{N_T}} \prod_{l} f_{l\boldsymbol{\beta}_n(l)} \right) \left( \sum_{n \in \{X,Y\}^{N_T}} c_n \right) \nonumber\\
    & \approx & 1 + \frac{\varepsilon^2 NL}{72} \left(1-\frac{1}{2^{N_T}} \right) \approx 1 + \frac{\varepsilon^2 NL}{72} \text{~if~} N_T \gg 1. \label{TEM-rescaled-O-average}
\end{eqnarray}
Eq.~\eqref{TEM-rescaled-O-average} provides the evaluation of systematic error, $\varepsilon^2 NL /72$, emerging due to replacement of the true TEM-modified observable $\sum_{\boldsymbol{\beta}} w_{\boldsymbol{\beta}} {\cal M}_L^{\dag}[\sigma_{\boldsymbol{\beta}}]$ by the simpler TEM-rescaled version $\sum_{\boldsymbol{\beta}} w_{\boldsymbol{\beta}} \big( {\cal M}_L^{\dag} \big)_{\boldsymbol{\beta\beta}} \sigma_{\boldsymbol{\beta}}$ of the original observable $O = \sum_{\boldsymbol{\beta}} w_{\boldsymbol{\beta}} \sigma_{\boldsymbol{\beta}}$.

The derived systematic error is proportional to $\varepsilon^2 NL$ and vanishes in the limit $\varepsilon \rightarrow 0, NL \rightarrow \infty$, $\varepsilon N L = {\rm const}$. This means that while the improvements in technology lead to smaller values of the error rate $\varepsilon$ and the circuits of larger scale $NL \sim {\rm const}/ \varepsilon $ become accessible for noise mitigation, the systematic error due to the simplified measurement strategy gets smaller and smaller.

\textbf{Effect of bond dimension compression}.
If the tensor-network map ${\cal M}_L^{\dag (\chi)}$ is built with the use of bond dimension compression (bond dimension not exceeding $\chi$), then ${\cal M}_L^{\dag (\chi)}$ is an approximation to the exact map ${\cal M}_L^{\dag}$ and cannot exactly reproduce all the rescaling scaling factors $c_n \equiv \prod_{l} f_{l\boldsymbol{\beta}_n(l)}^{-1}$ but provides an approximation for them, $c_n^{(\chi)} \approx \prod_{l=1}^L f_{l\boldsymbol{\beta}_n(l)}^{-1}$. The mismatch between $c_n$ and $c_n^{(\chi)}$ results in a systematic error, which we evaluate in what follows. 

Consider the coefficient $c_n^{(\chi)}$ as a function of $\sim 12 N L$ noise variables $\lambda_{\boldsymbol{\alpha}}^{(l)} \in \cup_{l=1}^L {\rm SPL}({\cal N}_l)$. The bond dimension $\chi$ regulates the precision with which $c_n^{(\chi)}$ approximates $\prod_{l=1}^L f_{l\boldsymbol{\beta}_n(l)}^{-1}$, which is also a function of the same variables $\lambda_{\boldsymbol{\alpha}} \in \cup_{l=1}^L {\rm SPL}({\cal N}_l)$, see Eq.~\eqref{Pauli-fidelity}. Let ${\cal T}_1[\bullet]$ be a functional that returns the first Taylor polynomial of a function $\bullet$ with respect to variables $\lambda_{\boldsymbol{\alpha}} \in \cup_{l=1}^L {\rm SPL}({\cal N}_l)$. There exists a bond dimension $\chi^{\ast}$ such that the mathematical identity 
${\cal T}_1[c_n^{(\chi^{\ast})}] \equiv {\cal T}_1 \Big[ \prod_{l=1}^L f_{l\boldsymbol{\beta}_n(l)}^{-1} \Big]$ takes place. In Appendix~\ref{appendix-TEM-threshold-bond-dimension}, we show that $\chi^{\ast} \approx \frac{1}{2} L^2$ and does not depend on the number of qubits $N$. This implies that, in the approximation $c_n^{(\chi^{\ast})} \approx \prod_{l=1}^L f_{l\boldsymbol{\beta}_n(l)}^{-1}$, every exponent $\exp(-2 \lambda_{\alpha_q}^{[q]})$ and every exponent $\exp(-2 \lambda_{\alpha_{q_1},\alpha_{q_2}}^{\langle q_1,q_2\rangle})$ in the definition of $f_{l\boldsymbol{\beta}}$ in Eq.~\eqref{Pauli-fidelity} are reproduced exactly to first order; otherwise, the functional identity ${\cal T}_1[c_n^{(\chi^{\ast})}] \equiv {\cal T}_1 \Big[ \prod_{l=1}^L f_{l\boldsymbol{\beta}_n(l)}^{-1} \Big]$ does not hold. This means $c_n^{(\chi^{\ast})} = \prod_{l=1}^L \widetilde{f}_{l\boldsymbol{\beta}_n(l)}^{-1}$, with
\begin{equation*} \label{Pauli-fidelity-approximation}
\widetilde{f}_{l{\boldsymbol{\beta}}} = \prod_{q} \prod_{\alpha_q} \left\{ \begin{array}{ll}
    1 - 2 \lambda_{\alpha_q}^{[q]} + \mu_l^{[q]} & \text{if~} \{\sigma_{\alpha_q},\sigma_{\beta_q}\} \! = \! 0, \\
    1 + \mu_l^{[q]} & \text{otherwise}
\end{array} \right. \!\! \prod_{\langle q_1,q_2\rangle} \prod_{\alpha_{q_1},\alpha_{q_2}} \!\! \left\{ \begin{array}{ll}
    1 - 2 \lambda_{\alpha_{q_1},\alpha_{q_2}}^{\langle q_1,q_2\rangle} + \mu_l^{\langle q_1,q_2\rangle} & \text{if~} \{\sigma_{\alpha_{q_1}} \!\! \otimes \! \sigma_{\alpha_{q_2}}, \sigma_{\beta_{q_1}} \!\! \otimes \! \sigma_{\beta_{q_2}} \} \! = \! 0, \\
    1 + \mu_l^{\langle q_1,q_2\rangle} & \text{otherwise};
\end{array} \right.
\end{equation*}
where $\mu_l^{[q]}$ and $\mu_l^{\langle q_1,q_2\rangle}$ are the second-order terms proportional to $(\lambda_{\alpha_q}^{[q]})^2$ and $(\lambda_{\alpha_{q_1},\alpha_{q_2}}^{\langle q_1,q_2\rangle})^2$, respectively. On the other hand, application of Taylor's theorem with the Lagrange remainder to the exact fidelity~\eqref{Pauli-fidelity} yields
\begin{equation*} \label{Pauli-fidelity-Taylor-Lagrange}
f_{l{\boldsymbol{\beta}}} = \prod_{q} \prod_{\alpha_q: \ \{\sigma_{\alpha_q},\sigma_{\beta_q}\} = 0} \Big[ 1 - 2 \lambda_{\alpha_q}^{[q]} + \nu_l^{[q]} \Big] \prod_{\langle q_1,q_2\rangle} \prod_{\alpha_{q_1},\alpha_{q_2}: \ \{\sigma_{\alpha_{q_1}} \otimes \sigma_{\alpha_{q_2}}, \sigma_{\beta_{q_1}} \otimes \sigma_{\beta_{q_2}} \} = 0} \Big[ 1 - 2 \lambda_{\alpha_{q_1},\alpha_{q_2}}^{\langle q_1,q_2\rangle} + \nu_l^{\langle q_1,q_2\rangle} \Big],
\end{equation*}
for some $\nu_l^{[q]}$ and $\nu_l^{\langle q_1,q_2\rangle}$ satisfying $2 \exp(-2 \lambda_{\alpha_q}^{[q]}) (\lambda_{\alpha_q}^{[q]})^2 < \nu_l^{[q]} < 2 (\lambda_{\alpha_q}^{[q]})^2$ and $2 \exp(- 2 \lambda_{\alpha_{q_1},\alpha_{q_2}}^{\langle q_1,q_2\rangle})(\lambda_{\alpha_{q_1},\alpha_{q_2}}^{\langle q_1,q_2\rangle})^2 < \nu_l^{\langle q_1,q_2\rangle} < 2 (\lambda_{\alpha_{q_1},\alpha_{q_2}}^{\langle q_1,q_2\rangle})^2$. We assume $\mu_l^{[q]} \sim \left\vert \mu_l^{[q]} - \nu_l^{[q]} \right\vert \sim (\lambda_{\alpha_q}^{[q]})^2$ and $\mu_l^{\langle q_1,q_2\rangle} \sim \left\vert \mu_l^{\langle q_1,q_2\rangle} - \nu_l^{\langle q_1,q_2\rangle} \right\vert \sim (\lambda_{\alpha_{q_1},\alpha_{q_2}}^{\langle q_1,q_2\rangle})^2$, which results in 
\begin{equation} \label{TEM-systematic-c-coefficient}
\widetilde{f}_{l{\boldsymbol{\beta}}}^{-1} f_{l{\boldsymbol{\beta}}} \approx  \prod_{q} \prod_{\alpha_q} \Big[ 1 \pm a (\lambda_{\alpha_q}^{[q]})^2 \Big] \prod_{\langle q_1,q_2\rangle} \prod_{\alpha_{q_1},\alpha_{q_2}} \Big[ 1 \pm a (\lambda_{\alpha_{q_1},\alpha_{q_2}}^{\langle q_1,q_2\rangle})^2 \Big] \approx 1 \pm a \sum_{\lambda_{\boldsymbol{\alpha}}^{(l)} \in {\rm SPL}({\cal N}_l)} (\lambda_{\boldsymbol{\alpha}}^{(l)})^2,
\end{equation}
where the coefficient $a$ is to be determined from a numerical experiment. In the last expression, we have taken into account that the signs $\pm$ are controlled by the bond dimension compression and may potentially be all the same. Since $c_n^{(\chi)} \prod_{l=1}^L f_{l\boldsymbol{\beta}_n(l)} \approx 1 \pm a \sum_{l=1}^L \sum_{\lambda_{\boldsymbol{\alpha}}^{(l)} \in {\rm SPL}({\cal N}_l)} (\lambda_{\boldsymbol{\alpha}}^{(l)})^2$, $c_n^{(\chi)}$ reproduces $c_n$ with relative error $\sum_{\lambda_{\boldsymbol{\alpha}} \in \cup_l {\rm SPL}({\cal N}_l)} \lambda_{\boldsymbol{\alpha}}^2$ for every $n$ and the resulting relative systematic observable estimation error is $\delta_{\rm systematic}^{(3)} = a \sum_{\lambda_{\boldsymbol{\alpha}} \in \cup_l {\rm SPL}({\cal N}_l)} \lambda_{\boldsymbol{\alpha}}^2$. In Appendix~\ref{appendix-TEM-threshold-bond-dimension}, we present a numerical experiment with random Clifford circuits that justifies the derived evaluation of the systematic error with $a = 0.6$. Using this numerically inferred value and assuming that the one-qubit (two-qubit) relaxation rates $\lambda_{\boldsymbol{\alpha}}$ in the SPL model are normally distributed with both mean and standard deviation equal to $\varepsilon/12$ ($\varepsilon/36$), similarly to the experiment~\cite{van_den_berg_probabilistic_2023}, we get $\delta_{\rm systematic}^{(3)} = \varepsilon^2 N L / 30$.

The physical meaning of the error $\delta_{\rm systematic}^{(3)} = 0.6 \sum_{\lambda_{\boldsymbol{\alpha}} \in \cup_l {\rm SPL}({\cal N}_l)} \lambda_{\boldsymbol{\alpha}}^2$ is the same as in the quantum error correction with a code of distance $3$, i.e., a code that is able to correct $1$ error at a time among the physical qubits encoding a single logical qubit. For such a code, there is still a nonzero (although much smaller) probability of $2$ or more errors happening simultaneously. Therefore, the code corrects the first leading term in every local noise contribution and the result of error correction is the ideal unitary circuit whose layers are intervened by the effective noise channels ${\cal N}_l^{\rm eff}$ with an error rate that is quadratic w.r.t. the original error rate (see, e.g., \cite{huang_performance_2019}). In the case of TEM with bond dimension $\chi^{\ast} = \frac{1}{2} L^2$, we also have residual effective noise channels ${\cal N}_l^{\rm eff}$ that are Pauli channels with relaxation rates $\sim (\lambda_{\boldsymbol{\alpha}}^{(l)})^2$ instead of $\lambda_{\boldsymbol{\alpha}}^{(l)}$.

If the bond dimension in TEM is below the threshold ($\chi < \chi^{\ast} \approx \frac{1}{2} L^2$), then some first-order noisy terms also contribute to the infidelity $\widetilde{f}_{l{\boldsymbol{\beta}}}^{-1} f_{l{\boldsymbol{\beta}}}$. We analyze the spectrum of singular values in bonds of the TEM map in Appendix~\ref{appendix-TEM-systematic} and conclude that they decay exponentially below the threshold bond dimension. Based on this analysis, we evaluate the associated the total compression-induced systematic error below the threshold bond dimension (see Appendix~\ref{appendix-TEM-systematic}):
\begin{equation*}
\delta_{\rm systematic}^{(3)} = \sqrt{\left( \frac{NL\varepsilon^2}{30} \right)^2 + \frac{ \varepsilon^2 L^2 }{ 72 \ln(4\sqrt{2N}) } \left[ N \Big( \frac{1}{32N} \Big)^{2\chi / L^2} - \frac{1}{32} \right]}.
\end{equation*}

\section{Prospects for practical quantum advantage} \label{section-prospects-advantage}

\subsection{Universality framework} \label{section-prospects-advantage-universality}

The use of computational resources in industry is rather utility-driven: computers automatize processes and provide algorithmic solutions to numerous problems from the same particular class. Automated algorithms are \textit{universal} in the sense that they are problem-independent within the class, and do not need customization to solve different instances with new parameters. This leads to a substantial cost reduction in the research and development efforts and speeds up the production cycle of new products such as materials and drugs. Conversely, some problem-specific customized solutions may outperform the problem-independent algorithmic ones, but any customization requires time and precious human resources that are generally unaffordable. Nor is customization scalable when the number of product users increases. For these reasons, we propose to quantify the efficiency of any computation paradigm (be it classical, quantum, or hybrid) with respect to how well it provides accurate automated solutions for a specific class of problems. The requirement of \textit{universality} significantly limits the range of algorithms and their flexibility; however, it is very much aligned with industrial needs and we therefore stress its importance. Under the constraint of universality for a given class of problems, the algorithm cannot be adapted to deal with specific instances, so quantum advantage in this framework refers to a superior performance of quantum-enabled universal algorithms as compared to purely classical universal algorithms.

In this paper, we consider the universality framework for a class of problems related to many-body quantum dynamics in discrete time. In this case, the computational task is as follows: ``Develop a universal computational algorithm (classical, quantum, or hybrid) adequately simulating discrete-time quantum dynamics of observables for a system with given interactions and product initial state''. The task is posed to develop a \textit{universal} algorithm, which would be a fixed code (function) that takes elementary $m$-body transformations (unitary matrices), the local initial state, and the observable (or a set of observables) as the input and returns an estimation of the desired observable (or a set of observables) at a given discrete time as the output. (The initial state can be omitted in the task formulation if its preparation is encoded into the dynamics; in this case, the trivial factorized state $\ket{0}^{\otimes N}$ is the default initial state.) The algorithm has to work in a problem-agnostic way and provide reasonable estimations of dynamical quantities for whatever dynamics is specified (either trivial or complex). The broad scope of this class covers many relevant real-world applications, for instance in chemistry and materials science.

To the best of our knowledge, there is no better alternative for a universal classical algorithm than resorting to conventional tensor network simulations. The affordable bond dimension is then limited by the compute power available and the wall time allocated to solve the problem. In the following, we consider two scenarios for the classical computational resources: modest ones (`c'), and those available with world-class supercomputers (`C'). As the problem complexity grows, classical simulations exhibits a transition from exact to approximate when the exponentially growing exact bond dimension gets out of reach. Here, we argue that for some specific instances of the aforementioned generic problem, which involve highly entangled states (quantum advantage candidates), approximate tensor network simulations result in very inaccurate estimations even with state-of-the-art supercomputers. As we will see in the next section, in this worst-case scenario for classical simulations, the relative estimation error exhibits a phase transition from 0 to 1 when the exact bond dimension starts exceeding the affordable bond dimension by a factor from 1 to 2. 

Quantum computation without any noise mitigation (`q') provides a quick but incorrect estimation of dynamical quantities. However, being equipped with a noise mitigation technique requiring modest classical computational resources, the hybrid quantum-classical simulator (`q+c') can generally provide decent estimations (with an acceptable accuracy of several percent), provided the sampling overhead is not prohibitively large. This means that the `q+c' simulation is potentially able to provide a preciser estimation of an observable than the `C' simulation within the same wall time, see Fig.~\ref{figure-advantage}. The purpose of the next section is to present a problem example that would be hard to simulate classically via universal (uncustomized) tensor network simulations (namely, conventional matrix product states) and to evaluate the problem size and the quantum computing parameters (error rate, noise stability) with which quantum advantage can be potentially demonstrated.

\begin{figure}
\includegraphics[width=0.45\textwidth]{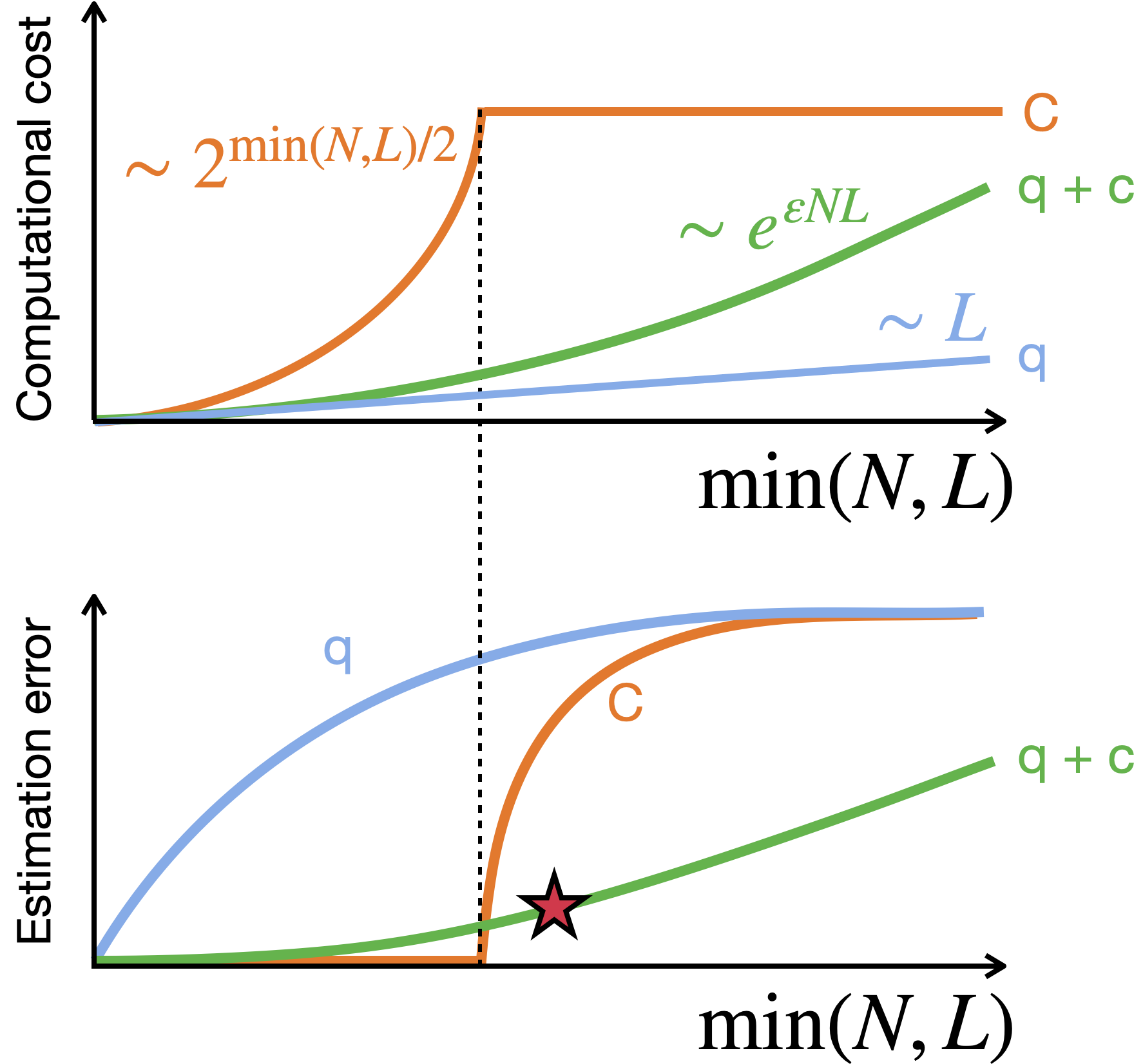}
\caption{\label{figure-advantage} Potential for quantum advantage in problems involving $N$ qubits and $L$ two-local unitary layers. The noisy quantum computation (q) is fast but inaccurate. The cost of exact classical simulation grows exponentially in the problem size and reaches the limit imposed by state-of-the-art compute power available (C), after which classical simulations become approximate for larger problem sizes. For specific problems involving highly entangled states (in the Schr\"{o}dinger picture) and operators (in the Heisenberg picture), the approximate classical solution rapidly becomes inadequate and results in a high estimation error. A tandem of noisy quantum computation and noise mitigation with modest classical resources (q+c) also involves an exponentially growing computation cost, with the exponent being proportional to the total number of errors in the causal area of the circuit. This exponential growth could be affordable for $N,L \sim 100$ if the average error rate $\varepsilon$ is sufficiently low. Within the same wall time, the resulting estimation error of the `q+c' computation can therefore be smaller than that for the `C' computation manifesting the useful quantum advantage (star).}
\end{figure}

\subsection{Verifiable model} \label{section-advantage-verifiable-model}

\begin{figure}
\begin{center}
\begin{quantikz}[row sep=0.1cm, column sep=0.1cm]
\gategroup[10,steps=5,style={dashed,rounded
corners,fill=green!20, inner
xsep=9pt,xshift=-7pt},background,label style={label
position=below,anchor=north,yshift=-0.2cm}]{$\ket{\psi_0}$} \lstick{$\ket{0}$} & \gate{R_Y(\frac{\pi}{2})} & \ctrl{1} & \gate{H} & \gate{R_Z(\varphi)}  &  & & \hphantomgate{\hspace{1pt}}  & \gate[label style={yshift=0.25cm}]{U} \gategroup[10,steps=2,style={dashed,rounded
corners,fill=blue!20, inner
xsep=2pt},background,label style={label
position=below,anchor=north,yshift=-0.2cm}]{\parbox{4em}{\centering Floquet operator}} & \gate[2]{U} & \hphantomgate{\hspace{1pt}} & \ \ldots \ & \hphantomgate{\hspace{1pt}} & & \gate[label style={yshift=0.25cm}]{U} \gategroup[10,steps=2,style={dashed,rounded
corners,fill=blue!20, inner
xsep=2pt},background,label style={label
position=below,anchor=north,yshift=-0.2cm}]{} & \gate[2]{U} & & \hphantomgate{\hspace{1pt}} & \meter{} \\
\lstick{$\ket{0}$} & & \targ{} & \gate{H} &  & & & & \gate[2]{U} & & & \ \ldots \ & & &  \gate[2]{U} & & & & \meter{} \\
\lstick[label style={yshift = 3pt, xshift = -5pt}]{$\vdots$} & \gate{R_Y(\frac{\pi}{2})} & \ctrl{1} & \gate{H} & \gate{R_Z(\varphi)}  &  & & & & \gate[2]{U} & & \ \ldots \ & & & & \gate[2]{U} & & & \meter{} \\
\lstick[label style={yshift = 3pt, xshift = -5pt}]{$\vdots$} & & \targ{} & \gate{H} &  & & & & \gate[2]{U} & & & \ \ldots \ & & &  \gate[2]{U} & & & & \meter{} \\
\lstick{$\ket{0}$} & \gate[style = {fill=yellow!20}]{R_Y(\theta)} & \ctrl{1} & \gate{H} & \gate{R_Z(\varphi)}  &  & & & & \gate[2]{U} & & \ \ldots \ & & & & \gate[2]{U} & & & \meter{} \\
\lstick{$\ket{0}$} & & \targ{} & \gate{H} &  & & & & \gate[2]{U} & & & \ \ldots \ & & &  \gate[2]{U} & & & & \meter{} \\
\lstick[label style={yshift = 3pt, xshift = -5pt}]{$\vdots$} & \gate{R_Y(\frac{\pi}{2})} & \ctrl{1} & \gate{H} & \gate{R_Z(\varphi)}  &  & & & & \gate[2]{U} & & \ \ldots \ & & & & \gate[2]{U} & & & \meter{} \\
\lstick[label style={yshift = 3pt, xshift = -5pt}]{$\vdots$} & & \targ{} & \gate{H} &  & & & & \gate[2]{U} & & & \ \ldots \ &  & &  \gate[2]{U} & & & & \meter{} \\
\lstick{$\ket{0}$} & \gate{R_Y(\frac{\pi}{2})} & \ctrl{1} & \gate{H} & \gate{R_Z(\varphi)}  &  & & & & \gate[2]{U} & & \ \ldots \ &  & & & \gate[2]{U} & & & \meter{} \\
\lstick{$\ket{0}$} & & \targ{} & \gate{H} & & & & & \gate[label style={yshift=-0.25cm}]{U} & & & \ \ldots \ & & & \gate[label style={yshift=-0.25cm}]{U} & & & & \meter{}
\end{quantikz}
\end{center}
\caption{\label{figure-circuit-2} Quantum circuit for a discrete-time dynamics of initially correlated qubit pairs under repeated local interactions, e.g., a kicked isotropic Heisenberg model with external $X$- and $Z$-fields, Eq.~\eqref{kH-model-2}. Translational symmetry is initially broken in the central qubit pair, which controls those degrees of freedom that are truncated in large-scale classical simulations with the help of uncustomized matrix product states.}
\end{figure}
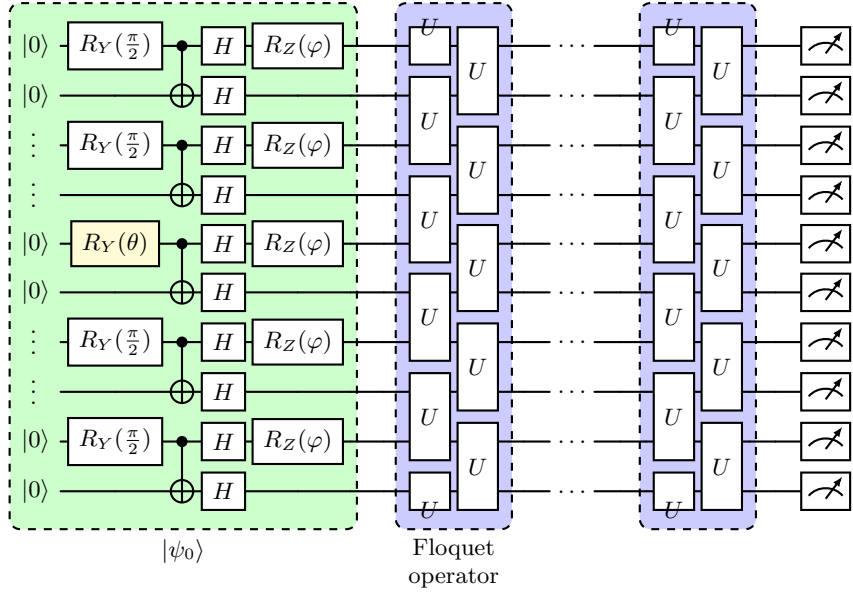

As a specific sub-class of problems, we study the simulation of discrete-time many-body dynamics~\cite{kim_evidence_2023,van_den_berg_probabilistic_2023,fauseweh_quantum_2024,salathe_digital_2015,keenan_evidence_2023,zaletel-2023,bertini_localized_2024}. Consider a one-dimensional Floquet circuit with brickwork structure of the elementary unitary transformation $U_{\langle q_1, q_2 \rangle}$ affecting neighboring qubits $\langle q_1, q_2 \rangle$~\cite{chan_solution_2018}. The total unitary transformation with periodic boundary conditions for $N = 4k+2$ qubits, $k \in \mathbb{N}$, is $W(t) \equiv \overleftarrow{\bigcirc}_{\tau=1}^{t} W_\tau$, where $W_\tau = U_{\langle 0,1\rangle} \otimes U_{\langle 2,3\rangle} \otimes \cdots \otimes U_{\langle N-2,N-1\rangle}$ if $\tau$ is even and $W_\tau = U_{\langle 1,2\rangle} \otimes U_{\langle 3,4 \rangle} \otimes \cdots \otimes U_{\langle N-1,0\rangle}$ if $\tau$ is odd. We focus on a particular non-Clifford transformation 
\begin{equation} \label{kH-model-2}
\begin{quantikz}[row sep=0.1cm]
& \gate[2]{U} & \\
& &
\end{quantikz} = \begin{quantikz}[row sep=0.1cm]
& \gate{\sqrt{X}} & \swap[partial swap={~J~}]{1} & \gate{\sqrt{X}} & \\
& \gate{T} &  \targX{} & \gate{T} &
\end{quantikz} = (\sqrt{X} \otimes T) \ \exp[-iJ(X \otimes X + Y \otimes Y + Z \otimes Z)] \ (\sqrt{X} \otimes T) 
\end{equation}
constituting the kicked isotropic Heisenberg model with external $X$- and $Z$-fields~\cite{garratt_local_2021}. Here $\sqrt{X} = e^{i\pi/4} R_X(\frac{\pi}{2})$ is a Clifford gate and $T = e^{i\pi/8} R_Z(\frac{\pi}{4})$ is a non-Clifford gate. $T$-gates do not cancel in $W(t)$ if $J \in (0,\frac{\pi}{2})$ meaning the whole transformation is not Clifford either. 

Suppose the initial state is a $2$-local correlated state with broken translational symmetry,
\begin{equation}
\ket{\psi_0} = \left[ R_Z(\varphi) \otimes I \big(\cos\tfrac{\theta}{2} \ket{+ +} + \sin\tfrac{\theta}{2} \ket{- -} \big) \right]_{\frac{N}{2} - 1, \frac{N}{2}} \otimes \bigotimes_{n \neq (N-2)/4} \tfrac{1}{\sqrt{2}} \left( \ket{0_{2n} 0_{2n+1}} + e^{i \varphi} \ket{1_{2n} 1_{2n+1}} \right),
\end{equation}
where $\ket{\pm} = \frac{1}{\sqrt{2}} (\ket{0} \pm \ket{1})$. Such a state is created with the help of a single layer of {\sc cnot} gates, see Fig.~\ref{figure-circuit-2}. The goal of the computation is to estimate the dynamical observable $\bra{\psi_t} O \ket{\psi_t}$ with $\ket{\psi_t} = W(t) \ket{\psi_0}$ for some values of the interaction-strength parameter $J$, the local phase $\varphi$, and the symmetry-breaking parameter $\theta$. 

Classical MPS simulation remains exact up to time step $t$ if it uses a bond dimension $\chi_{\rm exact} = 2^{\min(t+1,N/2)}$, because each Floquet layer $W_\tau W_{\tau+1}$ corresponds to a multiplicative factor $4$ in the exact bond dimension, $\ket{\psi_0}$ is an MPS with bond dimension $2$, and the MPS bond dimension for $N$ qubits can always be exactly compressed down to $2^{N/2}$. By time step $t = \lfloor N/4 \rfloor$, the causal cone for every initial qubit pair expands to half of the spin chain, so diametrically opposite qubits get potentially entangled. The computational cost of constructing the exact MPS for $\ket{\psi_{ \lfloor N/4 \rfloor}}$ is $32 N \sum_{\tau=1}^{\lfloor N/4 \rfloor} 2^{2\tau}$ floating point operations and grows exponentially in $N$, but the main difficulty is the final contraction with the observable, which requires at least $2N \chi_{\rm exact}^3 = 2 N 2^{3\lfloor N/4 \rfloor+3}$ floating point operations~\cite{schollwock-2011}. (The same arguments about the final contraction complexity are valid if one propagates the state and the observable until the middle layer of the circuit in the Schr\"{o}dinger picture and in the Heisenberg picture, respectively~\cite{begusic_fast_2023}.) For $N=122$ ($t = 30$) such a simulation would require at least $63\,854$ years of distributed computing wall time with the most powerful supercomputer currently available, \texttt{Frontier}, with $\sim 1.2 \times 10^{18}$ FLOPS. 

\begin{figure}
\includegraphics[width=0.7\textwidth]{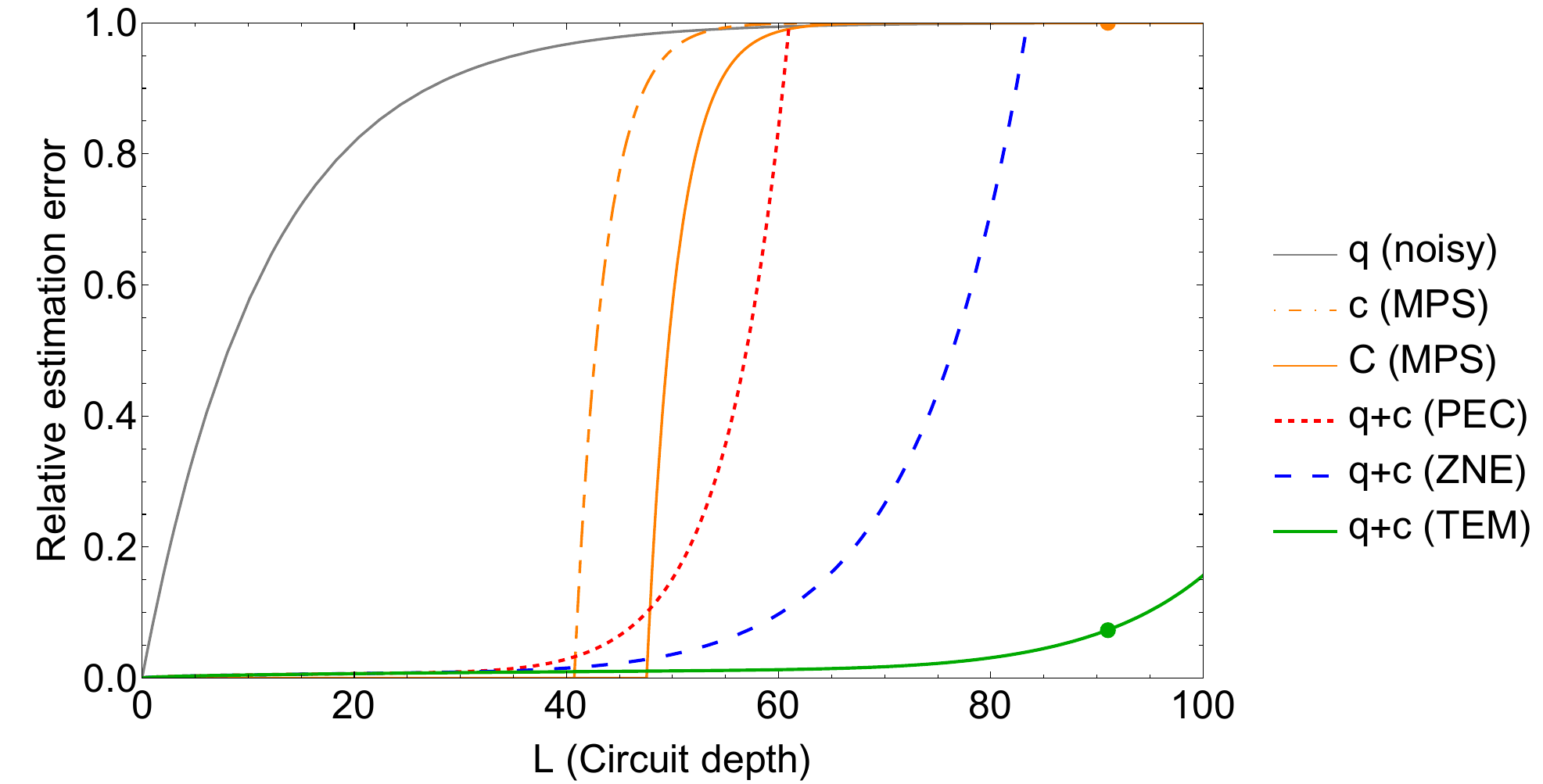}
\caption{\label{figure-c+q-2} Accuracy of observable estimation in quantum computation without noise mitigation (q), classical simulation with modest supercomputer power (c) and world-class supercomputer power (C), and hybrid approaches for quantum computation with noise mitigation (q+c). In (q), the error rate per qubit per layer $\varepsilon = 0.0014$. In (c), wall time $24$~hours and $10^{15}$~FLOPS. In (C), wall time $24$~hours and $1.2 \times 10^{18}$~FLOPS. In (q+c), noise stability is quantified by the standard deviation in the relative error per layer per qubit $\Theta = 1.8 \%$. Points correspond to the results for the parity observable estimation in the kicked isotropic Heisenberg dynamics with external inhomogeneous $X$- and $Z$-fields and a 2-local correlated initial state; circuit parameters: Heisenberg interaction strength $J = \frac{\pi}{4}$, symmetry-breaking and phase-shift angles $\theta = 1.5$ and $\varphi = 2.63$, number of qubits $N=122$, number of layers $L=91$ (evolution step $t=30$).}
\end{figure}

To get an approximate solution to a problem within a shorter time period (say, ${\rm T}=24$~hours), MPS compression is needed. While the compression affects the degrees of freedom that contribute the least to the state, it may nevertheless lead to a completely wrong result even if the bond dimension is as large as, e.g., the half of the exact one, $\chi = \frac{1}{2} \chi_{\rm exact}$. Let us illustrate this in the verifiable regime with $J=\frac{\pi}{4}$, in which the interaction term $\exp[-iJ(X \otimes X + Y \otimes Y + Z \otimes Z)]$ reduces to the {\sc swap} operation and the circuit becomes dual unitary~\cite{piroli_2020,goold_2024}. In this case,
\begin{eqnarray} \label{Bell-pair-rainbow-2}
\ket{\psi_{t}} & = & \left[ (\sqrt{X} T)^t R_Z(\varphi) \otimes (T \sqrt{X})^t \big(\cos\tfrac{\theta}{2} \ket{+ +} + \sin\tfrac{\theta}{2} \ket{- -} \big) \right]_{\frac{N}{2}-1-t, \frac{N}{2}+t} \nonumber\\
&& \otimes \bigotimes_{n \neq (N-2)/4} \Big[ (\sqrt{X} T)^t \otimes (T \sqrt{X})^t \tfrac{1}{\sqrt{2}} \left( \ket{0 0} + e^{i \varphi} \ket{1 1} \right) \Big]_{2n-t,2n+1+t}.
\end{eqnarray}
The state~\eqref{Bell-pair-rainbow-2} is composed of $\frac{N}{2}-1$ maximally entangled qubit pairs ($n=0,1,\ldots,(N-4)/2$), and one ``skewed'' Bell-like pair for qubits $\frac{N}{2}-1-t$ and $ \frac{N}{2}+t$ that is equivalent to the state $\cos\tfrac{\theta}{2} \ket{0 0} + \sin\tfrac{\theta}{2} \ket{1 1}$ up to local unitary transformations. If $t \leq \lfloor N/4 \rfloor$, the $\chi_{\rm exact}$ Schmidt coefficients of the entanglement across the link between the middle qubits are divided into two equal halves: those proportional to $|\cos\tfrac{\theta}{2}|$ and those proportional to $|\sin\tfrac{\theta}{2}|$. If $|\cos\tfrac{\theta}{2}| > |\sin\tfrac{\theta}{2}|$, then the bond dimension compression ${\cal C}_{\chi}$ with $\chi = \frac{1}{2} \chi_{\rm exact}$ truncates the singular values $\propto |\sin\tfrac{\theta}{2}|$ ($\rightarrow 0$) and results in the renormalized state
\begin{eqnarray} \label{Bell-pair-rainbow-broken-2}
\ket{\psi_{t}^{(\chi)}} \equiv {\cal C}_{\chi} \big( \ket{\psi_{t}} \big) & = & (\sqrt{X} T)^t R_Z(\varphi) \ket{+}_{\frac{N}{2}-1-t} \otimes (T \sqrt{X})^t \ket{+}_{\frac{N}{2}+t} \nonumber\\
&& \otimes \bigotimes_{n \neq (N-2)/4} \Big[ (\sqrt{X} T)^t \otimes (T \sqrt{X})^t \tfrac{1}{\sqrt{2}} \left( \ket{0 0} + e^{i \varphi} \ket{1 1} \right) \Big]_{2n-t,2n+1+t}.
\end{eqnarray}
The difference between the states \eqref{Bell-pair-rainbow-2} and \eqref{Bell-pair-rainbow-broken-2} manifests itself in the average value of any observable involving edge qubits in the spin chain. For example, the case of $N=122$ qubits, $t=30$ steps, parameters $\theta = 1.5$ and $\varphi = 2.63$ leads to a drastic discrepancy between the exact and approximate average values for the parity observable $O=Z^{\otimes 122}$:
\begin{eqnarray*}
\bra{\psi_{30}} Z^{\otimes 122} \ket{\psi_{30}} & \approx & 0.997, \\
\bra{\psi_{30}^{(2^{30})}} Z^{\otimes 122} \ket{\psi_{30}^{(2^{30})}} & \approx & 0.016.
\end{eqnarray*}

A similar discrepancy between the exact value and its classical approximation is observed for the 2-local observable $O = Z_{\frac{N}{2}-1-t} Z_{\frac{N}{2}+t}$ and the 2-body correlator $\braket{Z_{\frac{N}{2}-1-t} Z_{\frac{N}{2}+t}} - \braket{Z_{\frac{N}{2}-1-t}} \braket{Z_{\frac{N}{2}+t}}$. We therefore argue that the relative estimation error in universal classical simulations may exhibit a phase transition from $0$ to $1$ (similarly to what is observed in Ref.~\cite{jaschke_is_2023} for systems with the Marchenko-Pastur distribution for the singular values) even if one had an enormous compute power to deal with the exponentially large bond dimension $\chi = \frac{1}{2} \chi_{\rm exact}$ ($7\,981$ years with the supercomputer \texttt{Frontier}, with $\sim 1.2 \times 10^{18}$ FLOPS, to estimate the observable for an MPS with $N=122$ and $\chi = 2^{30}$). In our example, if $\frac{1}{2} \chi_{\rm exact} < \chi < \chi_{\rm exact}$, then $\chi / \chi_{\rm exact}$ is the fraction of the relevant Schmidt coefficients contributing to the correct observable estimation, so the relative estimation error of the conventional MPS simulation is $\sim 1 - \chi / \chi_{\rm exact}$. We illustrate the phase transition in the classical-simulation error in Fig.~\ref{figure-c+q-2} and compare it with noisy quantum computation without error mitigation (estimation error $\sim 1-e^{-\varepsilon N L /2}$), which performs worse. We take into account that in a quantum computation a general two-qubit unitary transformation involves $3$ {\sc cnot} gates~\cite{vatan-2004}, for instance,
\begin{equation*}
\begin{quantikz}[row sep=0.1cm]
& \swap[partial swap={~J~}]{1} & \midstick[2,brackets=none]{=} & \gate{S} & \ctrl{1} & \gate{R_Y(\frac{\pi}{2}+2J)} & \gate{H} & \ctrl{1} & \gate{H} & \gate{R_Y(-\frac{\pi}{2}-2J)} & \ctrl{1} & & \\
&  \targX{} &  & & \targ{} & \gate{R_Z(-\frac{\pi}{2}-2J)} & \gate{H} & \targ{} & \gate{H} & & \targ{} & \gate{S^{\dag}} &
\end{quantikz}
\end{equation*}
and that the circuit depth $L = 3t+1$ in the kicked-Heisenberg dynamics with a 2-locally correlated initial state. 

We expect a significant difference between $\bra{\psi_t} O \ket{\psi_t}$ and $\bra{\psi_t^{(\chi)}} O \ket{\psi_t^{(\chi)}}$ to take place also \textit{beyond} the exactly verifiable regime (i.e., $J \neq \frac{\pi}{4}$) provided the average value $\bra{\psi_t} O \ket{\psi_t}$ is essentially nonzero. The latter condition is not impossible to satisfy for the considered kicked Heisenberg model with external $X$- and $Z$- fields and the parity observable $O=Z^{\otimes N}$ because a single unitary operator $\exp[-iJ(X \otimes X + Y \otimes Y + Z \otimes Z)]$ with $J \neq \frac{\pi}{4}$ is a linear combination of the identity transformation and a {\sc swap}, resulting in a random braiding of qubit trajectories, to which the qubit-uniform observable $O=Z^{\otimes N}$ is insensitive in the absence of external fields or in the case of homogeneous external fields. These are the inhomogeneous local fields that render the quantity $\bra{\psi_t} O \ket{\psi_t}$ unknown beyond the verifiable regime. Since the phases from $\sqrt{X}$- and $T$- gates accumulate and can cancel (as in $J=0$ and a number of steps divisible by $4$), the signal may remain noticeable even in the case of random qubit braiding. Most likely, some customized classical methods can provide a solution to this specific problem; however, another quantum dynamical problem (e.g., with anisotropic Heisenberg interaction or an interaction of another kind) would require yet another \textit{ad hoc} classical approach, whereas quantum computation with noise mitigation provides a universal toolbox for all problems within the class. 

\section{Conclusions}

By analyzing the major contributions to the final estimation error in noise-mitigated quantum computations with three noise-aware techniques in use (PEC, ZNE with probabilistic error amplification, and TEM), we have filled a substantial gap in forecasting noise mitigation performance at large scale. The minimal random error and the optimal sampling overhead in ZNE have remained essentially unknown prior to this work. Similarly, we have revealed the systematic errors due to the noise instability or imprecision in the learned noise parameters that are inherent to all noise-aware mitigation techniques. We have also evaluated technique-specific systematic errors such as the extrapolation error in ZNE and the compression error in TEM. We have proved the optimality of TEM's sampling cost by comparing it against the universal lower bound~\cite{tsubouchi_universal_2023} and argued that TEM with the threshold bond dimension has a similar impact as quantum error correction with code distance $3$, which effectively shifts the average error rate to a higher order, $\varepsilon \rightarrow \varepsilon^2$. The latter fact establishes a connection between error mitigation, relying on additional measurements as a resource, and error correction, relying on additional qubits, thus opening a possibility for a trade-off between these two resources as quantum computing keeps improving. The partial interchangeability of quantum error mitigation and quantum error correction has the potential to accelerate the deployment of large-scale quantum algorithms with practical use. For example, since early error-corrected quantum computations will unavoidably have residual errors due to the inaccessibility of recursive encodings~\cite{bluvstein_logical_2024}, the combination with optimal error mitigation of said residual errors could significantly extend the simulation scale. Also, cross-fertilized noise mitigation could potentially combine advantages of different noise-mitigation techniques in quantum processors with sophisticated topology and complex noise model. 

The derived analytical expressions for the total error $\delta$ in noise-mitigated estimations should be considered as a prediction for the \textit{typical error} one could expect in dense quantum circuits of high complexity; however, the specific errors for different observables will generally be scattered around $\delta$. Nonetheless, our analysis sets clear requirements on quantum computational resources and their quality in order to demonstrate a practical quantum advantage. The advantage can be seen as a faster and preciser estimation of observables in large scale brickwork circuits with the help of a \textit{universal} algorithm based on noisy quantum computing complemented by an error mitigation technique, rather than with the help of a \textit{universal uncustomized} classical simulation algorithm. The stress on universality is dictated by industrial needs, where problem-specific customized solutions are costly. With foreseeable large-scale quantum processors, the overall error in noise-mitigated estimations can be reasonably small ($\delta \lesssim 10\%$) in $100 \times 100$ circuits, where all-purpose classical simulations struggle. Elaborating on a paradigmatic example of the kicked Heisenberg model with external inhomogeneous fields, where the entanglement grows exponentially in the circuit depth, we clarify the failure mechanism of matrix product states, the relative estimation error of which approaches 100\% even if the compressed bond dimension is as large as half of the exact (exponentially scaling) one. This example serves as a verifiable point in the class of the discrete-time Floquet dynamical problems, within which quantum advantage can be potentially demonstrated.

\begin{acknowledgements}
The authors greatly appreciate useful discussions on quantum error mitigation and its applications with Alberto Baiardi, Ram\'{o}n Panad\'{e}s Barrueta, Bruno Bertini, Elsi-Mari Borrelli, Gonzalo Camacho, Marco Cattaneo, Daniel Cavalcanti, Roberto Di Remigio Eik\r{a}s, Shane Dooley, Andrew Eddins, Daniel Egger, Jens Eisert, Satoshi Ejima, Benedikt Fauseweh, Laurin Fischer, Adam Glos, John Goold, Erik Gustafson, Roni Harnik, Susana Huelga, Abhinav Kandala, Nathan Keenan, Stefan Knecht, Vijay Krishna, Doga Kurkcuoglu, Elica Kyoseva, Hank Lamm, Matea Leahy, Kevin Lively, Joonas Malmi, Stefano Mangini, Antonio Mezzacapo, Thomas O'Brien, Fabijan Pavosevic, Gabe Perdue, Matteo Rossi, Max Rossmannek, Boris Sokolov, Francesco Tacchino, Walter Talarico, Ivano Tavernelli, and Zolt\'{a}n Zimbor\'{a}s. The authors thank Daniel Cavalcanti and John Goold for bringing references \cite{takagi_universal_2023,tsubouchi_universal_2023} to their attention.
\end{acknowledgements}

\section*{Competing interests}
Elements of this work are included in patents filed by Algorithmiq Ltd.

\section*{Author constributions}
SNF derived the majority of the results, with some contributions from GGP. SNF and GGP designed and directed the research. SNF ran numerical simulations, prepared the figures, and wrote the first version of the manuscript. All authors contributed to scientific discussions and to the writing of the manuscript.
\appendix

\section{Resource estimates for PEC} \label{appendix-PEC-summary}

Given a finite allocated time ${\rm T}$ for quantum computation, the available number of circuit executions is $M = {\rm T}/(L \tau_{\rm l} + \tau_{\rm m} + \tau_{\rm delay})$, where $L$ is the circuit depth, $\tau_{\rm l}$ is the duration of single-qubit and two-qubit gates in one layer, $\tau_{\rm m}$ is the measurement duration, and $\tau_{\rm delay}$ is the circuit delay between sequential circuit executions ($\tau_{\rm l} \sim 0.6$ $\mu$s, $\tau_{\rm m} \sim 0.8$ $\mu$s, and foreseeable $\tau_{\rm delay} = 0.5$ ms in experiments with superconducting qubits~\cite{kim_evidence_2023}). The classical cost of calculating the mitigated value $\bar{\bar{O}}$ and its standard deviation $\Delta\bar{\bar{O}}$ is negligible if the acquired data are processed on the fly. The total estimation error $\delta$ for a non-local observable $O$ in a dense $N \times L$ circuit is then evaluated through
\begin{equation}
\delta^2 = \left( \frac{\exp (\varepsilon N L)}{\sqrt{{\rm T}/(L \tau_{\rm l} + \tau_{\rm m} + \tau_{\rm delay})} } \right)^2 + \left( \dfrac{1}{2} \varepsilon N \sqrt{L} \Theta \right)^2,
\end{equation}
where $\varepsilon$ is the average error rate in the circuit and $\Theta$ is the standard deviation in the relative error density fluctuations. The time needed for the noise characterization is excluded from this resource analysis, as the number of unique layers can vary significantly from problem to problem (for example, it equals 2 for the Floquet dynamics considered in Sec.~\ref{section-prospects-advantage}). Should the noise be recharacterized during the experiment, this would essentially incur no extra cost in designing new circuits and processing new measurement outcomes because the sampling probabilities and the rescaling factor $\gamma$ are trivially expressed through renewed noise parameters.

\section{Lower bound for the random error in ZNE} \label{appendix-ZNE-random}

Given a collection of positive parameters $\{a_j\}_{j}$, $a_j > 0$, the sum $\sum_j \frac{a_j}{|{\sf S}_j|}$ attains the minimum value under the constraint $\sum_i |{\sf S}_i| = M$ if $|{\sf S}_j|^2 = c a_j$ for all $j$ and some coefficient $c$. This follows from the partial derivatives $\frac{\partial {\cal L}}{\partial |{\sf S}_j|} = 0$ of the Lagrangian function ${\cal L} = \sum_i \frac{a_i}{|{\sf S}_i|} - \lambda \big( \sum_i |{\sf S}_i| - M \big)$. In the case of expression \eqref{ZNE-random-error}, the optimal distribution of shots is $|{\sf S}_j^{\ast}| \propto \left\vert \sum_i G_i (G_i - G_j) \right\vert K^{-G_j}$. 

Numerical investigations suggest that the optimization with several extrapolation points ($R=3,4$) gives the same minimum in~\eqref{ZNE-random-error-min} as that for $R=2$. If $R=2$, then we can readily optimize~\eqref{ZNE-random-error-min} over parameter $G_1$ by minimizing the numerator and maximizing the denominator:
\begin{equation*}
\min_{G_2 > G_1 \geq 1} \frac{G_2 (\frac{1}{K})^{G_1} + G_1 (\frac{1}{K})^{G_2}}{G_2 - G_1} = \min_{G_2 > 1} \frac{G_2 \frac{1}{K} + (\frac{1}{K})^{G_2}}{G_2 - 1},
\end{equation*}

\noindent that is $G_1^{\ast} = 1$. Since the derivative
\begin{equation*}
\frac{d}{d G_2} \left( \frac{G_2 \frac{1}{K} + (\frac{1}{K})^{G_2}}{G_2 - 1} \right) = \frac{(G_2-1)(\frac{1}{K})^{G_2} \ln(\frac{1}{K}) - (\frac{1}{K})^{G_2} - \frac{1}{K} }{(G_2-1)^2}
\end{equation*}
vanishes if $K^{G_2-1} \exp(K^{G_2-1}) = \frac{1}{e}$, the optimal parameter $G_2^{\ast}$ is such that $w \equiv K^{G_2^{\ast}-1}$ is a solution of equation $we^w = \frac{1}{e}$, i.e., $w=W(\frac{1}{e}) \approx 0.278$, where $W(z)$ is the principal branch of the Lambert W function. Finally, $G_2^{\ast} = [1+W(1/e)+\ln (1/K)] / {\ln (1/K)} \approx 1 + \frac{2.557}{\varepsilon N L}$. Substituting $G_1^{\ast}$ and $G_2^{\ast}$ in~\eqref{ZNE-random-error-min}, we get the lower bound~\eqref{ZNE-minimal-random-error} for ZNE's random error.

\section{Lower bound the systematic error in ZNE} \label{appendix-ZNE-systematic}

To derive the lower bound~\eqref{ZNE-systematic-error} for the ZNE systematic error, we simplify the fraction by negating the numerator and exploiting the following sequence of transformations:
\begin{eqnarray*}
    \Big( \sum_i G_i \Big) \Big(\sum_i G_i^3 \Big) - \Big(\sum_i G_i^2\Big)^2 &=& \sum_{ij} (G_i G_j^3 - G_i^2 G_j^2) \\
    &=& \sum_{ij} G_i G_j^2 (G_j - G_i) \\
    &=& \sum_{i<j} (G_i G_j^2 - G_i^2 G_j) (G_j - G_i) \\
    &=& \sum_{i<j} G_i G_j (G_j - G_i)^2 \\
    &\geq & \Big( \min_{G_i \neq G_j} G_i G_j \Big) \sum_{i<j} (G_j - G_i)^2 \\
    &=& G_1 G_2 \sum_{i<j} (G_j - G_i)^2 \\
    &=& G_1 G_2 \, \frac{1}{2} \sum_{ij} (G_j - G_i)^2 \\
    &=& G_1 G_2 \Big[ R \sum_{i} G_i^2 - \Big(\sum_i G_i\Big)^2 \Big].
\end{eqnarray*}

\noindent This means the fraction in the first line of Eq.~\eqref{ZNE-systematic-error} is always negative and its absolute value always exceeds $G_1 G_2$.

\section{Resource estimates for ZNE} \label{appendix-ZNE-summary}

Given a finite allocated time ${\rm T}$ for quantum computation, the available number of circuit executions is $M = {\rm T}/(L \tau_{\rm l} + \tau_{\rm m} + \tau_{\rm delay})$, where $L$ is the circuit depth, $\tau_{\rm l}$ is the duration of single-qubit and two-qubit gates in one layer, $\tau_{\rm m}$ is the measurement duration, and $\tau_{\rm delay}$ is the circuit delay between sequential circuit executions ($\tau_{\rm l} \sim 0.6$ $\mu$s, $\tau_{\rm m} \sim 0.8$ $\mu$s, and foreseeable $\tau_{\rm delay} = 0.5$ ms in experiments with superconducting qubits~\cite{kim_evidence_2023}). To minimize the random error, the total budget of $M$ circuit executions should be distributed in an optimal way among experiments with specific noise gains $\{G_i^{\ast}\}_i$. The classical cost of extrapolation and evaluation of errors via bootstrapping is negligible. The total estimation error $\delta$ for a non-local observable $O$ in a dense $N \times L$ circuit is then evaluated through
\begin{equation}
\delta^2 = \left( \frac{(1+1.795 \varepsilon N L) \exp (\varepsilon N L / 2)}{\sqrt{{\rm T}/(L \tau_{\rm l} + \tau_{\rm m} + \tau_{\rm delay})} } \right)^2 + \left( \dfrac{1}{2} \varepsilon N \sqrt{L} \Theta \right)^2 + \left[ \left( 1 + \dfrac{2.557}{\varepsilon N L} \right) \dfrac{NL\varepsilon^2}{36} \right]^2,
\end{equation}
where $\varepsilon$ is the average error rate in the circuit and $\Theta$ is the standard deviation in the relative error density fluctuations. The time needed for the noise characterization is excluded from this resource analysis, as the number of unique layers can vary significantly from problem to problem (for example, it equals 2 for the Floquet dynamics considered in Sec.~\ref{section-prospects-advantage}). Noise recharacterization during a single ZNE session with several experiments for different noise gains $\{G_i\}_i$ allows at least two ways of data acquisition and processing: (i) collecting smaller batches of measurement shots for different noise gains $\{G_i\}_i$ in between the noise characterizations, extrapolating these batches independently, then averaging the extrapolated values; (ii) averaging noisy observable estimations for each noise gain over different noise characterizations, and then extrapolating them to get the final observable estimation. Both approaches are imperfect: the former one implies much bigger errors in separate exponential extrapolations due to a smaller number of shots per extrapolation; the latter one effectively performs extrapolation for a sum of exponents, not a single exponent, which was shown to lead to a systematic error. To simplify the analysis, we assume the noise instability is minor so that characterizing noise only once suffices.

\section{Compression-induced systematic error in TEM} \label{appendix-TEM-compression}

Consider a quantum computation that starts with the preparation of the initial state $\ket{0}^{\otimes N}$ for $N$ qubits. This state is stabilized by $2^N$ Pauli-string operators $\{G_i\}_{i=0}^{2^N-1}$ acting nontrivially as $Z$ Pauli operator on some, all, or none of the qubits (symbols `1' in the binary representation of $i$ define the qubit positions where $G_i$ acts nontrivially, e.g., $G_{1} = I^{\otimes (N-1)} \otimes Z$ and $G_{2^N-1} = Z^{\otimes N}$). A noiseless quantum circuit implements a unitary transformation $U$ that maps $\ket{0}^{\otimes N}$ to $\ket{\psi} = U\ket{0}^{\otimes N}$. The output state is stabilized by operators $O_i = U G_i U^{\dag}$ meaning $O_i \ket{\psi} = \ket{\psi}$. Circuits composed of Clifford gates (such as {\sc h} and {\sc cnot}) are classically simulable because the stabilizers $O_i$ are just Pauli strings in this case and the state $\ket{\psi}$ is uniquely defined by $N$ generators of the stabilizer group $\{O_i\}_{i=0}^{2^N-1}$. At the end of the noiseless computation, $\braket{O_i} = 1$ for any stabilizer observable. One can also consider $L$ layers of a long Clifford circuit and track the evolution of the depth-dependent stabilizer operator $O_i(L) = U(L) G_i U^{\dag}(L)$ satisfying $\braket{O_i(L)} = 1$. We use this exact expectation value as a benchmark to estimate systematic errors in TEM.

Suppose each unitary-layer transformation ${\cal U}_l[\bullet] = U_l \bullet U_l^{\dag}$ is preceded by an SPL noise layer ${\cal N}_l$, where ${\cal N}_l = \bigcirc_{q} {\cal N}_{lq}$ is a concatenation of one- and two-qubit Pauli channels ${\cal N}_{lq}$ acting nontrivially either at a qubit $q$ or a pair of neighboring qubits $q=\langle q_1,q_2\rangle$. In the expanded form, ${\cal N}_l [\bullet] = \sum_{\boldsymbol{\alpha}} p_{\boldsymbol{\alpha}}^{(l)} \sigma_{\boldsymbol{\alpha}} \bullet \sigma_{\boldsymbol{\alpha}}$, where $\sigma_{\boldsymbol{\alpha}} = \bigotimes_{m=0}^{N-1} \sigma_{\alpha_m}$ and $\boldsymbol{\alpha} = (\alpha_0,\ldots,\alpha_{N-1})$. Importantly, any Pauli string $\sigma_{\boldsymbol{\beta}}$ is an eigenoperator of the SPL noise, ${\cal N}_l[\sigma_{\boldsymbol{\beta}}] = f_{l{\boldsymbol{\beta}}} \sigma_{\boldsymbol{\beta}}$, and the corresponding eigenvalue (fidelity) $f_{l{\boldsymbol{\beta}}}$ is readily obtained from the parameters of the constituent noisy maps $\{{\cal N}_{lq}\}_q$, Eq.~\eqref{Pauli-fidelity} in the main text. In a Clifford circuit with the SPL noise, $L$ layers of noisy evolution transform the initial stabilizer $G_i$ into $c_i(L) O_i(L)$, where $c_i(L) = \prod_{l=1}^L f_{l\boldsymbol{\beta}(i,l)}$ and $\boldsymbol{\beta}(i,l)$ is the index of Pauli string $O_i(l) \equiv \sigma_{\boldsymbol{\beta}(i,l)}$. The noise results in the average value $\braket{O_i(L)}_{\rm noisy} = c_i(L)$ that exponentially decays in the circuit depth and makes it harder for the noise mitigation algorithm to recover the noiseless value $\braket{O_i(L)} = 1$.

TEM provides a collection of noise mitigation maps $\{{\cal M}_l^{\dag (\chi)}\}_{l=1}^L$ for different depths via the iterative procedure ${\cal M}_l^{\dag (\chi)} = {\cal C}_{\chi}\left( {\cal U}_l \circ ({\cal N}_l^{-1})^{\dag} \circ {\cal M}_{l-1}^{\dag (\chi)} \circ {\cal U}_l^{\dag} \right)$ involving the compression ${\cal C}_{\chi}$ down back to $\chi$. For a Clifford circuit of depth $L$, the noise-mitigated estimation of the stabilizer observable is $\braket{O_i(L)}_{\rm n.m.}^{(\chi)} = c_i(L) d_i(\chi,L)$, where $d_i(\chi,L)$ is the diagonal element of the MPO representation for ${\cal M}_L^{(\chi)}$ obtained by setting the input and output indices to $\boldsymbol{\beta}(i,L)$ [such that $O_i(l) \equiv \sigma_{\boldsymbol{\beta}(i,l)}$]. If no compression is utilized, then the calculations are exact and $\braket{O_i(L)}_{\rm n.m.}^{\infty} = 1$. The compression leads to the systematic error $\Delta O_i(\chi,L) = |c_i(L) d_i(\chi,L) - 1|$, which we now analyze. 

In a numerical experiment, to mimic the shape of the experimentally observed distribution~\cite{van_den_berg_probabilistic_2023,kim_evidence_2023} of noise parameters $\lambda_{\boldsymbol{\alpha}}^{(l)} \in {\rm SPL}({\cal N}_l)$, we choose a target error rate $\varepsilon$ and sample one-qubit (two-qubit) relaxation rates from the normal distribution where both the mean and the standard deviation are equal to $\varepsilon/12$ ($\varepsilon/36$). Negative rates are nullified for the maps $\{{\cal N}_l\}$ to be completely positive (in the same fashion as it is done in SPL noise learning~\cite{van_den_berg_probabilistic_2023}). As a result of such sampling, $\gamma_l \approx (1 + \varepsilon)^N$. 

In Fig.~\ref{figure-bd-threshold}, we depict the median $Q_{0.5} \left[ {\Delta O}(\chi,L) \right] \equiv {\rm median}\left[ \{ \Delta O_i(\chi,L) \}_{i=0}^{2^N-1} \right]$ of systematic errors for $10^3$ randomly chosen stabilizer observables in a Clifford circuit with brickwork {\sc cnot} topology and random single-qubit gates, $N=30$ qubits, $L=30$ layers, and error rates per qubit per layer $\varepsilon = 0.022$ and $\varepsilon = 0.002$ in the simulated SPL noise. We also depict $16\%$ and $84\%$ quantiles $Q_{0.16} \left[ {\Delta O}(\chi,L) \right]$ and $Q_{0.84} \left[ {\Delta O}(\chi,L) \right]$ that define the error bars. The first observation is that TEM compensates for errors even with the smallest nontrivial bond dimension $\chi = 2$: for $\varepsilon = 0.022$ and $\varepsilon = 0.002$ the systematic errors are centered around $0.2$ and $0.017$, respectively, in contrast to the unmitigated errors $1 - \exp\big( -\sum_{l=1}^L \sum_{\lambda_{\boldsymbol{\alpha}}^{(l)} \in {\rm SPL}({\cal N}_l)} \lambda_{\boldsymbol{\alpha}}^{(l)} \big) \approx 1.0$ and $0.6$. The reason is that TEM provides a reasonable collection of rescaling coefficients for Pauli strings even with a modest bond dimension. TEM with bond dimension $\chi = 2$ provides the exact correction of the global depolarizing noise~\cite{filippov_scalable_2023}. The second observation is that the systematic error quickly (exponentially) decreases with the increase of TEM bond dimension and exhibits a transition to a much slower decrease (barely visible within the error bars) at about the threshold bond dimension $\chi^{\ast} \approx 0.5 L^2 = 450$. The values of the systematic error after passing the threshold value are in agreement with the coefficient $a=0.6$ in Eq.~\eqref{TEM-systematic-c-coefficient}.

\begin{figure}
\includegraphics[width=0.5\textwidth]{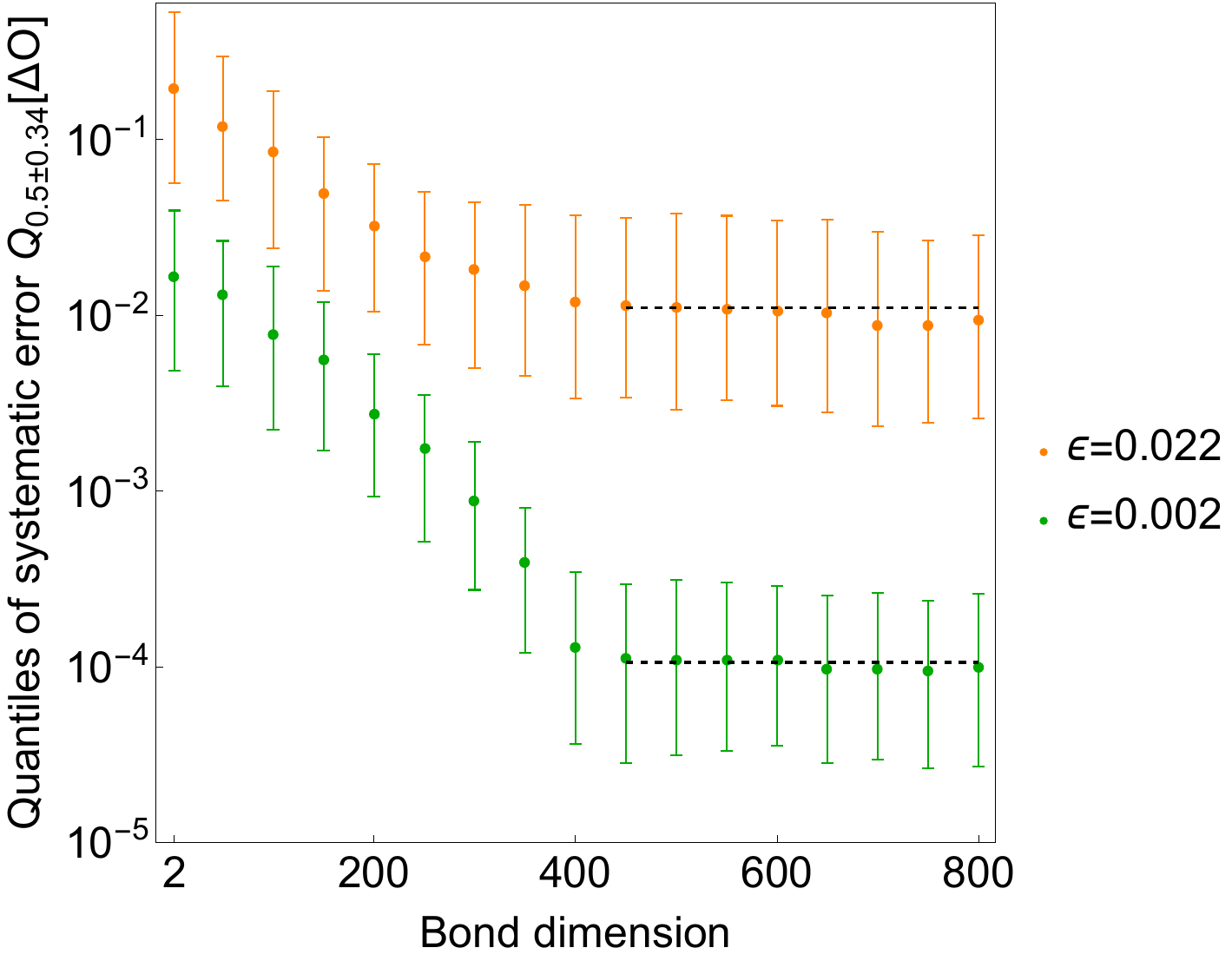}
\caption{\label{figure-bd-threshold} Compression-induced TEM systematic errors in a Clifford circuit with brickwork {\sc cnot} topology and random single-qubit gates, $N=30$ qubits, $L=30$ layers, and different error rates per qubit per layer in the simulated SPL noise. The median error $Q_{0.5}[\Delta O(\chi,L)]$ and the quantiles $Q_{0.16} \left[ {\Delta O}(\chi,L) \right]$ and $Q_{0.84} \left[ {\Delta O}(\chi,L) \right]$ are estimated with $10^3$ random stabilizer observables. Dotted lines depict the quantity $0.6 \sum_{l=1}^L \sum_{\lambda_{\boldsymbol{\alpha}}^{(l)} \in {\rm SPL}({\cal N}_l)} (\lambda_{\boldsymbol{\alpha}}^{(l)})^2$. The leftmost point in dotted lines depicts the approximate threshold bond dimension $\chi^{\ast} \approx 0.5 L^2 = 450$.}
\end{figure}

\section{Threshold bond dimension in TEM} \label{appendix-TEM-threshold-bond-dimension}

In Sec.~\ref{section-tem-systematic-error}, we introduced the functional ${\cal T}_1[\bullet]$ that returns the first-order Taylor polynomial of $\bullet$ with respect to variables $\lambda_{\boldsymbol{\alpha}}^{(l)} \in \cup_{l=1}^L {\rm SPL}({\cal N}_l)$. Upon application to the noise mitigation map ${\cal M}_L^{\dag (\chi)}$ in TEM, this functional provides a simpler tensor network that represents a sum of local propagated noises, ${\cal T}_1[{\cal M}_L^{\dag (\chi)}] = {\rm Id} + \sum_l \sum_q \big[ (\widetilde{\cal N}_l^{[q] \dag})^{-1} - {\rm Id} \big] + \sum_l \sum_{\langle q_1, q_2 \rangle} \big[ (\widetilde{\cal N}_l^{\langle q_1, q_2 \rangle \dag})^{-1} - {\rm Id} \big]$, instead of the product ${\cal M}_L^{\dag (\chi)} = \overleftarrow{\bigcirc}_l \Big( \bigcirc_q (\widetilde{\cal N}_l^{[q] \dag})^{-1} \bigcirc_{\langle q_1, q_2 \rangle} (\widetilde{\cal N}_l^{\langle q_1, q_2 \rangle \dag})^{-1} \Big)$ in the exact Eq.~\eqref{Schrodinger-to-Heisenberg-TME}. 

\begin{figure}
\includegraphics[width=0.5\textwidth]{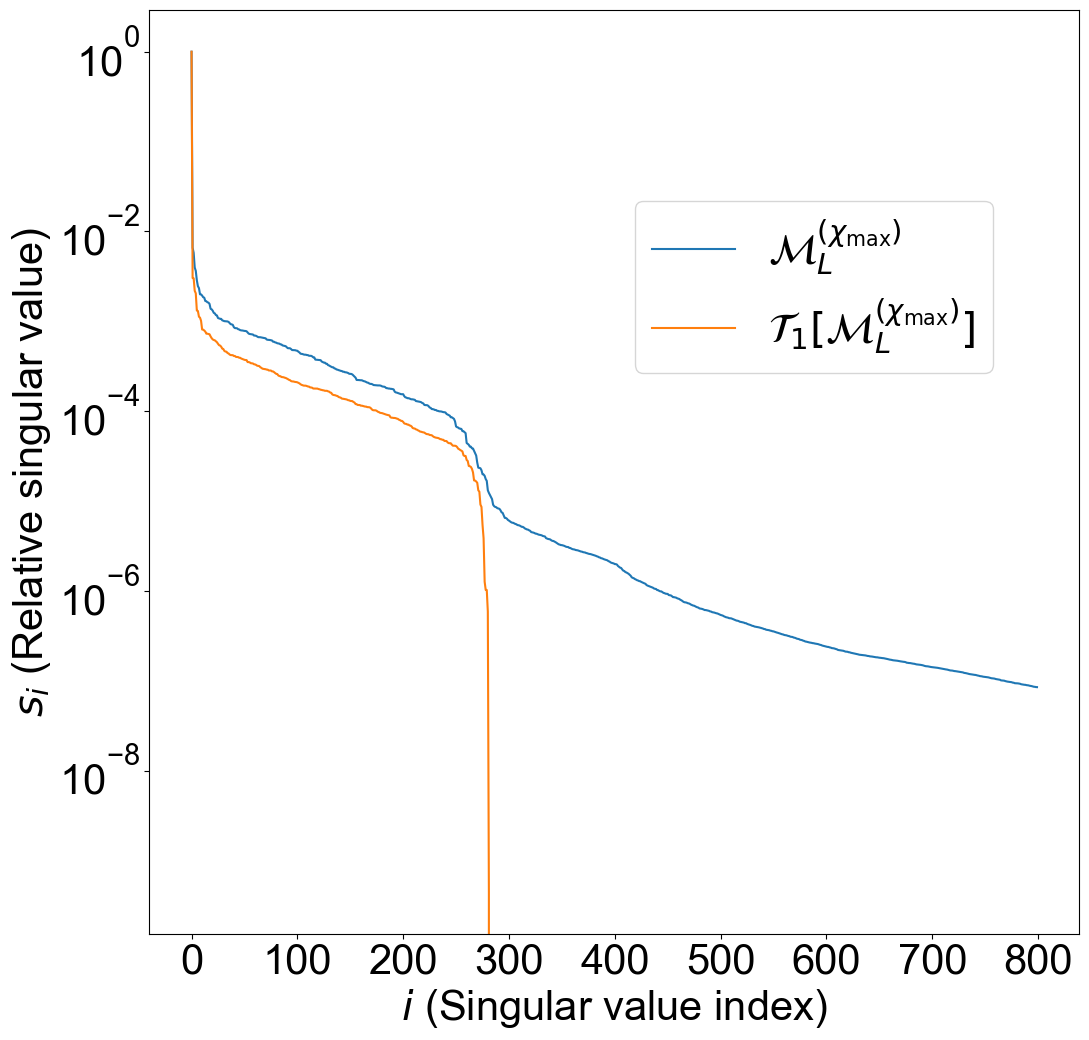}
\caption{\label{figure-bd-threshold-explained} Relative singular values in the central link ($\ell = \lfloor N/2 \rfloor$) of the tensor-network maps ${\cal M}_L^{(\chi_{\max})}$ and its first-order expansion ${\cal T}_1 [{\cal M}_L^{(\chi_{\max})}]$ w.r.t. the relaxation rates in the SPL noise with error rate per qubit per layer $\varepsilon = 0.002$ in a Clifford circuit with brickwork {\sc cnot} topology and random single-qubit gates, $N=30$ qubits, $L=30$ layers. The maximum bond dimension $\chi_{\max} = 800$. Relative singular values in the central link of ${\cal T}_1 [{\cal M}_L^{(\chi_{\max})}]$ vanish for $i > \chi^{\ast}_{\ell = 15} = 282$. The threshold bond dimension $\chi^{\ast} = \max_{\ell} \chi_{\ell}^{\ast} = 415$ is attained at link $\ell^{\ast} = 17$.
}
\end{figure}

Consider the singular values $\{ \mathit{\Sigma}_i^{(\ell)} \}_{i=1}^{\chi}$ in the link $\ell$, for the MPO ${\cal M}_L^{(\chi)}$ and the same link for the MPO ${\cal T}_1 [{\cal M}_L^{(\chi)}]$. The larger the number of nonzero singular values, the more operator entanglement between the two parts of the MPO the link can accommodate. In Fig.~\ref{figure-bd-threshold-explained}, we illustrate the relative singular values $s_i^{(\ell)} = \mathit{\Sigma}_i^{(\ell)} / \mathit{\Sigma}_1^{(\ell)}$ (arranged in the descending order) in the central link ($\ell = \lfloor N/2 \rfloor$) for the example of a Clifford circuit with brickwork {\sc cnot} topology and random single-qubit gates, $N=30$ qubits, $L=30$ layers, and SPL noise with error rate density $\varepsilon = 0.002$. The relative singular values $s_i^{(\ell)}$ of ${\cal T}_1 [{\cal M}_l^{(\chi_{\max})}]$ in the link $\ell$ vanish after the index overcomes some critical index $\chi_{\ell}^{\ast}$. The highest of these critical indices among all the links is the threshold bond dimension $\chi^{\ast} = \max_{\ell} \chi_{\ell}^{\ast}$. With $\chi = \chi^{\ast}$, the map ${\cal T}_1 [{\cal M}_L^{(\chi^{\ast})}] = {\cal T}_1[{\cal M}_L^{(\chi_{\max})}]$ is reproduced exactly, whereas the map ${\cal M}_L^{(\chi^{\ast})} \approx {\cal M}_L^{(\chi_{\max})}$ is just a rather good approximation that fully compensates for all errors linear in the SPL relaxation rates ($\propto \lambda_{\boldsymbol{\alpha}}^{(L)}$) and the residual second-order errors are given by Eq.~\eqref{TEM-systematic-c-coefficient}. Note that ${\cal M}_L^{(\chi^{\ast})} \neq {\cal T}_1 [{\cal M}_L^{(\chi^{\ast})}]$ as it is clearly visible from the difference in the largest $\chi^{\ast}$ singular values in Fig.~\ref{figure-bd-threshold-explained}. Rather, the range $[1,\chi^{\ast}]$ in singular-value indices serves as a proper bandwidth, within which the approximation error decreases much quicker than beyond it. This bandwidth specifies the resources that are worth investing to get a reasonably small systematic error (the best quality/resources ratio).

\begin{figure}
\includegraphics[width=0.5\textwidth]{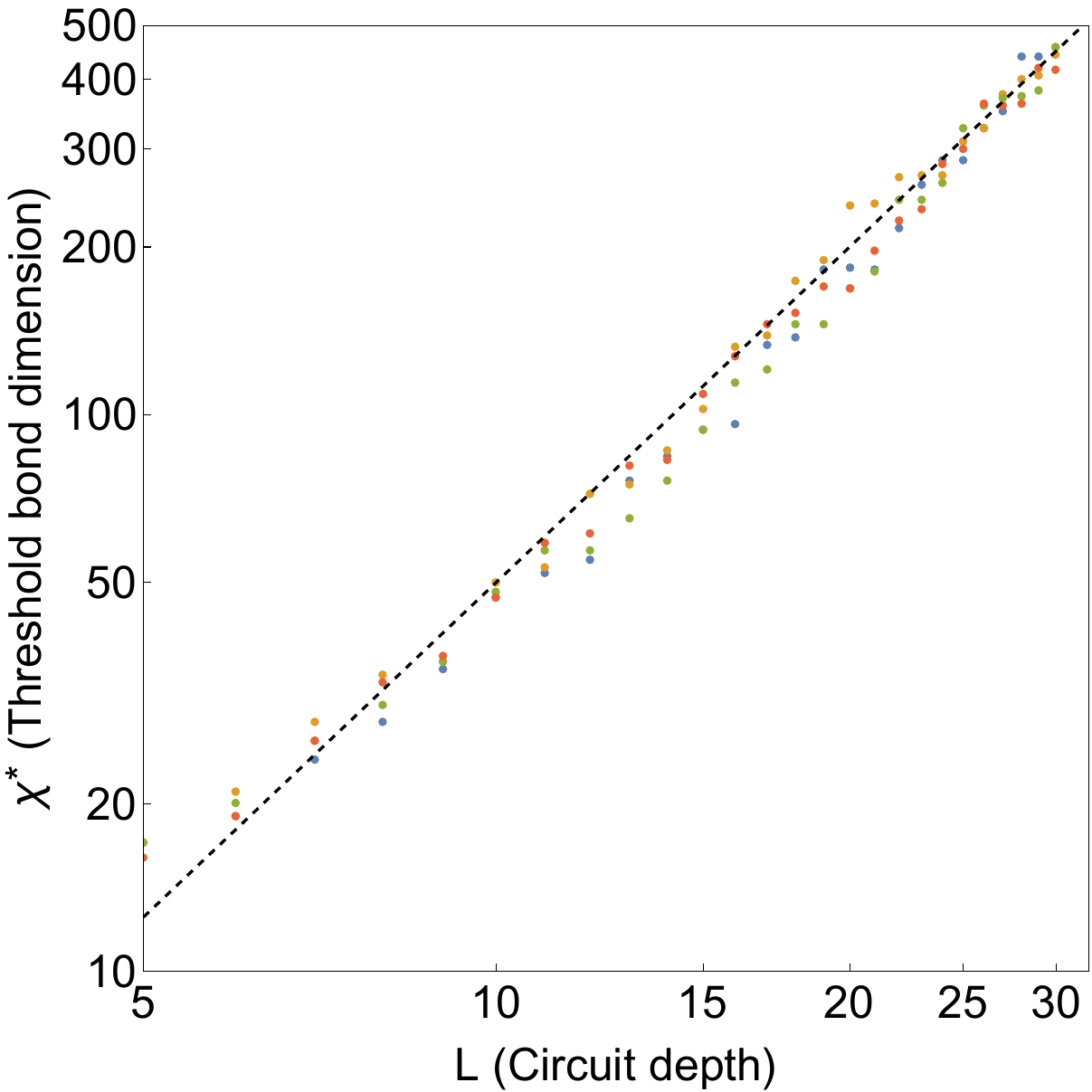}
\caption{\label{figure-bd-threshold-scaling} Threshold bond dimension vs circuit depth for various $30$-qubit noisy Clifford circuits with brickwork {\sc cnot} topology and random single-qubit gates (points); the fitting function $\chi^{\ast}(L) = 0.5 L^2$ (dotted line).
}
\end{figure}

We estimate the scaling of the threshold bond dimension $\chi^{\ast}$ via error propagation. At a depth $L$, the causal cone of any link in the exact MPO ${\cal M}_l$ has $\propto L^2$ local noisy maps ${\cal N}_l^{[q]}$ and ${\cal N}_l^{\langle q_1, q_2 \rangle}$ in its area. $\chi^{\ast}$ is the bond dimension of the first-order polynomial ${\rm Id} + \sum_l \sum_q {\cal U}_{\geq l} \circ ({\cal N}_l^{[q] \dag})^{-1} \circ {\cal U}_{\geq l}^{\dag}  + \sum_l \sum_{\langle q_1, q_2 \rangle} {\cal U}_{\geq l} \circ ({\cal N}_l^{\langle q_1, q_2 \rangle \dag})^{-1} \circ {\cal U}_{\geq l}^{\dag}$ for ${\cal N}_l^{[q]}$ and ${\cal N}_l^{\langle q_1, q_2 \rangle}$ within the area. Since some of the terms remain local after propagation ${\cal U}_{\geq l} \bullet {\cal U}_{\geq l}^{\dag}$ and do not contribute to the bond dimension, we can only conclude that $\chi^{\ast} \propto L^2$ and the proportionality coefficient is to be determined numerically. In Fig.~\ref{figure-bd-threshold-scaling}, we depict the threshold bond dimensions for various noisy Clifford circuits with brickwork {\sc cnot} topology and random single-qubit gates ($N=30$, $L \leq 30$) and conclude that $\chi^{\ast} \approx 0.5 L^2$. If $1 \ll N \ll L$, then one can expect a slower scaling $\chi^{\ast} \propto NL$ because of the causal cone transformation into a stripe shape. The scalings are valid until the bond dimension starts approaching the exact bond dimension $\chi_{\rm exact}$: for a general $N$-qubit MPO $\chi_{\rm exact} = 16^{\lfloor N/2 \rfloor}$, whereas for an $N$-qubit Pauli-diagonal MPO (including the noise-mitigation map in Clifford circuits with the SPL noise) $\chi_{\rm exact} = 4^{\lfloor N/2 \rfloor}$. For the sake of completeness, we present here the proof of the latter statement.

\begin{proposition} \label{proposition-tem-mpo-clifford-bd}
Let ${\cal M}$ be the noise-mitigation map for an $N$-qubit Clifford circuit with an arbitrary Pauli noise. Then ${\cal M}$ has an exact matrix-product-operator representation ${\cal M} = \sum_{a_1,\ldots,a_{N-1}} {\cal A}^{[0]}_{a_1} \otimes {\cal A}^{[1]}_{a_1 a_2} \otimes \cdots \otimes {\cal A}^{[N-2]}_{a_{N-2} a_{N-1}} \otimes {\cal A}^{[N-1]}_{a_{N-1}}$ with dimensions $|\{a_k\}| \leq \min(4^k,4^{N-k})$. The maximum bond dimension does not exceed $4^{\lfloor N/2 \rfloor}$.
\end{proposition}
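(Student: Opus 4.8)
The plan is to exploit the structure that Clifford layers impose on the mitigation map. A generic $N$-qubit MPO carries physical dimension $16$ per site (an input and an output Pauli index), which would only yield the bound $16^{\lfloor N/2\rfloor}$; the improvement to $4^{\lfloor N/2\rfloor}$ rests entirely on the observation that, for a Clifford circuit with Pauli noise, $\mathcal{M}$ is \emph{diagonal} in the Pauli basis, so that each site need carry only a single Pauli index of dimension $4$.

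First I would prove Pauli-diagonality. Any Pauli channel $\mathcal{N}_l$ has every Pauli string as an eigenoperator, $\mathcal{N}_l[\sigma_{\boldsymbol\beta}]=f_{l\boldsymbol\beta}\,\sigma_{\boldsymbol\beta}$ [cf.\ Eq.~\eqref{Pauli-fidelity}], and a Clifford conjugation sends each Pauli string to a signed Pauli string, $\mathcal{U}_{\geq l}^{\dag}[\sigma_{\boldsymbol\beta}]=s\,\sigma_{\boldsymbol\gamma}$ with $s\in\{\pm1\}$. Hence $\widetilde{\mathcal{N}}_l[\sigma_{\boldsymbol\beta}]=\mathcal{U}_{\geq l}\circ\mathcal{N}_l\circ\mathcal{U}_{\geq l}^{\dag}[\sigma_{\boldsymbol\beta}]=s^2 f_{l\boldsymbol\gamma}\,\sigma_{\boldsymbol\beta}=f_{l\boldsymbol\gamma}\,\sigma_{\boldsymbol\beta}$, the sign cancelling because the Clifford and its inverse act from opposite sides. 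Since $f_{l\boldsymbol\gamma}\neq 0$ for invertible noise, the inverse $\widetilde{\mathcal{N}}_l^{-1}$, its dual, and the ordered composition defining $\mathcal{M}$ are all diagonal in the Pauli basis (products and reciprocals of diagonal eigenvalues), so $\mathcal{M}[\sigma_{\boldsymbol\beta}]=m_{\boldsymbol\beta}\,\sigma_{\boldsymbol\beta}$ for some real coefficients $m_{\boldsymbol\beta}$.

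Second I would recast the MPO factorization as an MPS factorization of the coefficient array. Because $\mathcal{M}$ is Pauli-diagonal, it is completely specified by the tensor $m_{\boldsymbol\beta}=m_{\beta_0\cdots\beta_{N-1}}$ with $\beta_q\in\{0,1,2,3\}$; each local tensor $\mathcal{A}^{[k]}$ then needs only the single index $\beta_k$ of dimension $4$, rather than a separate input and output index. Viewing $m$ as a vector in $(\mathbb{C}^4)^{\otimes N}$, its MPO decomposition coincides with an MPS decomposition of $m$, and the bond dimension $|\{a_k\}|$ across the cut separating sites $\{0,\ldots,k-1\}$ from $\{k,\ldots,N-1\}$ equals the Schmidt rank of $m$ across that bipartition.

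Finally I would apply the standard rank bound: reshaping $m$ into a $4^k\times 4^{N-k}$ matrix, its rank---equivalently the exact bond dimension generated by sequential singular-value decompositions---is at most $\min(4^k,4^{N-k})$, which is the first claim, and its maximum over $k$ at the central cut gives $4^{\lfloor N/2\rfloor}$. I expect the only substantive step to be the first one, verifying that Clifford conjugation preserves Pauli-diagonality with the signs cancelling and that invertibility keeps the inverse diagonal; the reduction to an MPS Schmidt-rank argument and the rank bound itself are routine.
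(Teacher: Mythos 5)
Your proof is correct and follows essentially the same route as the paper's: establish that the mitigation map is a Pauli(-diagonal) map, reduce the MPO factorization to an MPS factorization of a physical-dimension-$4$ coefficient tensor, and invoke the standard Schmidt-rank bound $\min(4^k,4^{N-k})$. The only cosmetic difference is that the paper proves the Pauli-map property by induction in the conjugation form $\sum_{\boldsymbol{\varphi}} r_{\boldsymbol{\varphi}}\,\sigma_{\boldsymbol{\varphi}}\bullet\sigma_{\boldsymbol{\varphi}}$ and builds the MPS from the quasi-probability tensor $r_{\boldsymbol{\varphi}}$, whereas you work directly with the diagonal eigenvalues $m_{\boldsymbol{\beta}}$ of the Pauli transfer matrix; the two tensors are related by local invertible transforms, so the bond dimensions coincide.
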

\begin{proof}
A layer of Pauli noise has the form ${\cal N}_l [\bullet] = \sum_{\boldsymbol{\alpha}} p_{\boldsymbol{\alpha}}^{(l)} \sigma_{\boldsymbol{\alpha}} \bullet \sigma_{\boldsymbol{\alpha}}$, where $\{p_{\boldsymbol{\alpha}}^{(l)}\}_{\boldsymbol{\alpha}}$ is a probability distribution, $\sigma_{\boldsymbol{\alpha}} = \bigotimes_{m=0}^{N-1} \sigma_{\alpha_m}$, and $\boldsymbol{\alpha} = (\alpha_0,\ldots,\alpha_{N-1})$. The inverse map is a Pauli map ${\cal N}_l^{-1} [\bullet] = \sum_{\boldsymbol{\alpha}} q_{\boldsymbol{\alpha}}^{(l)} \sigma_{\boldsymbol{\alpha}} \bullet \sigma_{\boldsymbol{\alpha}}$, where $\{q_{\boldsymbol{\alpha}}^{(l)}\}_{\boldsymbol{\alpha}}$ is a real quasi-probability distribution containing negative elements. Since the initial map ${\cal M}_{0} = {\rm Id}$ is the identity transformation, TEM for the first layer yields ${\cal M}_1 = {\cal N}_1^{-1}$, i.e., ${\cal M}_1$ is a Pauli map. 

Let us prove by induction that ${\cal M}_l$ is a Pauli map for all $l$. Assume that ${\cal M}_{l-1}$ is a Pauli map for some $l$, i.e., ${\cal M}_{l-1} = \sum_{\boldsymbol{\beta}} r_{\boldsymbol{\beta}}^{(l-1)} \sigma_{\boldsymbol{\beta}} \bullet \sigma_{\boldsymbol{\beta}}$, then
\begin{eqnarray}
{\cal M}_{l}[\bullet] &=& ({\cal N}_l^{-1} \circ {\cal U}_l^{\dag} \circ {\cal M}_{l-1} \circ {\cal U}_l) [\bullet] \nonumber\\
&=&  \sum_{\boldsymbol{\alpha},\boldsymbol{\beta}} q_{\boldsymbol{\alpha}}^{(l)} r_{\boldsymbol{\beta}}^{(l-1)} \sigma_{\boldsymbol{\alpha}} U_l^{\dag} \sigma_{\boldsymbol{\beta}} U_l \bullet U_l^{\dag} \sigma_{\boldsymbol{\beta}} U_l \sigma_{\boldsymbol{\alpha}} \nonumber\\
&=&  \sum_{\boldsymbol{\alpha},\boldsymbol{\beta}} q_{\boldsymbol{\alpha}}^{(l)} r_{\boldsymbol{\beta}}^{(l-1)} \sigma_{\boldsymbol{\varphi}(\boldsymbol{\alpha},\boldsymbol{\beta},l)} \bullet \sigma_{\boldsymbol{\varphi}(\boldsymbol{\alpha},\boldsymbol{\beta},l)},
\end{eqnarray}
where the Pauli string $\sigma_{\boldsymbol{\varphi}(\boldsymbol{\alpha},\boldsymbol{\beta},l)}$ is a product of Pauli strings $\sigma_{\boldsymbol{\alpha}}$ and $U_l^{\dag} \sigma_{\boldsymbol{\beta}} U_l$. The latter is a Pauli string because $U_l$ is a Clifford unitary layer. Therefore, ${\cal M}_l = \sum_{\boldsymbol{\varphi}} r_{\boldsymbol{\varphi}}^{(l)} \sigma_{\boldsymbol{\varphi}} \bullet \sigma_{\boldsymbol{\varphi}}$ is a Pauli map for all $l$. 

The matrix product representation of the unitary transformation $\sigma_{\boldsymbol{\varphi}} \bullet \sigma_{\boldsymbol{\varphi}}$ has trivial bond dimension $1$ because $\sigma_{\boldsymbol{\varphi}} = \bigotimes_{m=0}^{N-1} \sigma_{\varphi_m}$. For this reason, the MPO for ${\cal M}_l$ follows from the matrix-product-state (MPS) representation for the tensor $r_{\boldsymbol{\varphi}}^{(l)} \equiv r_{\varphi_0,\ldots,\varphi_{N-1}}^{(l)}$ with the formal physical dimension $|\{\varphi_m\}| = 4$ (the number of single-qubit Pauli operators including the identity operator). Any $N$-order tensor $r_{\varphi_0,\ldots,\varphi_{N-1}}^{(l)}$ with physical dimension $d$ in each leg is known to have an exact MPS representation $r_{\varphi_0,\ldots,\varphi_{N-1}}^{(l)} = \sum_{a_1,\ldots,a_{N-1}} (r^{[0]})_{a_1}^{\varphi_0} \otimes (r^{[1]})_{a_1 a_2}^{\varphi_1} \otimes \cdots \otimes (r^{[N-2]})_{a_{N-2} a_{N-1}}^{\varphi_{N-2}} \otimes (r^{[N-1]})_{a_{N-1}}^{\varphi_{N-1}}$ with dimensions $|\{a_k\}| \leq \min(d^k,d^{N-k})$ \cite{schollwock-2011}. Substituting $d=4$ concludes the proof.
\end{proof}

\section{Systematic error in TEM below the threshold bond dimension} \label{appendix-TEM-systematic}

For bond dimensions $\chi < \chi^{\ast}$ the TEM systematic error can be evaluated based on the truncated singular values. Figure \ref{figure-sv-12q} illustrates typical spectra of relative singular values in the central MPO link of ${\cal M}_L^{(\chi_{\rm exact})}$, where $\chi_{\rm exact} = 4096$ for a Clifford circuit with the SPL noise according to Proposition~\ref{proposition-tem-mpo-clifford-bd}. Extensive numerical experiments suggest that the relative singular values decay exponentially in between the two horizontal lines, the upper one being the regularized Euclidean norm of the relaxation rates $\Lambda_2 \equiv \Big[ \frac{1}{L} \sum_{l=1}^L \sum_{\lambda_{\boldsymbol{\alpha}}^{(l)} \in {\rm SPL}({\cal N}_l)} (\lambda_{\boldsymbol{\alpha}}^{(l)})^2 \Big]^{1/2} \approx \sqrt{N/2} \, \varepsilon/3$ and the lower one being the median of all the relaxation rates $\Lambda_1 \equiv Q_{0.5}\Big( \cup_{l=1}^L \cup_{\lambda_{\boldsymbol{\alpha}}^{(l)} \in {\rm SPL}({\cal N}_l)} \lambda_{\boldsymbol{\alpha}}^{(l)} \Big) \approx \varepsilon / 24$ if one-qubit (two-qubit) relaxation rates are normally distributed with mean and standard deviation $\varepsilon/12$ ($\varepsilon/36$). The singular values experience a steep decrease as soon as they cross the value $\Lambda_1$, which is a manifestation of the critical bond dimension $\chi_{\ell = \lfloor N/2 \rfloor }^{\ast}$ for the central link. Therefore, for bond dimensions $\chi < \chi^{\ast}$ we can approximate the relative singular values as
\begin{equation} \label{sv-approximate}
    s_i \approx \left( \frac{\Lambda_1}{\Lambda_2} \right)^{i / \chi^{\ast}} \Lambda_2 \approx \frac{\varepsilon}{3} \sqrt{\frac{N}{2}} \left( \frac{1}{4\sqrt{2N}} \right)^{2i/L^2}.
\end{equation}

\begin{figure}
\includegraphics[width=0.6\textwidth]{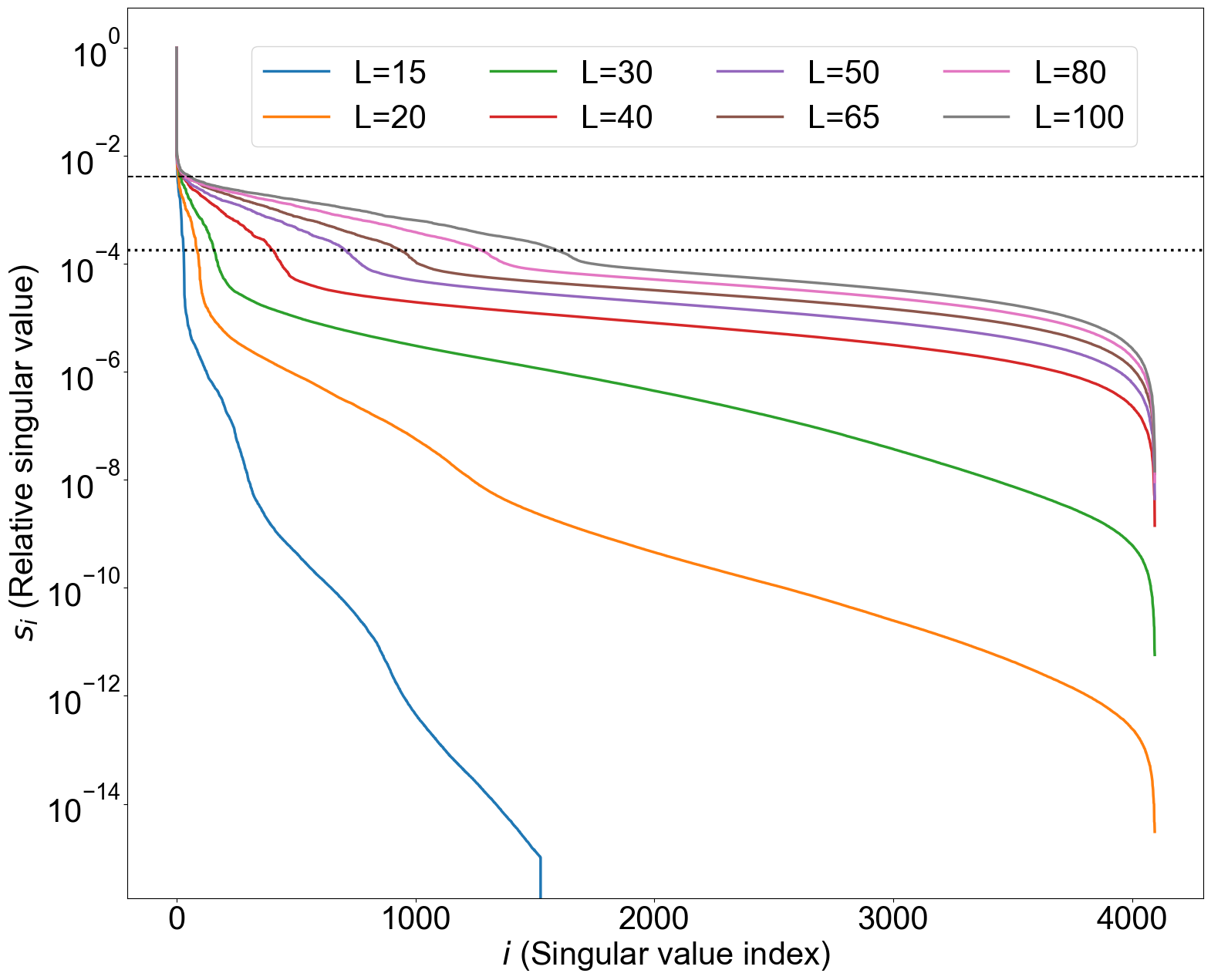}
\caption{\label{figure-sv-12q} Relative singular values in the central link of the exact noise-mitigation map ${\cal M}_L^{(4096)}$ for a $12$-qubit Clifford circuit with brickwork {\sc cnot} topology and random single-qubit gates and the SPL noise with the error rate per qubit per layer $\varepsilon = 0.005$. The decay of singular values is exponential in between the upper horizontal line $\Lambda_2 \equiv \Big[ \frac{1}{L} \sum_{l=1}^L \sum_{\lambda_{\boldsymbol{\alpha}}^{(l)} \in {\rm SPL}({\cal N}_l)} (\lambda_{\boldsymbol{\alpha}}^{(l)})^2 \Big]^{1/2}$ and the lower horizontal line $\Lambda_1 \equiv Q_{0.5}\Big( \cup_{l=1}^L \cup_{\lambda_{\boldsymbol{\alpha}}^{(l)} \in {\rm SPL}({\cal N}_l)} \lambda_{\boldsymbol{\alpha}}^{(l)} \Big)$. 
}
\end{figure}

Suppose we truncate singular values $i > \chi$ for some fixed bond dimension $\chi \in (1,\chi^{\ast})$. The truncated singular values are translated into the compression error, but the rigorous upper bound $\Delta O \leq \|O\|_2 \sqrt{2 \sum_{\rm bonds} \sum_{i = \chi + 1}^{\chi_{\rm exact}} \mathit{\Sigma}_i^2}$ turns out to be too loose to be used in practice \cite{filippov_scalable_2023}. This is because of the exponential growth of both the 2-norm $\|O\|_2$ and the unnormalized singular values $\mathit{\Sigma}_i$. We aim at finding a reasonable estimate of the systematic error $\Delta O$ in terms of the disregarded singular values via numerical analysis. Figure \ref{figure-bd-300} depicts the median systematic error $\Delta O$ originating from the repeated truncation of the smallest singular value while constructing the noise mitigation map ${\cal M}^{(\chi)}_L$ as well as one-, two-, and infinite norms of the vector ${\bf s}_{i > \chi}^{(l)} \equiv \{ s_{i}^{(l)} \}_{i=\chi+1}^{\chi_{\rm exact}}$ for each layer $l \in [1,L]$. The 2-norm $\|{\bf s}_{i > \chi}^{(l)}\|_2$ is in very good agreement with the observed median error, whereas $\|{\bf s}_{i > \chi}^{(l)}\|_1$ significantly overestimates and $\|{\bf s}_{i > \chi}^{(l)}\|_\infty$ significantly underestimates the error. The 2-norm $\| {\bf s}_{i > \chi}^{(1:L)} \|_2$ of the depth-extension vector ${\bf s}_{i > \chi}^{(1:L)} \equiv \Big\{ \{ s_{i}^{(l)} \}_{i=\chi+1}^{\chi_{\rm exact}} \Big\}_{l=1}^{L}$ has a tendency to overestimate $Q_{0.5}[\Delta O(\chi,L)]$. Using the 2-norm $\|{\bf s}_{i > \chi}^{(l)}\|_2$ as an additional error on top of the threshold-bond-dimension error $\Delta O(\chi^{\ast},L) \approx 0.6 \sum_{l=1}^L \sum_{\lambda_{\boldsymbol{\alpha}}^{(l)} \in {\rm SPL}({\cal N}_l)} (\lambda_{\boldsymbol{\alpha}}^{(l)})^2 \approx NL\varepsilon^2/30$ and the approximate expression~\eqref{sv-approximate} for the relative singular values, we obtain the evaluation of TEM's systematic error below the threshold:
\begin{eqnarray}
\Delta O(\chi,L) & \approx & \sqrt{\Delta O(\chi^{\ast},L)^2 + \frac{ \Lambda_1^{2\chi/\chi^{\ast}} \Lambda_2^{2(\chi^{\ast} - \chi)/\chi^{\ast}} - \Lambda_1^2 }{\left( \Lambda_2 / \Lambda_1 \right)^{2/\chi^{\ast}} - 1} } \nonumber\\
& \approx & \sqrt{\Delta O(\chi^{\ast},L)^2 + \frac{ \Lambda_1^{2\chi/\chi^{\ast}} \Lambda_2^{2(\chi^{\ast} - \chi)/\chi^{\ast}} - \Lambda_1^2 }{2 \ln (\Lambda_2 / \Lambda_1) /\chi^{\ast}} } \nonumber\\
& \approx & \sqrt{\left( \frac{NL\varepsilon^2}{30} \right)^2 + \frac{ \varepsilon^2 L^2 }{ 72 \ln(4\sqrt{2N}) } \left[ N \Big( \frac{1}{32N} \Big)^{2\chi / L^2} - \frac{1}{32} \right]}. \label{error-analytic-evaluation}
\end{eqnarray}

\begin{figure}
\includegraphics[width=0.75\textwidth]{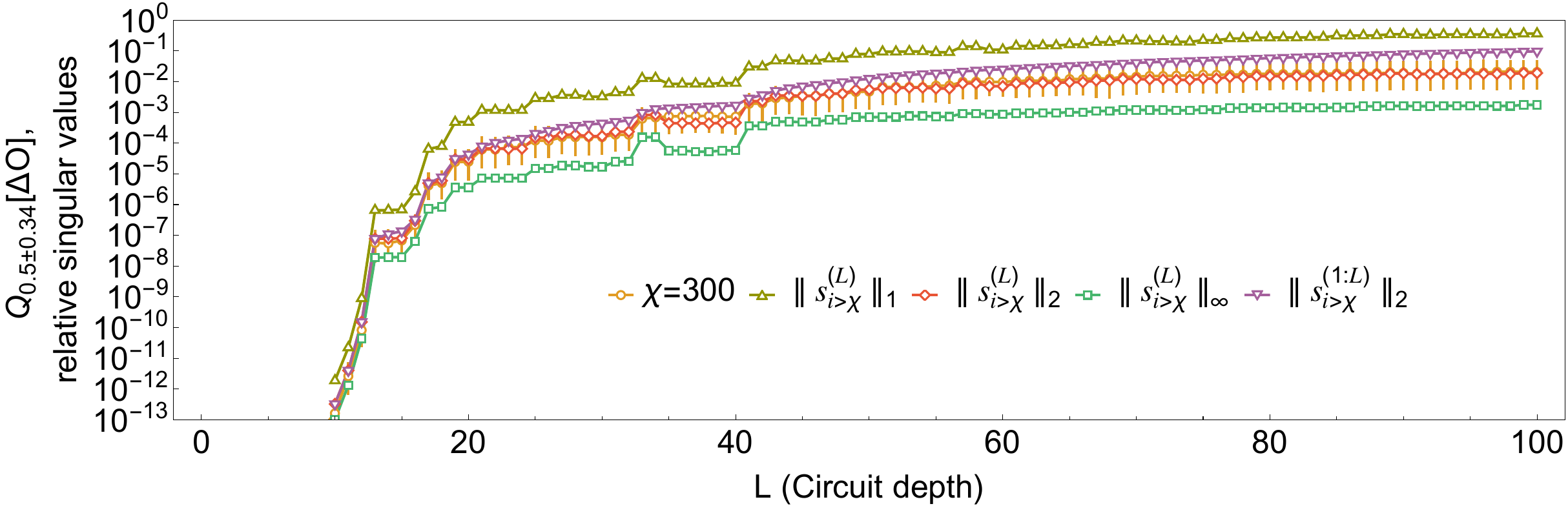}
\caption{\label{figure-bd-300} Median systematic error with error bars $Q_{0.5 \pm 0.34}[\Delta O(\chi,L)]$ in estimating noise-mitigated stabilizer observables via TEM with bond dimension $\chi = 300$ for the 10-qubit example. The systematic error is well reproduced by $\|{\bf s}_{i > \chi}^{(L)}\|_2$ and approximated from above by $\|{s}_{i > \chi}^{(1:L)}\|_2$.}
\end{figure}

\begin{figure}
\includegraphics[width=0.5\textwidth]{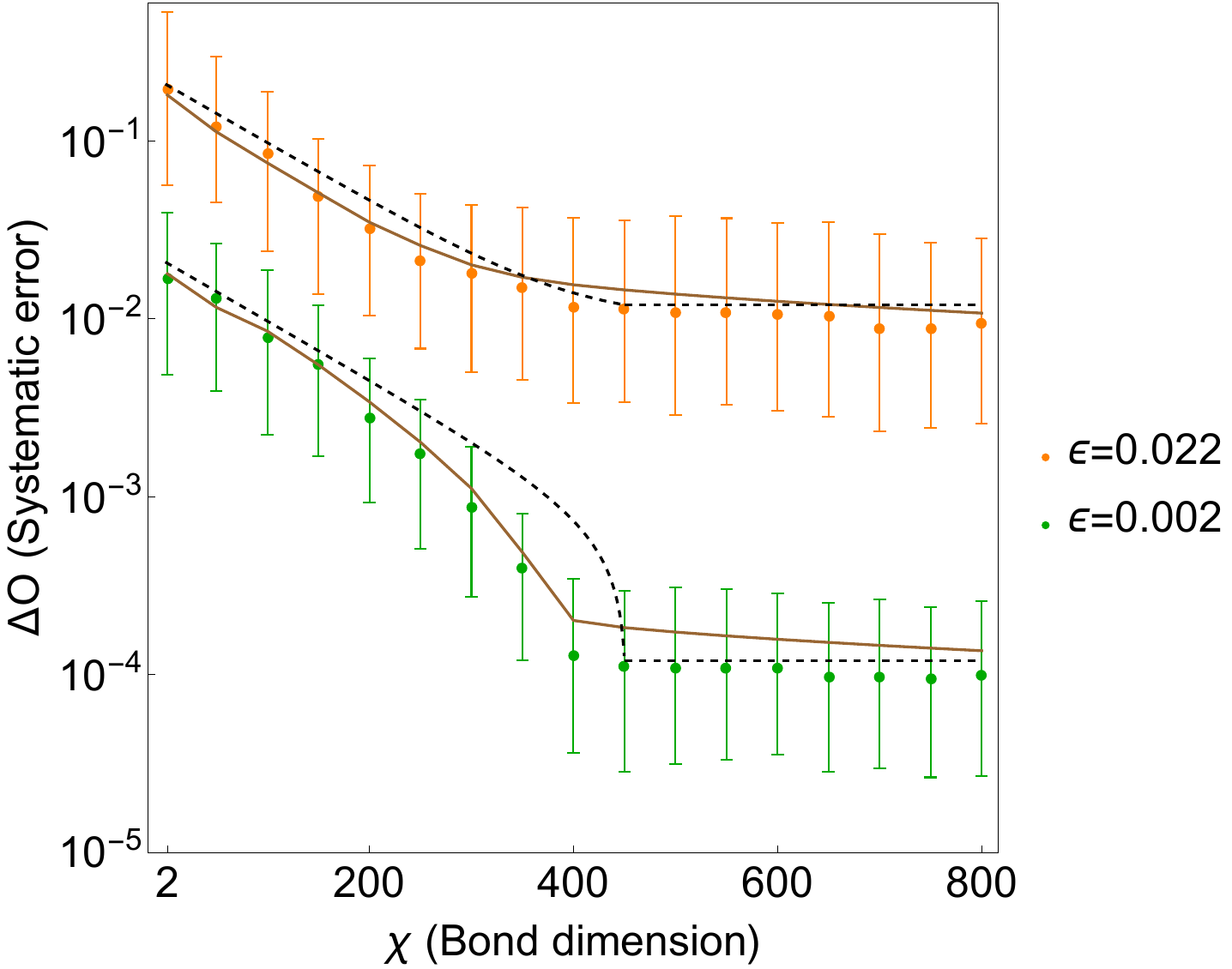}
\caption{\label{figure-error-30x30} Compression-induced TEM systematic error $Q_{0.5 \pm 0.34}[\Delta O(\chi,L)]$ as in Fig.~\ref{figure-bd-threshold} and its approximate evaluations: the data-driven quantifier ${\cal E}(\chi, L)$ of compression quality in Eq.~\eqref{error-estimate} (solid line); the analytical formula~\eqref{error-analytic-evaluation} based on the pattern of singular values in the noise-mitigation MPO (dotted line).}
\end{figure}

Targeting $100 \times 100$ circuits, we depict in Fig.~\ref{figure-scaling-100x100} how the compression-induced systematic error in TEM scales in the bond dimension. The analytical expression~\eqref{error-analytic-evaluation} overestimates the median error for $10^3$ random stabilizer observables but the slope is in agreement with the numerical experiment. To get the error of several percent one would need to use bond dimension $\chi \sim 10^3$ for the technologically achievable density of errors $\varepsilon = 0.0016$~\cite{mckay_benchmarking_2023}.

\begin{figure}
\includegraphics[width=0.55\textwidth]{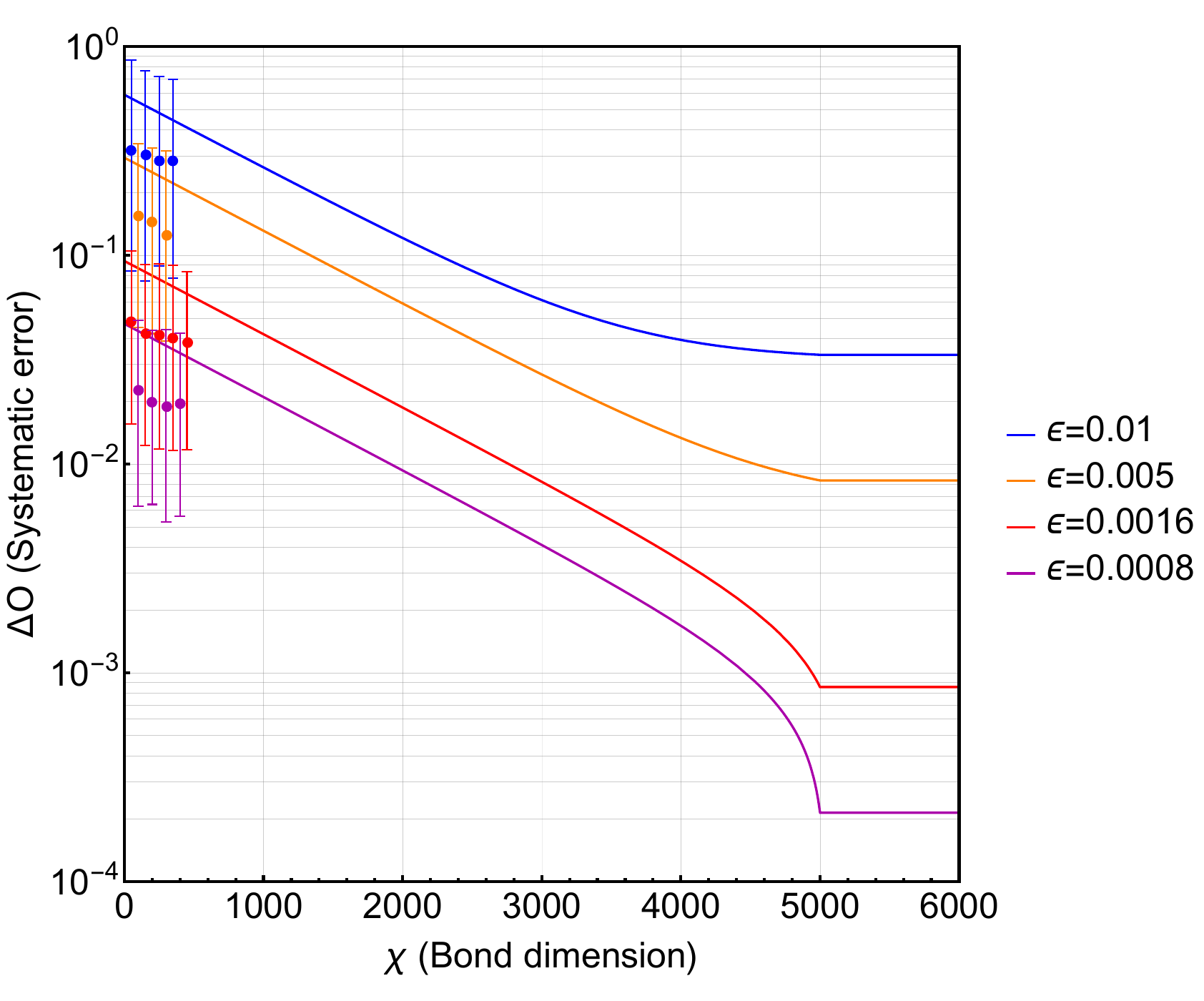}
\caption{\label{figure-scaling-100x100} Compression-induced TEM systematic error $Q_{0.5 \pm 0.34}[\Delta O(\chi,L)]$ (points) and its analytical evaluation~\eqref{error-analytic-evaluation} (lines) for high Pauli-weight observables in a dense stabilizer circuit of depth $L = 100$ and width $N = 100$ qubits (4950 {\sc cnot} gates) for different values of the error per layer per qubit. }
\end{figure}

\section{Data-driven estimation of TEM systematic error}

Since $\chi_{\rm exact} = 4^{\lfloor N/2 \rfloor}$ grows exponentially in the number of qubits $N$, there is no feasible way to calculate the whole spectrum of singular values if $N \gtrsim 15$. Therefore, the error in approximating ${\cal M}_L^{\dag}$ by ${\cal M}_L^{\dag (\chi)}$ should be inferred by another method. Luckily, the solution comes from the very construction process of the approximation ${\cal M}_L^{\dag (\chi)}$ via iterations ${\cal M}_l^{\dag (\chi)} = {\cal C}_{\chi} \big( {\cal U}_l \circ ({\cal N}_l^{-1})^{\dag} \circ {\cal M}_{l-1}^{\dag (\chi)} \circ {\cal U}_l^{\dag} \big)$, $l=1,\ldots,L$ with ${\cal M}_{0} = {\rm Id}$. The difference between the maps ${\cal U}_l \circ ({\cal N}_l^{-1})^{\dag} \circ {\cal M}_{l-1}^{\dag (\chi)} \circ {\cal U}_l^{\dag}$ and ${\cal M}_l^{\dag (\chi)}$ is due to the compression ${\cal C}_{\chi}$. As both maps have MPO representations, their Frobenius norms (i.e., 2-norms) and the overlap $\braket{{\cal M}_l^{\dag (\chi)} | {\cal U}_l \circ ({\cal N}_l^{-1})^{\dag} \circ {\cal M}_{l-1}^{\dag (\chi)} \circ {\cal U}_l^{\dag} }$ are relatively easy to compute at each iteration [the complexity grows as $O(N\chi^2)$], so that the distance $\|{\cal M}_l^{\dag (\chi)} - {\cal U}_l \circ ({\cal N}_l^{-1})^{\dag} \circ {\cal M}_{l-1}^{\dag (\chi)} \circ {\cal U}_l^{\dag} \|_2$ is readily available. The relative distance $\|{\cal M}_l^{\dag (\chi)} - {\cal U}_l \circ ({\cal N}_l^{-1})^{\dag} \circ {\cal M}_{l-1}^{\dag (\chi)} \circ {\cal U}_l^{\dag} \|_2 / \|{\cal M}_l^{\dag (\chi)} \|_2$ approximates the 2-norm of the truncated relative singular values of the map ${\cal U}_l \circ ({\cal N}_l^{-1})^{\dag} \circ {\cal M}_{l-1}^{\dag (\chi)} \circ {\cal U}_l^{\dag}$ at iteration $l$, and in this sense it approximates $\|{\bf s}_{i > \chi}^{(l)}\|_2$ (as if ${\cal U}_l \circ ({\cal N}_l^{-1})^{\dag} \circ {\cal M}_{l-1}^{\dag (\chi)} \circ {\cal U}_l^{\dag}$ were an exact map ${\cal M}_l^{\dag (\chi_{\rm exact})}$). Therefore, we can approximate the selected error quantifier $\| {s}_{i > \chi}^{(1:L)} \|_2 \approx {\cal E}(\chi,L)$ while constructing the noise-mitigation tensor network via
\begin{equation} \label{error-estimate}
    {\cal E}(\chi,L) = \sqrt{ \sum_{l=1}^L \frac{\|{\cal M}_l^{\dag (\chi)} - {\cal U}_l \circ ({\cal N}_l^{-1})^{\dag} \circ {\cal M}_{l-1}^{\dag (\chi)} \circ {\cal U}_l^{\dag} \|_2^2}{\|{\cal M}_l^{\dag (\chi)} \|_2^2} }.
\end{equation}
The benefit of Eq.~\eqref{error-estimate} is that ${\cal E}(\chi,L)$ can be calculated for any compression possible (not only for the SVD compression but also for the variational one, for example, where the truncated singular values are not available in principle). Figure \ref{figure-error-30x30} illustrates a good agreement between the median systematic error $Q_{0.5}[\Delta O(\chi,L)]$ in estimating stabilizer observables and the error estimate ${\cal E}(\chi,L)$ inferred during the TEM map construction for different choices of the bond dimension $\chi$. 

\section{Resource estimates for TEM} \label{appendix-TEM-summary}

Given a finite allocated time ${\rm T}$ for a quantum computation, the available number of circuit executions is $M = {\rm T}/(L \tau_{\rm l} + \tau_{\rm m} + \tau_{\rm delay})$, where $L$ is the circuit depth, $\tau_{\rm l}$ is the duration of single-qubit and two-qubit gates in one layer, $\tau_{\rm m}$ is the measurement duration, and $\tau_{\rm delay}$ is the circuit delay between sequential circuit executions ($\tau_{\rm l} \sim 0.6$ $\mu$s, $\tau_{\rm m} \sim 0.8$ $\mu$s, and foreseeable $\tau_{\rm delay} = 0.5$ ms in experiments with superconducting qubits~\cite{kim_evidence_2023}). As in the resource analysis for ZNE, we can assume that one noise characterization suffices if the noise instability is minor. If this is the case, then roughly the same time ${\rm T}$ is allowed for classical contraction of the TEM map. Should the noise be recharacterized $n_{\rm rec}$ times during the allocated period ${\rm T}$ for quantum computation, the time bins for the classical contraction of the noise-specific TEM maps would become shorter and last roughly ${\rm T} / n_{\rm rec}$. Since the tensor-network compression cost scales cubically in the bond dimension, the affordable bond dimension $\chi$ in TEM is evaluated through the equation $\chi^3 = c_{\rm b} {\rm P}{\rm T} / ( n_{\rm rec} N L)$, where ${\rm P}$ quantifies the available classical computational power in floating point operations per second (FLOPS), $c_{\rm b} = \chi_{\rm test}^3 N_{\rm test} L_{\rm test} / ({\rm P}_{\rm test} {\rm T}_{\rm test})$ is the proportionality coefficient inferred from the benchmark contractions on classical computer of similar architecture with power ${\rm P}_{\rm test}$ that take time ${\rm T}_{\rm test}$ for a test bond dimension $\chi_{\rm test}$. Using a conventional SVD-compression algorithm without any optimization, we get the estimate $c_{\rm b}^{-1} = 3 \times 10^5$ floating point operations. Once the TEM map is built and contracted with the observable, the classical cost of summing over measurement shots is negligible. The total estimation error $\delta$ for a non-local observable $O$ in a dense $N \times L$ circuit is then evaluated through
\begin{eqnarray*}
\delta^2 &=& \left( \frac{\exp (\varepsilon N L / 2)}{\sqrt{{\rm T}/(L \tau_{\rm l} + \tau_{\rm m} + \tau_{\rm delay})} } \right)^2 + \left( \dfrac{1}{2} \varepsilon N \sqrt{L} \Theta \right)^2 \nonumber\\
&& + \left\{ \begin{array}{ll}
   \left( \frac{NL\varepsilon^2}{30} \right)^2 & \text{if~} \sqrt[3]{c_{\rm b} {\rm P} {\rm T} / ( n_{\rm rec} N L) } \geq \frac{L^2}{2}, \\
   \left( \frac{NL\varepsilon^2}{30} \right)^2 + \frac{ \varepsilon^2 L^2 }{ 72 \ln(4\sqrt{2N}) } \left[ N \big( \frac{1}{32N} \big)^{2 \sqrt[3]{c_{\rm b} {\rm P} {\rm T} / ( n_{\rm rec} N L)} / L^2} - \frac{1}{32} \right]  & \text{otherwise},
\end{array} \right.
\end{eqnarray*}
where $\varepsilon$ is the average error rate in the circuit and $\Theta$ is the standard deviation in the relative error density fluctuations. The time needed for the noise characterization is excluded from this resource analysis as the number of unique layers can vary significantly from problem to problem (for example, it equals 2 for the Floquet dynamics considered in Sec.~\ref{section-prospects-advantage}).

\section{Cardinality of the anticommutant set of a Pauli string} \label{appendix-anticommutativity}

Every nontrivial $N$-qubit Pauli string $\sigma_{\boldsymbol{\alpha}}$ [with $\boldsymbol{\alpha} = (\alpha_0, \ldots, \alpha_{N-1}) \neq \boldsymbol{0}$] anticommutes with exactly $4^N/2$ of all $4^N$ $N$-qubit Pauli strings.
To show this, let us construct a bijection between the set ${\cal C}_{\boldsymbol{\alpha}}$ of strings that commute with $\sigma_{\boldsymbol{\alpha}}$ and the set ${\cal A}_{\boldsymbol{\alpha}}$ of strings that anticommute with it, implying an equal amount of both ($|{\cal C}_{\boldsymbol{\alpha}}| = |{\cal A}_{\boldsymbol{\alpha}}| = 4^N/2$).

Let $q$ be a qubit number for which $\alpha_q \neq 0$, and define the $4$-component vector $\boldsymbol{\Omega} = \Big( 0, \alpha_q, \alpha_q \mod 3 + 1, (\alpha_q + 1) \mod 3 + 1 \Big)$ that is a specific ordered permutation of $(0,1,2,3)$.
$\sigma_{\Omega_{i}}$ commutes (anticommutes) with $\sigma_{\alpha_q}$ for $i=0, 1$ ($i=2, 3$). The permutation $\pi: \{0, \ldots, 3 \} \rightarrow \{0, \ldots, 3 \}$ is defined implicitly by $\Omega_{\pi(i)} = i$. A bijection ${\cal C}_{\boldsymbol{\alpha}} \ni \sigma_{\boldsymbol{\beta}}  \leftrightarrow \sigma_{\mathcal{M}(\boldsymbol{\beta})} \in {\cal A}_{\boldsymbol{\alpha}}$ is established through the bijective map 
\begin{equation*}
    \mathcal{M}(\boldsymbol{\beta}) = (\beta_0, \ldots, \beta_{q-1}, \Omega_{(\pi(\beta_q) + 2) \mod 4}, \beta_{q + 1}, \ldots, \beta_{N - 1}).
\end{equation*}
In words, given $\beta_q$, the mapping associates with it a new value by shifting its position in vector $\boldsymbol{\Omega}$, $\pi(\beta_q)$, two positions.
Consequently, if $\sigma_{\beta_{q}}$ commutes (anticommutes) with $\sigma_{\alpha_q}$, $\sigma_{\Omega_{(\pi(\beta_q) + 2) \mod 4}}$ anticommutes (commutes) with it.
Since the (anti)commutativity of two Pauli strings depends on the parity of the number of qubit-wise (anti)commutation relations between them, if $\{ \sigma_{\boldsymbol{\beta}}, \sigma_{\boldsymbol{\alpha}} \} = 0 \Rightarrow \{ \sigma_{\mathcal{M}(\boldsymbol{\beta})}, \sigma_{\boldsymbol{\alpha}} \} \neq 0$, and vice-versa.
Moreover, since applying the mapping a second time returns the original string ($\sigma_{\mathcal{M}(\mathcal{M}(\boldsymbol{\beta}))} = \sigma_{\boldsymbol{\beta}}$), the procedure defines a bijection.

\bibliography{bibliography}

\end{document}